\DeclareMathOperator*{\argmax}{arg\,max}
\theoremstyle{plain} 
\newtheorem{theorem}{Theorem}
\newtheorem{corollary}{Corollary}
\newtheorem{definition}{Definition}
\newtheorem{lemma}{Lemma}
\newtheorem{proposition}{Proposition}
\theoremstyle{definition} \newtheorem{remark}{Remark}
\theoremstyle{definition} \newtheorem{example}{Example}
\newcommand{\blue}{\color{black}}
\newcommand{\nblue}{\color{black}}
\title{An Operational Approach to Information Leakage via Generalized Gain Functions}
\author{}
\begin{document}
\author{Gowtham R. Kurri, \IEEEmembership{Member,~IEEE}, Lalitha Sankar, \IEEEmembership{Senior Member,~IEEE}, Oliver Kosut, \IEEEmembership{Senior Member,~IEEE}\thanks{This article was presented in part at the 2021 International Symposium on Information Theory (ISIT) and 2022 ISIT. This work is supported in part by NSF grants CIF-1901243, CIF-1815361, CIF-2007688, CIF-2134256, CIF-2031799, and CIF-2312666.

{\blue Gowtham R. Kurri was with the School of Electrical, Computer and Energy Engineering at Arizona State University. He is now with the Signal Processing and Communications Research Centre at International Institute of Information Technology, Hyderabad, Telangana - 500032, India (e-mail: {gowtham.kurri@iiit.ac.in}).}

Lalitha Sankar and Oliver Kosut are with the School of Electrical, Computer and Energy Engineering, Arizona State University, Tempe, AZ 85281 USA (email: lalithasankar@asu.edu; okosut@asu.edu). 
}}
\maketitle
\begin{abstract}
    We introduce a \emph{gain function} viewpoint of information leakage by proposing \emph{maximal $g$-leakage}, a rich class of operationally meaningful leakage measures that subsumes recently introduced leakage measures {\blue{---}} {maximal leakage} and {maximal $\alpha$-leakage}. In maximal $g$-leakage, the gain of an adversary in guessing an unknown random variable is measured using a {gain function} applied to the probability of correctly guessing. In particular, maximal $g$-leakage captures the multiplicative increase, upon observing $Y$, in the expected gain of an adversary in guessing a randomized function of $X$, maximized over all such randomized functions. {\blue{We also consider the scenario where an adversary can make multiple attempts to guess the randomized function of interest. We show that maximal leakage is an upper bound on maximal $g$-leakage under multiple guesses, for any non-negative gain function $g$. We obtain a closed-form expression for maximal $g$-leakage under multiple guesses for a class of concave gain functions. We also study maximal $g$-leakage measure for a specific class of gain functions related to the $\alpha$-loss, that interpolates log-loss ($\alpha=1$) and (soft) 0-1 loss ($\alpha=\infty$). In particular, we first completely characterize the minimal expected $\alpha$-loss under multiple guesses and analyze how the corresponding leakage measure is affected with the number of guesses. We show that a new measure of divergence that belongs to the class of Bregman divergences captures the relative performance of an arbitrary adversarial strategy with respect to an optimal strategy in minimizing the expected $\alpha$-loss. Finally, we study two variants of maximal $g$-leakage depending on the type of adversary and obtain closed-form expressions for them, which do not depend on the particular gain function considered as long as it satisfies some mild regularity conditions. We do this by developing a variational characterization for the R\'{e}nyi divergence of order infinity which naturally generalizes the definition of pointwise maximal leakage to incorporate arbitrary gain functions.}} 
\end{abstract}

\begin{IEEEkeywords}

Privacy leakage, maximal leakage, gain function, Sibson mutual information, R\'{e}nyi divergence, multiple guesses 

\end{IEEEkeywords}
\section{Introduction}
A fundamental question in many privacy and secrecy problems is --- how much information does a random variable $Y$ that represents observed data by an adversary leak about a correlated random variable $X$ that represents sensitive data? For example, $X$ may be a secret that must be kept confidential and the observation $Y$ could be an inevitable consequence of a system design, for instance, in exchange for certain services. A number of approaches to quantify such information leakage have been proposed in both computer science~\cite{Dwork06,smith2009foundations,braun2009quantitative,kasiviswanathan2011can,Alvimetal12,duchi2013local,Alvimetal14} and information theory~\cite{MerhavA99,CalmonF12,SchielerC14,IssaW15,Asoodeh15,Issaetal,Rassouli20,Liaoetal,saeidian2022pointwise}.

Leakage measures with an associated operational meaning are of specific interest in the literature since an upper bound on such a leakage measure allows the system designer to ensure certain guarantees on the system. Recently, Issa~\emph{et al.}~\cite{Issaetal} proposed one such leakage measure in the \emph{guessing} framework. In particular, Issa \emph{et al.}~\cite{Issaetal} consider an adversary interested in a (possibly randomized) function of $X$ and study the logarithm of the multiplicative increase, upon observing $Y$, of the probability of correctly guessing a  randomized function of $X$, say, $U$. Moreover, this quantity is maximized over all the random variables $U$ such that $U-X-Y$ forms a Markov chain capturing the scenario that the function of interest $U$ is unknown to the system designer. The resulting quantity is referred to as \emph{maximal leakage} (MaxL) and an upper bound on it limits the amount of information leakage of any arbitrary randomized function of $X$ through $Y$. Liao~\emph{et al.}~\cite{Liaoetal} later generalized MaxL to a family of leakages, \emph{maximal $\alpha$-leakage} (Max-$\alpha$L), for $\alpha\in(1,\infty)$, that allows tuning the measure to specific applications. In particular, similar to MaxL, Max-$\alpha$L quantifies the maximal logarithmic increase in a monotonically increasing power function (dependent on $\alpha$) applied to the probability of correctly guessing. We remark that Max-$\alpha$L provides an operational interpretation to mutual information (for $\alpha=1$) in the context of privacy leakage, which was an open problem earlier~\cite{Issa2017,Issaetal}. Saeidian~\emph{et al.}~\cite{saeidian2022pointwise} introduced a variant of maximal leakage, called \emph{pointwise maximal leakage}, capturing the amount of information leaked about $X$ due to disclosing a single outcome $Y=y$ rather than focusing on the \emph{average outcome} as in maximal leakage. These operationally motivated leakage measures find applications in many areas such as in privacy-utility trade-offs~\cite{Liaoetal}, private information retrieval~\cite{QianZTL22}, hypothesis testing~\cite{LiaoSCT17}, source coding~\cite{Liuetal21}, membership inference~\cite{SaedianCOS21}, and age of information~\cite{NityaYSM22}.

We extend the aforementioned line of work by focusing on arbitrary \emph{gain functions} $g:[0,1]\rightarrow [0,\infty)$ applied to the probability of correctness (see Fig.~\ref{Figure-gain}). In particular, we define \emph{maximal $g$-leakage} as the maximal logarithmic increase in the gain (applied to the probability of correctness) of an adversary and study its properties for arbitrary $g$. {\blue{We also consider \emph{maximal $g$-leakage under multiple guesses} where the adversary is allowed to make multiple guesses.}} Further, we introduce and study variants of maximal $g$-leakage depending on the type of adversary by developing a new variational characterization for R\'{e}nyi divergence of order $\infty$~\cite{renyi1961measures}. A variational characterization for a divergence transforms its definition into an optimization problem. Variational characterizations for R\'{e}nyi divergences of order $\alpha\in\mathbb{R}\setminus\{0\}$~\cite{renyi1961measures} are studied in the literature~\cite{Shayevitz11,VanH14,Sason16,Anantharam,BirrellDKRW21}. We also study information leakage when the adversary is allowed to make multiple attempts to guess the randomized function of interest by focusing on a specific gain function related to the $\alpha$-loss~\cite{arimoto1971information,LiaoKS20,SypherdDSK19,Liaoetal} interpolating log-loss ($\alpha=1$) and (soft) $0$-$1$ loss ($\alpha=\infty$). 
\begin{figure*}[htbp]
    \centering
    \includegraphics[scale=1.2]{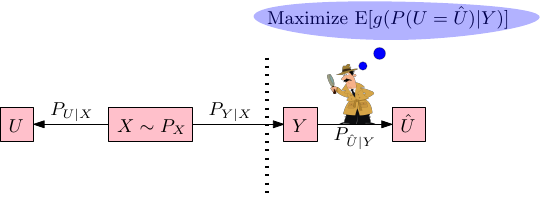}
    \caption{A gain function viewpoint of leakage. An adversary observes $Y$ and wants to maximize, on an average, the gain function applied to the probability of correctly guessing a randomized function of $X$, denoted by $U$.}
    \label{Figure-gain}
\end{figure*}

\subsection{Related Work}\label{subsection:related-work}
Most of the approaches in the literature start with a particular metric developed in other contexts and study the properties that follow from the definition. It is difficult to associate an operational meaning to such a leakage measures. In fact, such approaches can label evidently insecure systems as secure (see~\cite[Section~3.6]{Issa2017}). An alternative approach is via a specific threat model of a \emph{guessing} adversary giving an operationally meaningful interpretation to information leakage~\cite{smith2009foundations,braun2009quantitative,Alvimetal12,Issaetal,Liaoetal,saeidian2022pointwise}.  Smith~\cite{smith2009foundations} defines \emph{min-entropy leakage} as the logarithm of the multiplicative increase, upon observing $Y$, of the probability of correctly guessing $X$. Braun \emph{et al.}~\cite{braun2009quantitative} consider the maximization of min-entropy leakage over all prior distributions on $X$ resulting in a leakage measure (later shown to be equal to {maximal leakage}~\cite{Issaetal}) known as \emph{min-capacity} that is dependent only on the conditional distribution (or channel) $P_{Y|X}$. Alvim~\emph{et al.}~\cite{Alvimetal12,Alvimetal14} defined \emph{$g$-leakage} and \emph{$g$-capacity} by introducing a gain function $g:\mathcal{X}\times\hat{\mathcal{X}}\rightarrow \mathbb{R}$ instead of looking at the probability of correctly guessing $X$. Alvim~\emph{et al.}~\cite{Alvimetal12} showed that maximizing $g$-capacity over all gain functions $g$ yields maximal leakage. {\blue{Note that the threat model in all these works focuses on guessing $X$ itself rather than a (potentially) randomized function $U$ of $X$. Issa \emph{et al.}~\cite{Issaetal} considered maximal gain leakage based on the gain functions and $g$-leakage introduced by Alvim~\emph{et al.}~\cite{Alvimetal12}, where the adversary is interested in a (potentially) randomized function of $X$ and maximum is taken over all gain functions $g:\mathcal{U}\times\hat{\mathcal{U}}\rightarrow [0,1]$. Similar to Alvim~\emph{et~al.}~\cite{Alvimetal12}, they showed that maximal gain leakage is equal to maximal leakage. As we show later, our definition of gain function where a real-valued function is applied to the probability of correctly guessing can be seen as a special case of the gain function definition of Alvim~\emph{et al.}~\cite{Alvimetal12,Alvimetal14}. However, the information leakage measures we study in this work differ substantially from that of the works~\cite{Alvimetal12,Alvimetal14,Issaetal}. In particular, Alvim~\emph{et al.}~\cite{Alvimetal12,Alvimetal14} and Issa \emph{et al.}~\cite{Issaetal} consider leakage measures with a maximum taken over all the gain functions while the problem of computing leakage measures with specific gain functions (other than the identity gain function that corresponds to maximal leakage) remains open and is conjectured to be challenging~\cite[Section~VI.A]{Alvimetal14}. In this work, we study leakage measures with specific gain functions and obtain their closed-form expressions. We present more details on the distinction from Alvim~\emph{et al.}~\cite{Alvimetal12,Alvimetal14} and Issa \emph{et~al.}~\cite{Issaetal} in Section~\ref{section:maximalg-leakage} (in particular, see Remark~\ref{distinction-alvimissa}}).}

Instead of multiplicative increase, the notion of additive increase has also been considered in quantifying privacy leakage. For example, an additive version of $g$-leakage is studied by Alvim~\emph{et al.}~\cite{Alvimetal14}. Also, the notion of \emph{semantic security} in cryptography defines `advantage' as the additive increase, upon observing the encrypted message, of the probability of correctly guessing the value of the function. Calmon \emph{et al.}~\cite{Calmonetal13} and Li and El Gamal~\cite{Lig18} use maximal correlation as a measure of information leakage. Mutual information has been used as a privacy measure in many works, see e.g., \cite{Shannon49,PrabhakaranR07,TyagiNG11,CalmonF12,SankarRP13,Asoodeh15,Wangetal16,Kameletal19}. Asoodeh~\emph{et~al.}~\cite{Asoodehetal17} use the probability of correctly guessing as a privacy measure. Li~\emph{et al.}~\cite{LiOG19} use hypothesis test performance of an adversary to measure leakage. Total variation distance is used as privacy measure by Rassouli and G\"{u}nd\"{u}z~\cite{Rassouli20}. 

Another line of work in privacy leakage is based on indistinguishability, i.e., whether an adversary can distinguish between two items of interest. Differential privacy~\cite{Dwork06}, proposed in the context of querying databases, ensures that all databases that differ only in one entry produce an output to a query with \emph{almost} equal probabilities. Similar to differential privacy and pointwise maximal leakage, Calmon and Fawaz~\cite{CalmionF12}, Issa~\emph{et al.}~\cite{Issaetal}, and Jiang~\emph{et~al.}\cite{JiangSTL21} study privacy leakage providing worst-case guarantees, note that Issa~\emph{et al.}~\cite{Issaetal} study an average case leakage measure (maximal leakage) also. An equivalent notion of differential privacy using (conditional) mutual information is studied in \cite{CuffY16}. The notion of differential privacy is known to be very strict and has limited applicability~\cite{McSherry10,ComasD13}. Approximate differential privacy~\cite{dwork2006our} and R\'{e}nyi differential privacy\cite{mironov2017renyi} are proposed as relaxations of differential privacy to allow data releases with higher utility. For an extensive list of leakage measures see the surveys by Wagner and Eckhoff~\cite{wagner2018technical}, Bloch~\emph{et~al.}~\cite{Blochsurvey}, and Hsu~\emph{et al.}~\cite{hsuetal21}.

\subsection{Main Contributions}
The main contributions of this paper are as follows:
\begin{itemize}
    \item {\blue{We show that maximal leakage is an upper bound on maximal $g$-leakage under $k\geq 1$ guesses for any arbitrary non-negative gain function $g$ (Proposition~\ref{prop:upperboundg}). We obtain closed-form expression for maximal $g$-leakage under $k\geq 1$ guesses for a class of gain functions $g$. Specifically, when $g$ is a non-negative concave function with a finite non-zero derivative at 0, we show that maximal $g$-leakage under $k\geq 1$ guesses is equal to the Sibson mutual information of order infinity~\cite{sibson1969information}, i.e., the maximal leakage (Theorem~\ref{thm:maximalg-leakage-concaveg}). {\nblue In addition, we obtain a closed-form expression for maximal $g$-leakage when $g(t)=1+t$, for binary $X$ and $Y$ (Theorem~\ref{thm:shiftedmaxL}). An interesting aspect of this expression is that it is fundamentally different in structure from that of maximal leakage and maximal $\alpha$-leakage.}}} 
    \item We then focus on maximal $g$-leakage under multiple guesses for a specific class of gain functions related to the $\alpha$-loss, $\alpha\in(0,\infty)$~\cite{arimoto1971information,LiaoKS20,SypherdDSK19,Liaoetal}, which interpolates log-loss ($\alpha=1$) and $0$-$1$ loss ($\alpha=\infty$). We define two leakage measures, the $\alpha$-leakage and the maximal $\alpha$-leakage under multiple guesses. We show that $\alpha$-leakage does not change with the number of guesses for a class of probability distributions (Theorem~\ref{thm:robustness-of-alphaleakage})~\cite{KurriKS21}. We prove that maximal $\alpha$-leakage under multiple guesses is at least that of with a single guess (Theorem~\ref{thm:lowerbound-maxalphaleakage-kguesses}).
    \item To prove these, we completely characterize the minimal expected $\alpha$-loss under $k$ guesses (Theorem~\ref{thm:minloss-alpha-kguesses})~\cite{KurriKS21}, thereby recovering the known results for $\alpha=\infty$~\cite{Issaetal}. We illustrate an optimal guessing strategy of the adversary through Examples~\ref{Example1} and \ref{Example2}. To the best of our knowledge, such a result even for log-loss under multiples guesses was not explored earlier. We derive a technique for transforming the optimization problem over probability simplex associated with multiple random variables to that of with a single random variable using tools drawn from duality in linear programming, which may be of independent interest. 
    \item We introduce and study two variants of maximal $g$-leakage, namely, opportunistic maximal $g$-leakage (when the adversary could choose the function of interest depending on the realization of $Y$) and maximal realizable $g$-leakage (when the adversary is interested in \emph{maximum} guessing performance over all the realizations of $Y$ instead of \emph{average} performance). We obtain closed-form expressions for these leakage measures in terms of Sibson mutual information of order $\infty$ and R\'{e}nyi divergence of order $\infty$, respectively (Corollary~\ref{thm:alphaleakage-variants}). 
    \item We do this by devising a new variational characterization for R\'{e}nyi divergence of order $\infty$ expressed in terms of the ratio of maximal expected gains in guessing a randomized function of $X$, which may be of independent interest (Theorem~\ref{theorem:main-varchar})~\cite{KurriKS22}. An important aspect of our characterization is that it remains agnostic to the particular gain function considered as long as it satisfies some mild regularity conditions. Even though the variational characterizations mentioned earlier are presented for any finite order $\alpha$, one can obtain such characterizations for $\infty$-R\'{e}nyi divergence by applying a limiting argument (see the discussion above Proposition~\ref{proposition}). Our characterization differs from those characterizations in view of its connection to guessing, and more importantly because of its robustness to the gain function. We also show that our variational characterization naturally extends the notion of pointwise maximal leakage~\cite{saeidian2022pointwise} to incorporate arbitrary gain functions under mild regularity assumptions retaining the same closed-form expression (Corollary~\ref{thm:ptwisegleakage}). 
\end{itemize}

\subsection{Organization of the Paper}
The remainder of this paper is organized as follows. We introduce the gain function viewpoint and review maximal leakage and maximal $\alpha$-leakage in Section~\ref{section:preliminaries}. {\blue{In Section~\ref{section:maximalg-leakage}, we present our results on maximal $g$-leakage under multiple guesses. In Section~\ref{section:multiple-guesses}, we present our results on maximal $g$-leakage under multiple guesses by focusing on a class of specific gain functions related to the $\alpha$-loss~\cite{arimoto1971information,LiaoKS20,SypherdDSK19}, parameterized by $\alpha\in(0,1)\cup(1,\infty)$.}} In Section~\ref{section:variational}, we develop a variational characterization for $\infty$-R\'{e}nyi divergence and show how it can be employed to obtain closed-form expressions for variants of maximal $g$-leakage depending on the type of adversary. The proofs of all the results are presented in Appendix.

\section{Preliminaries}\label{section:preliminaries}
\emph{Notation.} We use capital letters to denote random variables, e.g., $X$, and capital calligraphic letters to denote their corresponding alphabet, e.g., $\mathcal{X}$. We consider only finite alphabets in this work. We use $\mathbb{E}_X[\cdot]$ to denote expectation with respect to $P_X$ and $U-X-Y$ to denote that the random variables form a Markov chain. We use $\text{supp}(X):=\{x:P_X(x)>0\}$ to denote the support set of $X$. We use $H(X)$, $I(X;Y)$, and $D(P_X\|Q_X)$ to denote entropy, mutual information, and relative entropy, respectively. Given two probability distributions $P_X$ and $Q_X$ over an alphabet $\mathcal{X}$, we write $P_X\ll Q_X$ to denote that $P_X$ is absolutely continuous with respect to $Q_X$. Finally, we use $\log$ to denote the natural logarithm.

 We begin by defining the maximal expected gain of an adversary in guessing an unknown random variable, which is {\blue{an}} impetus for this work. {\blue{A \emph{gain function} is defined as a function $g:[0,1]\rightarrow [0,\infty)$, and can be interpreted as a function that is applied to the probability of correctly guessing an unknown random variable.}} 
\begin{definition}[Maximal expected gain]\label{defn:gain}
{\blue{Given a gain function $g:[0,1]\rightarrow \mathbb{R}$ and a probability distribution $P_X$ on a finite alphabet $\mathcal{X}$, the maximal expected gain is defined as }}
\begin{align}\label{eqn:maxi-expectedgain}
    \sup_{P_{\hat{X}}}\mathbb{E}_X\left[g(P_{\hat{X}}(X)\right],
\end{align}
{\blue{where $\hat{X}$ represents an estimator of $X$ with same support as $X$.}}
\end{definition}
{\blue{Though maximal expected gain in \eqref{defn:gain} is defined for real-valued gain functions, we restrict our attention to non-negative gain functions for most of the time in this paper. However, some of our results hold for negative gain functions also, e.g., $g(t)=\log{t}$, $t\in[0,1]$ (we discuss this in Remark~\ref{remark:log} in Section~\ref{section:variational}). The objective function in \eqref{eqn:maxi-expectedgain} is the expected value of gain function $g$ applied to the probability of correctly guessing an unknown random variable $X$. Alvim~\emph{et al.}~\cite{Alvimetal12,Alvimetal14} consider a similar gain function viewpoint of information leakage with gain functions $g:\mathcal{X}\times\hat{\mathcal{X}}\rightarrow \mathbb{R}$. It is worth mentioning that the gain function of Alvim~\emph{et al.}~\cite{Alvimetal12,Alvimetal14} subsumes the gain function used in \eqref{eqn:maxi-expectedgain}. In particular, if we consider the alphabet $\hat{\mathcal{X}}$ to be equal to the simplex of probability distributions $P_{\hat{X}}$ on $\mathcal{X}$}, we can define a gain function $g(x,P_{\hat{X}})=g^\prime(P_{\hat{X}}(x))$ for a function $g^\prime:[0,1]\rightarrow \mathbb{R}$. However, our gain function viewpoint is conceptually different from that of Alvim~\emph{et al.}~\cite{Alvimetal12,Alvimetal14} in that the maximal expected gain in \eqref{eqn:maxi-expectedgain} is expressed in terms of the probability of correctly guessing. Moreover, as mentioned earlier in Section~\ref{subsection:related-work}, our approach to study information leakage measures differs substantially from that of Alvim~\emph{et al.}~\cite{Alvimetal12,Alvimetal14} (see Section~\ref{section:maximalg-leakage} for more details).}  We note that the notion of maximal expected gain in \eqref{eqn:maxi-expectedgain} for specific gain functions, $g(t)=t$ and $g(t)=\frac{\alpha}{\alpha-1}t^{\frac{\alpha-1}{\alpha}}$, $\alpha\in(1,\infty]$, {\blue{plays a crucial role}} in the definitions of maximal leakage~\cite{Issaetal} and maximal $\alpha$-leakage~\cite{Liaoetal} (see the bulleted list near \eqref{eqn:logloss-relent} for optimal guessing strategies for these gain functions in addition to the logarithmic gain function $g(t)=\log{t}$). 
\begin{definition}[Maximal $\alpha$-leakage~\cite{LiaoKS20,Liaoetal}]
Given a joint distribution $P_{XY}$ on a finite alphabet $\mathcal{X}\times\mathcal{Y}$, the maximal $\alpha$-leakage from $X$ to $Y$ is defined as, for $\alpha\in {\nblue{(0,1)}}\cup(1,\infty)$, 
\begin{align}\label{eqn:maxalpha-leakagedef}
&\mathcal{L}_\alpha^\emph{max}(X\rightarrow Y)\nonumber\\
&=\sup_{U-X-Y}\frac{\alpha}{\alpha-1}\log\frac{\max_{P_{\hat{U}|Y}}\mathbb{E}_{UY}\left[{\nblue{\frac{\alpha}{\alpha-1}}}P_{\hat{U}|Y}(U|Y)^{\frac{\alpha-1}{\alpha}}\right]}{\max_{P_{\hat{U}}}\mathbb{E}_U\left[{\nblue{\frac{\alpha}{\alpha-1}}}P_{\hat{U}}(U)^\frac{\alpha-1}{\alpha}\right]},
\end{align}
where $U$ represents any randomized function of $X$ that the adversary is interested in guessing and takes values in an arbitrary finite alphabet. Moreover, $\hat{U}$ is an estimator of $U$ with the same support as $U$.
\end{definition}
Maximal $\alpha$-leakage captures the information leaked about \emph{any function} of the random variable $X$ to an adversary that observes a correlated random variable $Y$. Maximal $\alpha$-leakage is proposed as a generalization of maximal leakage~\cite{Issaetal}, where the former recovers the latter when $\alpha\rightarrow \infty$, to allow tuning the measure to specific applications. The ratio inside the logarithm in \eqref{eqn:maxalpha-leakagedef} is the multiplicative increase, upon observing $Y$, of the maximal expected gain of an adversary in guessing a randomized function of $X$, with gain function $g(t)=\frac{\alpha}{\alpha-1}t^{\frac{\alpha-1}{\alpha}}$, $\alpha\in(0,1)\cup(1,\infty)$. In this work, we study maximal expected gain \eqref{eqn:maxi-expectedgain} and the associated leakage measures for arbitrary gain functions $g:[0,1]\rightarrow (0,\infty)$. Maximal leakage, maximal $\alpha$-leakage, and the new leakage measures we study in this work can be expressed in terms of Sibson mutual information~\cite{sibson1969information} and R\'{e}nyi divergence~\cite{renyi1961measures}.
\begin{definition}[Sibson mutual information of order $\alpha$~\cite{sibson1969information}]
For a given joint distribution $P_{XY}$ on finite alphabet $\mathcal{X}\times\mathcal{Y}$, the Sibson mutual information of order $\alpha\in(0,1)\cup(1,\infty)$ is 
\begin{align*}\label{eqn:sibson-def}
    I_\alpha^{\emph{S}}(X;Y)=\frac{\alpha}{\alpha-1}\log\sum_{y\in\mathcal{Y}}\left(\sum_{x\in\mathcal{X}}P_X(x)P_{Y|X}(y|{\blue{x}})^\alpha\right)^\frac{1}{\alpha}.
\end{align*}
It is defined by its continuous extension for $\alpha=1$ and $\alpha=\infty$, respectively, and is given by
\begin{align}
    I_1^\emph{S}(X;Y)&=I(X;Y) \ \emph{(Shannon mutual information)},\\
    I_\infty^\emph{S}(X;Y)&=\log\sum_{y\in\mathcal{Y}}\max_{x:P_X(x)>0}P_{Y|X}(y|x).
\end{align}
\end{definition}
\begin{definition}(R\'{e}nyi divergence of order $\alpha$~\cite{renyi1961measures})
The R\'{e}nyi divergence of order $\alpha\in(0,1)\cup(1,\infty)$ between two probability distributions $P_X$ and $Q_X$ on a finite alphabet $\mathcal{X}$ is defined as
\begin{align}\label{eqn:Renyidiv-def}
    D_\alpha(P_X||Q_X)=\frac{1}{\alpha-1}\log\left(\sum_{x\in\mathcal{X}}P_X(x)^\alpha Q_X(x)^{1-\alpha}\right).
\end{align}
It is defined by its continuous extension for $\alpha=1$ and $\alpha=\infty$, respectively, and is given by
\begin{align}
    D_1(P_X||Q_X)&=\sum_{x\in\mathcal{X}}P_X(x)\log\frac{P_X(x)}{Q_X(x)},\\
    D_\infty(P_X||Q_X)&=\max_{x\in\mathcal{X}}\log\frac{P_X(x)}{Q_X(x)}.
\end{align}
\end{definition}
{\blue{Notice that for $\alpha\geq 1$, $D_\alpha(P_X\|Q_X)$ is finite if and only if $P_X$ is absolutely continuous with respect to $Q_X$.}} Issa~\emph{et al.}~\cite{Issaetal} showed that maximal leakage is equal to Sibson mutual information of order $\infty$, i.e., \begin{align}
    \mathcal{L}_\infty^{\text{max}}(X\rightarrow Y)&=I_\infty^\text{S}(X;Y)\\
    &=\log\sum_{y\in\mathcal{Y}}\max_{x:P_X(x)>0}P_{Y|X}(y|x).
\end{align}
Generalizing this, Liao~\emph{et al.}~\cite{Liaoetal} proved that maximal $\alpha$-leakage is given by the Sibson mutual information of order $\alpha$, i.e.,
\begin{align}\label{eqn:maxalphaleak}
    \mathcal{L}_\alpha^{\text{max}}(X\rightarrow Y)=\sup_{P_{\tilde{X}}}I^{\text{S}}_{\blue{\alpha}}(\tilde{X};Y),
\end{align}
where the supremum is over all probability distributions $P_{\blue{\tilde{X}}}$ on the support of $P_X$.
{\blue\section{Maximal $g$-leakage Under Multiple Guesses}\label{section:maximalg-leakage}}
The definitions of maximal leakage and maximal $\alpha$-leakage consider the adversaries interested in maximizing the expected values of specific gain functions, in particular, maximal leakage uses the gain function $g(t)=t$ and maximal $\alpha$-leakage uses the gain function $g(t)=\frac{\alpha}{\alpha-1}t^{\frac{\alpha-1}{\alpha}}$, $\alpha\in(1,\infty)$. These leakage measures can be seen as special cases of a more general leakage measure incorporating an adversary interested in maximizing an arbitrary gain function $g$.

\begin{definition}[Maximal $g$-leakage]\label{def:maximal-g-leakage}
{\blue{Given a gain function $g:[0,1]\rightarrow [0,\infty)$ and a joint probability distribution $P_{XY}$ on a finite alphabet $\mathcal{X}\times\mathcal{Y}$, the maximal $g$-leakage is defined as}}
\begin{align}\label{eqn:max-g-leakage}
    &{\mathcal{L}}_g^{\emph{max}}(X\rightarrow Y)\nonumber\\
    =
&\sup_{U:U-X-Y}\log \frac{\sup_{P_{\hat{U}|Y}}\mathbb{E}_{UY}\left[g(P_{\hat{U}|Y}(U|Y))\right]}{\sup_{P_{\hat{U}}}\mathbb{E}_U\left[g(P_{\hat{U}}(U))\right]}.
\end{align}
\end{definition}

{\blue {Maximal $g$-leakage captures how much information an adversary can learn about any randomized function of a random variable $X$ from a correlated random variable $Y$ when a single guess is allowed. We now define \emph{maximal $g$-leakage under $k$ guesses} which captures the information an adversary can learn when $k$ guesses are allowed. This definition is also related to maximal leakage under $k$ guesses \cite[Definition~4]{Issaetal}.}}
{\blue\begin{definition}[Maximal $g$-leakage under $k$ guesses]
Given a gain function $g:[0,1]\rightarrow [0,\infty)$ and a joint probability distribution $P_{XY}$, the maximal $g$-leakage from $X$ to $Y$ under $k$ guesses is defined as
\begin{align}\label{defn:kgleakage}
    &\mathcal{L}^{(k)-\emph{max}}_g(X\rightarrow Y)\nonumber\\
    &= \sup_{U:U-X-Y}\log{\frac{\max\limits_{P_{\hat{U}_{[1:k]}|Y}}\mathbb{E}\left[g\left(\mathrm{P}\left(\bigcup\limits_{i=1}^k(\hat{U}_i=U)|Y\right)\right)\right]}{\max\limits_{P_{\hat{U}_{[1:k]}}}\mathbb{E}\left[g\left(\mathrm{P}\left(\bigcup\limits_{i=1}^k(\hat{U}_i=U)\right)\right)\right]}},
\end{align}
where $\hat{U}_1,\hat{U}_2,\dots,\hat{U}_k$ represent $k$ estimators of $U$ with the same support as $U$.
\end{definition}
}

{\blue {We interpret $\mathrm{P}(\bigcup\limits_{i=1}^k(\hat{U}_i=u)|Y=y)$ as the probability of correctly estimating $U=u$ given $Y=y$ in $k$ guesses. Due to the fact that gain function used in maximal $g$-leakage is a special case of the gain function of Alvim \emph{et al.}~\cite{Alvimetal12} (as discussed in the paragraph after Definition~\ref{defn:gain}), an upper bound on maximal $g$-leakage directly follows from Issa \emph{et al.}~\cite[Theorem~5]{Issaetal}. In particular, Issa~\emph{et al.}~\cite[Theorem~5]{Issaetal} showed that, for a given joint distribution $P_{XY}$,
    \begin{align}\label{remark1-issa}
    \sup_{\substack{U:U-X-Y\\ \hat{\mathcal{U}},g:\mathcal{U}\times\hat{\mathcal{U}}\rightarrow [0,\infty):\\ \sup_{\hat{u}\in\hat{\mathcal{U}}}\mathbb{E}_U[g(U,\hat{u})]>0}}\log{\frac{\sup_{\hat{u}(\cdot)}\mathbb{E}_{UY}[g(U,\hat{u}(Y))]}{\sup_{\hat{u}\in\hat{\mathcal{U}}}\mathbb{E}_U[g(U,\hat{u})]}}=I_\infty^\text{S}(X;Y).
    \end{align}
    Thus, it follows from from \eqref{remark1-issa} and Issa~\emph{et~al.}~\cite[Theorem~1]{Issaetal} that
\begin{align}\label{eqn:supachvg}
    \mathcal{L}^{\text{max}}_g(X\rightarrow Y)\leq I_\infty^\text{S}(X;Y)
\end{align}
with equality if $g$ is the identity gain function, $g(t)=t$, for $t\in[0,1]$.
}}  

{\blue{In the following proposition, we show that the upper bound in \eqref{eqn:supachvg} holds for maximal $g$-leakage under $k$ guesses also.

\begin{proposition}[Upper bound on maximal $g$-leakage under $k$ guesses]\label{prop:upperboundg}
   Given a gain function $g:[0,1]\rightarrow [0,\infty)$ and a joint probability distribution $P_{XY}$, we have that maximal leakage is an upper bound on maximal $g$-leakage under $k$ guesses, i.e.,  
   \begin{align}\label{eqn:newmultipleguess}
        \mathcal{L}^{(k)-{\nblue{\emph{max}}}}_g(X\rightarrow Y)\leq I_{\infty}^{\emph{S}}(X;Y),
   \end{align}
   with equality if $g$ is the identity gain function, $g(t)=t$, for $t\in[0,1]$
\end{proposition}
\begin{remark}
    Proposition~\ref{prop:upperboundg} follows from \eqref{remark1-issa} together with an observation that the gain function definition used by Alvim~\emph{et~al.}~\cite{Alvimetal12} (and Issa~\emph{et~al.}~\cite[Theorem~5]{Issaetal}) captures not only our gain function definition but also the multiple guesses scenario ~\cite[Section~III-C]{Alvimetal14} simultaneously. In particular, let $\mathcal{\hat{U}}$ be the simplex of all probability distributions $P_{\hat{U}{_{[1:k]}}}$ with each $U_i$ taking values in $\mathcal{U}$ and define 
    \begin{align}\label{eqn:alvimmultguesse}
    g^\prime(u,P_{\hat{U}{_{[1:k]}}}):=g\left(\mathrm{P}\left(\bigcup\limits_{i=1}^k(\hat{U}_i=u)\right)\right),
    \end{align}
   for $u\in\mathcal{U}$, $P_{\hat{U}{_{[1:k]}}}\in\mathcal{\hat{U}}$. Then, Proposition~\ref{prop:upperboundg} follows from \eqref{remark1-issa} together with \eqref{eqn:alvimmultguesse}. We remark that this observation also provides a simpler proof to the upper bound part in \cite[Proof~of~Theorem~4]{Issaetal}, i.e., maximal leakage upper bounds maximal leakage under $k$ guesses.
\end{remark}}}
{\blue{Proposition~\ref{prop:upperboundg} shows that maximal $g$-leakage under multiple guesses is maximized when $g(t)=t$, for $t\in[0,1]$, which corresponds to maximal leakage under multiple guesses~\cite[Theorem~4]{Issaetal}.}} Also, we note that closed-form expressions for even maximal $g$-leakage are known only for few gain functions, in particular, when $g(t)=\frac{\alpha}{\alpha-1}t^{\frac{\alpha-1}{\alpha}}$, $\alpha\in(0,1)(1,\infty)$~\cite{Issaetal,LiaoKS20,Liaoetal}, that corresponds to maximal $\alpha$-leakage (and maximal leakage when $\alpha\rightarrow \infty$). In the following theorem, we obtain closed-form expression for maximal $g$-leakage {\blue{under multiple guesses}} for a class of concave gain functions.
\begin{theorem}[{\blue{Maximal $g$-leakage under multiple guesses}}]\label{thm:maximalg-leakage-concaveg}
{\blue{Let $P_{XY}$ be a  joint probability distribution on a finite alphabet $\mathcal{X}\times\mathcal{Y}$ and $g:[0,1]\rightarrow [0,\infty)$ be a gain function satisfying the following assumptions:}}
\begin{itemize}
    \item $g$ is a concave function,
    \item $g(0)=0$ and $0<g^\prime(0)<\infty$.
    \end{itemize}
    Then we have that maximal $g$-leakage under $k\geq 1$ guesses is exactly equal to maximal leakage, i.e.,
    \begin{align*}
     \mathcal{L}_g^{(k)-\emph{max}}(X\rightarrow Y)&=I_\infty^\emph{S}(X;Y).\\
    \end{align*}
\end{theorem}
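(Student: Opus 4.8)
The inequality ``$\le$'' in \eqref{eqn:supachvg} is already supplied by Theorem~\ref{thm:maximalgleakage-maxoverg}, so the whole task is to prove the matching lower bound $\mathcal{L}_g^{\max}(X\to Y)\ge \log\sum_{y}\max_{x}P_{Y|X}(y|x)$. The plan is to reduce an arbitrary admissible $g$ to the already-understood identity-gain case ($g(t)=t$) by exploiting that a concave $g$ with $g(0)=0$ and $0<g'(0)<\infty$ is asymptotically linear at the origin: concavity gives the tangent bound $g(t)\le g'(0)\,t$ for all $t\in[0,1]$ (here $g'(0)$ is the right derivative, which exists for concave $g$), while $g(0)=0$ and the existence of $g'(0)$ give $t^{-1}g(t)\to g'(0)$ as $t\to 0^{+}$. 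The key idea is to force every guessing probability appearing in \eqref{eqn:max-g-leakage} to be uniformly small, so that $g$ can be replaced by $g'(0)\,t$ up to a vanishing multiplicative error, whereupon the $g$-ratio collapses to the identity-gain ratio.

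To do this I would fix any $U$ with $U-X-Y$ and \emph{dilute} it: set $V=(U,W)$, where $W$ is uniform on $\{1,\dots,N\}$ and independent of $(U,X,Y)$, so that $V-X-Y$ still holds and every relevant probability is scaled by $1/N$, e.g.\ $P_{VY}(u,w,y)=P_{UY}(u,y)/N$ and $P_{V|Y}(u,w\mid y)=P_{U|Y}(u\mid y)/N$. Writing $A^{\mathrm{id}}_U=\sum_y\max_u P_{UY}(u,y)$ and $B^{\mathrm{id}}_U=\max_u P_U(u)$ for the identity-gain numerator and denominator of $U$, I would bound the $g$-numerator and $g$-denominator of $V$ from both sides. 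For the upper bounds I apply $g(t)\le g'(0)\,t$ together with the fact that an estimator is a probability distribution, giving $g$-numerator $\le (g'(0)/N)A^{\mathrm{id}}_U$ and $g$-denominator $\le (g'(0)/N)B^{\mathrm{id}}_U$. For the lower bounds I exhibit an explicit feasible estimator: for each $y$, guess uniformly over the $N$ pairs $\{(u^{\ast}(y),w):w\in\{1,\dots,N\}\}$ with $u^{\ast}(y)\in\argmax_{u}P_{UY}(u,y)$; since $g(0)=0$ annihilates every off-support term, this yields $g$-numerator $\ge g(1/N)\,A^{\mathrm{id}}_U$, and the analogous mode-based choice gives $g$-denominator $\ge g(1/N)\,B^{\mathrm{id}}_U$.

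Combining the four estimates sandwiches the $g$-ratio of $V$ between $\tfrac{N g(1/N)}{g'(0)}\cdot\tfrac{A^{\mathrm{id}}_U}{B^{\mathrm{id}}_U}$ and $\tfrac{g'(0)}{N g(1/N)}\cdot\tfrac{A^{\mathrm{id}}_U}{B^{\mathrm{id}}_U}$. Since $N g(1/N)\to g'(0)\in(0,\infty)$, both prefactors tend to $1$, so the $g$-ratio of $V$ converges to $A^{\mathrm{id}}_U/B^{\mathrm{id}}_U$ as $N\to\infty$. Because each diluted $V$ is itself admissible in the outer supremum of \eqref{eqn:max-g-leakage}, this gives $\mathcal{L}_g^{\max}(X\to Y)\ge\log(A^{\mathrm{id}}_U/B^{\mathrm{id}}_U)$ for every base $U$. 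Taking the supremum over $U$ yields $\mathcal{L}_g^{\max}(X\to Y)\ge\sup_U\log(A^{\mathrm{id}}_U/B^{\mathrm{id}}_U)$, and the right-hand side is exactly the identity-gain maximal $g$-leakage, which equals $\log\sum_y\max_x P_{Y|X}(y\mid x)$ by the equality clause of Theorem~\ref{thm:maximalgleakage-maxoverg}. Together with the upper bound of that same theorem this proves $\mathcal{L}_g^{\max}(X\to Y)=I_\infty^{\mathrm{S}}(X;Y)$.

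I expect the main obstacle to be the lower-bound side: one must verify that the uniform-over-$N$-copies strategy is simultaneously near-optimal for the concave objective \emph{and} keeps every argument of $g$ equal to the single small value $1/N$, so that the tangent and chord inequalities become tight in the limit. A secondary technical point is that the hypotheses $0<g'(0)<\infty$ and $g(0)=0$ are used in an essential way: finiteness of $g'(0)$ makes the tangent bound useful, positivity makes the prefactors $Ng(1/N)/g'(0)$ well defined and convergent to $1$, and $g(0)=0$ is what lets the off-support terms vanish so the reduction to the identity gain goes through.
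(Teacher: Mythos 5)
Your proposal is correct, and while it rests on the same underlying mechanism as the paper's proof---force every argument of $g$ to be uniformly small so that $g(t)\approx g'(0)\,t$, which collapses the $g$-ratio to the identity-gain ratio---the execution differs in two ways worth noting. First, the paper shatters $X$ itself (splitting each $x$ into $m_x$ equiprobable copies to form $U$) and then must invoke Jensen's inequality to argue that the optimal estimator is uniform within each shattered class before the linearization $g(x)=g'(0)x+xh(x)$ can be applied to the numerator; you instead dilute an \emph{arbitrary} $U$ by appending an independent uniform $W$ on $\{1,\dots,N\}$, and since you only need a lower bound on a supremum over estimators, you can simply plug in the explicit uniform-over-$N$-copies estimator and let $g(0)=0$ kill the off-support terms---no Jensen step is needed. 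Second, the reduction target differs: the paper reduces to the known fact that shattered channels achieve maximal leakage for the identity gain (citing Liao \emph{et al.}), whereas you reduce to $\sup_U\log(A^{\mathrm{id}}_U/B^{\mathrm{id}}_U)$ and invoke the equality clause of Theorem~\ref{thm:maximalgleakage-maxoverg}, which rests on the same external result. Both proofs use concavity together with $g'(0)<\infty$ identically for the denominator (the supergradient bound $g(t)\le g'(0)t$), and both use $g(0)=0$ with $0<g'(0)$ to get $Ng(1/N)\to g'(0)$ (equivalently, the paper's $\epsilon$-slack in $g'(0)-\epsilon$). Your variant is arguably slightly cleaner at the numerator step and makes transparent that the dilution can be applied on top of any base $U$ rather than only to a shattering of $X$; the paper's version has the minor advantage of exhibiting a single explicit near-optimal $P_{U|X}$.
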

{\nblue\begin{remark}
    An important consequence of Theorem~\ref{thm:maximalg-leakage-concaveg} in the design of privacy mechanisms is the following. Suppose the inferential capability of an adversary in guessing about $X$ from $Y$ is measured via a function $g$ of the probability of correctly guessing and we use maximal $g$-leakage as the privacy measure. The system designer needs to find an optimal privacy mechanism $P_{Y|X}$ minimizing the leakage $\mathcal{L}_g(X\rightarrow Y)$ while maintaining a minimum level of utility measured by, say, $\mathcal{U}(X,Y)$. In many practical situations, we may have only a limited knowledge about the inferential capability of adversary. Specifically, assume that all we know about the function $g$ is that it belongs to a class of non-negative concave gain functions $g$ that satisfy $g(0)=0$ with a finite positive derivative at $0$ (see the paragraph below Remark~\ref{remark:4} for an interpretation of this class). Now, as a result of Theorem~\ref{thm:maximalg-leakage-concaveg}, it suffices for the system designer to find the optimal mechanism that minimizes the Sibson mutual information of order infinity subject to utility constraints without worrying about further details of $g$, i.e.,
    \begin{align}\label{eqn:mechanism}
        \inf_{P_{Y|X}:\mathcal{U}(X,Y)\geq k}I_{\infty}^\emph{S}(X;Y).
    \end{align}
    Thus, notice that the effective situation is same as that of knowing the exact form of the gain function $g$ as the optimal privacy mechanism with a particular $g$ remains the same as that of \eqref{eqn:mechanism} as long as $g$ belongs to the aforementioned class.   
\end{remark}
}
\begin{remark}\label{distinction-alvimissa}
    {\blue{In this remark, using Theorem~\ref{thm:maximalg-leakage-concaveg} we outline the distinction between maximal $g$-leakage and the leakage measures of Alvim~\emph{et al.}~\cite{Alvimetal12} and Issa \emph{et al.}~\cite{Issaetal} based on gain functions. Alvim~\emph{et~al.}~\cite[Definition~3.4]{Alvimetal12} define $g$-capacity of a channel $P_{Y|X}$ as
    \begin{align}
        \mathcal{ML}_g(X\rightarrow Y):=\sup_{P_X}\log{\frac{\sup_{\hat{x}(\cdot)}\mathbb{E}_{XY}[g(X,\hat{x}(Y))]}{\sup_{\hat{x}\in\hat{\mathcal{X}}}\mathbb{E}_X[g(X,\hat{x})]}},
    \end{align}
    for any $g:\mathcal{X}\times\hat{\mathcal{X}}\rightarrow [0,\infty)$,
    and showed that \cite[Theorem~5.1]{Alvimetal12}
    \begin{align}\label{remark1-alvim}
        \sup_{\substack{\hat{\mathcal{X}},g:\mathcal{X}\times\hat{\mathcal{X}}\rightarrow [0,\infty):\\ \sup_{\hat{x}\in\hat{\mathcal{X}}}\mathbb{E}_X[g(X,\hat{x})]>0}}\!\!\!\mathcal{ML}_g(X\rightarrow Y)=\log\sum_{y\in\mathcal{Y}}\max_{x\in\mathcal{X}}P_{Y|X}(y|x),
    \end{align} 
    which considers the worst-case scenario over all $g$ (note that the expression in RHS of \eqref{remark1-alvim} is equal to that of \eqref{remark1-issa} if $X$ has a full support). Thus, Alvim~\emph{et~al.}~\cite{Alvimetal12} and Issa~\emph{et~al.}~\cite{Issaetal} (see \eqref{remark1-issa}) obtained conclusive results for leakage measures with a maximum taken over all the gain functions rather than focusing on leakage measures with specific gain functions.
    Theorem~\ref{thm:maximalg-leakage-concaveg} provides closed-form expression for maximal $g$-leakage for a specific class of concave gain functions.}}
\end{remark}
\begin{remark}\label{remark:4}
Note that the gain function that corresponds to maximal $\alpha$-leakage (for $\alpha\in(0,1)\cup(1,\infty)$), i.e., $g(t)=\frac{\alpha}{\alpha-1}t^{\frac{\alpha-1}{\alpha}}$, does not belong to the class of gain functions for which Theorem~\ref{thm:maximalg-leakage-concaveg} holds even though the gain function is concave; this is because $g^\prime(0)=\infty$. However, as $\alpha\rightarrow \infty$, we have $g(t)=t$ for which $g^\prime(0)=1<\infty$. So, Theorem~\ref{thm:maximalg-leakage-concaveg} recovers \cite[Theorem~1]{Issaetal} when $g(t)=t$. Some other examples of gain functions for which Theorem~\ref{thm:maximalg-leakage-concaveg} holds are $g(t)=1-(1-t)^2, \sin{t}, \min\{ct,\frac{1}{2}\}$, for some constant $c>0$.
\end{remark}
{\blue{To interpret the conditions on $g$ in Theorem~\ref{thm:maximalg-leakage-concaveg}, we first note that the concavity of $g$ is natural in that we are examining the optimization problems involving maximization of gain functions in the definition of maximal $g$-leakage. The condition $g(0)=0$ suggests that the minimum possible value of the gain function be assigned to adversary when the probability of correctly guessing is zero. When the minimum possible value has been assigned to $g(0)$, it has to be the case that $g$ is increasing at $0$, in fact, we need $g$ to be strictly increasing at $0$ with a finite derivative in Theorem~\ref{thm:maximalg-leakage-concaveg}. Extending this intuition, we may impose an additional constraint that the maximum possible value of the gain function be assigned when the probability of correctly guessing is $1$, i.e., $g(1)=\sup_{t\in[0,1]}$g(t) (though it is not required for Theorem~\ref{thm:maximalg-leakage-concaveg}). Then because of concavity, the function $g$ needs to be non-decreasing in the probability of correctly guessing\footnote{\blue{To see this, suppose that there exist $x,y$ such that $x<y$ and $g(x)>g(y)$. Since $y\in[x,1]$, there exists $\lambda\in[0,1]$ with $y=\lambda x+(1-\lambda)$. Then, concavity of $g$ implies that $g(y)=g(\lambda x+(1-\lambda)\geq \lambda g(x)+(1-\lambda)g(1)>\lambda g(y)+(1-\lambda)g(1)\geq \lambda g(y)+(1-\lambda)g(y)=g(y)$, which is a contradiction.}}, which may be more relevant in practical scenarios.}}
A detailed proof of Theorem~\ref{thm:maximalg-leakage-concaveg} is in Appendix~\ref{thm:maxgleakage2}. {\blue{A consequence of Theorem~\ref{thm:maximalg-leakage-concaveg}} is that the maximum in \eqref{eqn:newmultipleguess} is achieved by a class of concave gain functions with identity gain function (that corresponds to maximal leakage) being one of them.}

{\nblue Notice that maximal $g$-leakage in Theorem~\ref{thm:maximalg-leakage-concaveg} and maximal $\alpha$-leakage depends on $P_X$ only through its support. The following theorem presents a closed-form expression for maximal $g$-leakage that corresponds to the gain function $g(t)=1+t$, $t\in[0,1]$, for binary $X$ and $Y$. Interestingly, it turns out that this expression is fundamentally different in its structure from that of maximal leakage and maximal $\alpha$-leakage. In particular, this leakage explicitly depends on $P_X$ (i.e., not just through its support).

\begin{theorem}\label{thm:shiftedmaxL}
    Let $P_{XY}$ be a joint probability distribution on $\mathcal{X}\times\mathcal{Y}$ with $|\mathcal{X}|=|\mathcal{Y}|=2$. Then, for $g(t)=1+t$, $t\in[0,1]$, we have
    \begin{align}
       \mathcal{L}_g^{\emph{max}}(X\rightarrow Y)=\log{\frac{1+p^*\sum_{y\in\mathcal{Y}}\max_{x\in\mathcal{X}}P_{Y|X}(y|x)}{1+p^*}},
    \end{align}
    where $p^*=\min_{x:P_X(x)>0}P_X(x)$.
\end{theorem}
A detailed proof of Theorem~\ref{thm:shiftedmaxL} is in Appendix~\ref{proof of thm2}.
}

\section{Evaluating Multiple Guesses via a Tunable Loss Function}\label{section:multiple-guesses}
{\blue{In this section, we focus on a specific class of gain functions, related to the $\alpha$-loss~\cite{arimoto1971information,LiaoKS20,SypherdDSK19}, parameterized by $\alpha\in(0,1)\cup(1,\infty)$ and the corresponding maximal $g$-leakage measures under multiple guesses. We first characterize the minimal expected $\alpha$-loss (or equivalently, the associated maximal expected gain) under multiples guesses, and then study maximal $g$-leakage for the corresponding class of gain functions parameterized by $\alpha\in(0,1)\cup(1,\infty)$.}}

Consider a setup where an adversary is interested in guessing the unknown value of a random variable $X$ on observing another correlated random variable $Y$, where $X$ and $Y$ are jointly distributed according to $P_{XY}$ over the finite support $\mathcal{X}\times\mathcal{Y}$. The adversary can make a fixed number of guesses, say $k$, to estimate $X$. We focus on evaluating the adversary's success using loss functions that in turn can measure the information leaked by $Y$ about $X$. 
To this end, we model the adversary's strategy using $\alpha$-loss, a class of tunable loss functions parameterized by $\alpha \in (0,\infty]$ ~\cite{LiaoKS20,SypherdDSK19}. This class captures the well-known exponential loss ($\alpha=1/2$)~\cite{FREUND1997119}, log-loss ($\alpha=1$)~\cite{MerhavF1998,NguyenWJ09,CourtadeW11}, and the 0-1 loss ($\alpha=\infty$)~\cite{NguyenWJ09,BartlettJM06}. The adversary then seeks to find the optimal (possibly randomized) guessing strategy that minimizes the expected $\alpha$-loss over $k$ guesses. We first review $\alpha$-loss and then define maximal expected $\alpha$-loss under $k$ guesses.
\begin{definition}[$\alpha$-loss~\cite{arimoto1971information,LiaoKS20,SypherdDSK19}]\label{defn:alphaloss}
For $\alpha\in(0,1)\cup(1,\infty)$, the $\alpha$-loss is a function defined from $[0,1]$ to $\mathbb{R}_+$ as
\begin{align}
    \ell_\alpha(p):=\frac{\alpha}{\alpha-1}\left(1-p^{\frac{\alpha-1}{\alpha}}\right).
\end{align}
It is defined by continuous extension for $\alpha=1$ and $\alpha=\infty$, respectively, and is given by
\begin{align}
    \ell_1(p)=\log{\frac{1}{p}},\  \ell_\infty(p)=1-p.
\end{align}
\end{definition}
Notice that $\ell_\alpha(p)$ is decreasing in $p$.
\begin{definition}[Minimal expected $\alpha$-loss under $k$ guesses]
Consider random variables $(X,Y)\sim P_{XY}$ and an adversary that makes $k$ guesses $\hat{X}_{[1:k]}=\hat{X}_1,\hat{X}_2,\dots,\hat{X_k}$ 
on observing $Y$ such that $X-Y-\hat{X}_{[1:k]}$ is a Markov chain. Let $P_{\hat{X}_{[1:k]}|Y}$ be a strategy for estimating $X$ from $Y$ in $k$ guesses. For $\alpha\in(0,\infty],$ the minimal expected $\alpha$-loss under $k$ guesses is defined as
\begin{align}\label{minimallossdef} 
& \mathcal{ME}^{(k)}_\alpha(P_{XY})\nonumber\\
& :=\min_{P_{\hat{X}_{[1:k]}|Y}}
    \sum_{x,y} P_{XY}(x,y) \ell_\alpha\left(\mathrm{P}\left(\bigcup_{i=1}^k (\hat{X}_i=x)|Y=y\right)\right).  
\end{align}
\end{definition}
$\mathrm{P}(\bigcup\limits_{i=1}^k(\hat{X}_i=x)|Y=y)$ is the probability of correctly estimating $X=x$ given $Y=y$ in $k$ guesses. An adversary seeks to find the optimal guessing strategy in \eqref{minimallossdef}.
Note that the optimization problem in \eqref{minimallossdef} was solved for a special case of $k=1$ by Liao \textit{et al.}~\cite[Lemma~1]{LiaoKS20}.
Notice that
\begin{align}\label{eqn:simpli}
    \mathcal{ME}^{(k)}_\alpha(P_{XY})=\sum_yP_Y(y)\mathcal{ME}_\alpha^{(k)}(P_{X|Y=y}),
\end{align}
where we have slightly abused the notation in the R.H.S. of \eqref{eqn:simpli}. Hence, in view of \eqref{eqn:simpli}, in order to solve the optimization problem in \eqref{minimallossdef}, it suffices to solve for a case where $Y=\emptyset$, i.e., 
\begin{equation}\label{eqn:optimizationproblem}
   \mathcal{ME}_\alpha^{(k)}(P_X):= \min_{P_{\hat{X}_{[1:k]}}}
    \sum_x P_X(x) \ell_\alpha\left(P\left(\bigcup_{i=1}^k (\hat{X}_i=x)\right)\right).
\end{equation}
 Also, in the sequel, it suffices to consider the optimization problem in \eqref{eqn:optimizationproblem} only for the case where $k<n$, where $P_X$ is supported on $\mathcal{X}=\{x_1,x_2,\dots,x_n\}$ because if $k\geq n$, we have
 $\mathcal{ME}_\alpha^{(k)}(P_X)=0$,
since a strategy $P^*_{\hat{X}_{[1:k]}}$ such that $P^*_{\hat{X}_{[1:n]}}(x_1,x_2,\dots,x_n)=1$ is optimal.
We review some simple special cases of \eqref{eqn:optimizationproblem} that are well known in the literature. 
\begin{itemize}
\item Consider \eqref{eqn:optimizationproblem} for the case of log-loss ($\alpha=1$) and $k=1$~\cite{CourtadeW11}. It is well known that the expected log-loss can be expanded as
\begin{align}\label{eqn:logloss-relent}
    \mathbb{E}\left[\log{\frac{1}{P_{\hat{X}}(X)}}\right]=H(X)+D(P_X\|P_{\hat{X}}),
\end{align}
which implies that the optimal guessing strategy is equal to the original distribution, $P_X$, and the minimal expected log-loss is given by its entropy, $H(X)$. Moreover, the relative entropy $D(P_{X}\|P_{\hat{X}})$ quantifies the relative performance of an arbitrary adversarial guessing strategy $P_{\hat{X}}$ with respect to the optimal guessing strategy $P_X$. To the best of our knowledge, minimal expected loss under multiple guesses was not explored even for log-loss ($\alpha=1$) earlier. 
\item At the other extreme, Issa \emph{et al.}~\cite{Issaetal} studied maximal expected probability of correctness in a fixed number of guesses which corresponds to minimal expected $\alpha$-loss for the special case of $\alpha=\infty$. The optimal guessing strategy here is to guess the $k$ most likely outcomes according to the distribution $P_X$. 
\item Liao \emph{et al.}~\cite[Lemma~1]{Liaoetal} solved the optimization problem in \eqref{eqn:optimizationproblem} for a special case of $k=1$ and for arbitrary $\alpha\in(0,\infty]$, where they showed that the optimal guessing strategy is a tilted probability distribution given by $P_X^{(\alpha)}(x):=\frac{P_X(x)^\alpha}{\sum\limits_xP_X(x)^\alpha}$. 
\end{itemize}
We completely characterize the minimal expected $\alpha$-loss under $k$ guesses, for $\alpha\in(0,\infty]$. We first express the objective function in \eqref{eqn:optimizationproblem}, i.e., the expected $\alpha$-loss, in a way similar to \eqref{eqn:logloss-relent} where it turns out that Bregman divergence~\cite{BREGMAN1967200}, a generalization of relative entropy, naturally arises. {\blue{For a continuously-differentiable convex function $F:\Omega\rightarrow \mathbb{R}$, the associated Bregman divergence between $p$ and $q$ in $\Omega$ is defined as $B_{F}(p,q) = F(p) - F(q) - \langle \nabla F(q), p-q \rangle$.}}

\begin{lemma}[Expected $\alpha$-loss under $k$ guesses]\label{thm:expected-alpha-loss-singleguess}
For a fixed probability distribution $P_X$ and an arbitrary guessing strategy $P_{\hat{X}_{[1:k]}}$, we have
\begin{align}\label{eqn:singleguess-altform}
 &\mathbb{E}_{X}\left[\ell_\alpha(\mathrm{P}(\cup_{i=1}^k (\hat{X}_i=X)))\right]\nonumber\\
 &=\frac{\alpha}{\alpha-1} \left(1-k^{\frac{\alpha-1}{\alpha}}\mathrm{e}^{\frac{1-\alpha}{\alpha}H_\alpha(X)}\right)+k^{\frac{\alpha-1}{\alpha}}B_F(P_X,P^{(\frac{1}{\alpha})}_{\blue\tilde{X}}),
\end{align}
where $P_{\tilde{X}}(x):=\frac{\mathrm{P}(\cup_{i=1}^k (\hat{X}_i=x))}{k}$, $P_{\tilde{X}}^{(\frac{1}{\alpha})}(x):=\frac{P_{\tilde{X}}(x)^{\frac{1}{\alpha}}}{\sum\limits_xP_{\tilde{X}}(x)^{\frac{1}{\alpha}}}$, and $B_F(\cdot,\cdot)$ is the Bregman divergence associated with the function $F(P_X)=\frac{\alpha}{\alpha-1}\left((\sum_xP_X(x)^\alpha)^{\frac{1}{\alpha}}-1\right)$ {\blue given by $B_{F}(P_X,P_{\tilde{X}}^{(\frac{1}{\alpha})}) = k^{\frac{\alpha-1}{\alpha}}\left(\sum_xP_X(x)^\alpha\right)^{\frac{1}{\alpha}}\left(1-\mathrm{e}^{\frac{1-\alpha}{\alpha}D_\frac{1}{\alpha}(P^{(\alpha)}_X\|P_{\tilde{X}})}\right)$.} Moreover, the minimal expected $\alpha$-loss is given by $ \mathcal{ME}_\alpha^{(k)}(P_X)=\frac{\alpha}{\alpha-1}(1-k^{\frac{\alpha-1}{\alpha}}\mathrm{e}^{\frac{1-\alpha}{\alpha}H_\alpha(X)})$ if and only if $P_X^{(\alpha)}(x)\leq \frac{1}{k}$, for all $x\in\mathcal{X}$.
\end{lemma}
A detailed proof of Lemma~\ref{thm:expected-alpha-loss-singleguess} is in Appendix~\ref{proofofthm1}. Note that $P_{\tilde{X}}(x)$ defined in Lemma~\ref{thm:expected-alpha-loss-singleguess} may not be even a probability distribution from the way it is defined since $\sum_x\mathrm{P}(\cup_{i=1}^k (\hat{X}_i=x))\leq k$, in general. However, as we prove later (see Lemma~\ref{fact:alphaloss2noneq} and Remark~\ref{remark}), it suffices to consider the guessing strategies $P_{\hat{X}_{[1:k]}}$ that satisfy the condition $\sum_x\mathrm{P}(\cup_{i=1}^k (\hat{X}_i=x))= k$ in order to solve the optimization problem in \eqref{eqn:optimizationproblem}, thereby making it sufficient to limit $P_{\tilde{X}}$ to be a probability distribution. Note that Lemma~\ref{thm:expected-alpha-loss-singleguess} characterizes the minimal expected $\alpha$-loss under $k$ guesses only for the case when $P_{X}^{(\alpha)}(x)\leq \frac{1}{k}$, for all $x\in\mathcal{X}$. This condition is trivially satisfied by any $P_X$ when $k=1$. {\blue Thus, when $P_{X}^{(\alpha)}(x)\leq \frac{1}{k}$, for all $x\in\mathcal{X}$, Bregman divergence $B_F$ (with the function $F$ in Lemma~\ref{thm:expected-alpha-loss-singleguess}) captures the relative performance of an arbitrary guessing strategy with respect to an optimal strategy in minimizing the expected $\alpha$-loss. This generalizes the observation below \eqref{eqn:logloss-relent} that relative entropy quantifies the relative performance of an arbitrary adversarial guessing strategy with respect to the optimal guessing strategy in minimizing the expected log-loss.} We completely characterize the minimal expected $\alpha$-loss under $k$ guesses in the next theorem, i.e., comprising the case when $P_{X}^{(\alpha)}(x)>\frac{1}{k}$, for some $x\in\mathcal{X}$ also. 
\begin{theorem}[Minimal expected $\alpha$-loss under $k$ guesses]\label{thm:minloss-alpha-kguesses}
Let $P_X$ be a probability distribution supported on $\mathcal{X}=\{x_1,x_2,\dots,x_n\}$ such that $p_1\geq p_2\geq \dots \geq p_n$, where $p_i:=P_X(x_i)$, for $i\in[1:n]$. Then the minimal expected $\alpha$-loss under $k$ guesses is given by
\begin{align}\label{eqn:thm11}
    \!\!\mathcal{ME}^{(k)}_\alpha(P_X)=\! \frac{\alpha}{\alpha-1}\sum\limits_{i=s^*}^np_i\bigg(1-\bigg(\frac{(k-s^*+1)p_i^\alpha}{\sum_{j=s^*}^np_j^\alpha}\bigg)^\frac{\alpha-1}{\alpha}\bigg),
\end{align}
where 
\begin{align}\label{eqn:thm1sstar}
    s^*=\min\left\{r\in\{1,2,\ldots,k\}: \frac{(k-r+1)p_r^\alpha}{\sum_{i=r}^np_i^\alpha}\leq 1\right\}.
\end{align}
\end{theorem}
\begin{remark}
Theorem~\ref{thm:minloss-alpha-kguesses} implies that the minimal expected $\alpha$-loss under $k$ guesses induces a polychotomy on the simplex of probability distributions $P_X$ on $\mathcal{X}$ depending on the value of the parameter $s^*$. Also, in an optimal guessing strategy, the adversary always guesses each of the $s^*-1$ most likely outcomes in one of the $k$ guesses with probability $1$, i.e., $\mathrm{P}(\cup_{j=1}^k (\hat{X}_j=x_i))=1$, for $i\in[1:s^*-1]$, and guesses the remaining outcomes with a probability proportional their tilted probability values, i.e., $\mathrm{P}(\cup_{j=1}^k (\hat{X}_j=x_i))=\frac{(k-s^*+1)p_i^\alpha}{\sum_{j=s^*}^np_j^\alpha}$, for $i\in[s^*:n]$. It may not be immediately clear if there indeed exists an optimal guessing strategy $P_{\hat{X}_{[1:k]}}$ consistent with these value assignments to $\mathrm{P}(\cup_{j=1}^k (\hat{X}_j=x))$, $x\in\mathcal{X}$.  However, as we show in the proof of Theorem~\ref{thm:minloss-alpha-kguesses} (see Lemma~\ref{theorem:probclass1} and the discussion above it), it follows from Farkas' lemma~\cite[Proposition~6.4.3]{Matousek} that an arbitrary value assignment to $\mathrm{P}(\cup_{j=1}^k (\hat{X}_j=x))$, for each $x\in\mathcal{X}$, will in fact guarantee the existence of a consistent probability distribution $P_{\hat{X}_{[1:k]}}$ as long as the assignment is such that $\sum_{x\in\mathcal{X}}\mathrm{P}(\cup_{j=1}^k (\hat{X}_j=x))=k$, which is true for the case above. For the special case when $k=s^*=2$, this optimal strategy is exactly the same as that of a seemingly different guessing problem considered in \cite[Section~II-B]{HuleihelSM17}.
\end{remark}
\begin{remark}
Notice that whenever $s^*=1$ in \eqref{eqn:thm1sstar}, the expression in \eqref{eqn:thm11} simplifies to
\begin{align}\label{eqn:thm1recover}
     \frac{\alpha}{\alpha-1}\left(1-k^{\frac{\alpha-1}{\alpha}}\mathrm{e}^{\frac{1-\alpha}{\alpha}H_\alpha(X)}\right),
\end{align}
where $H_{\alpha}(X)=\frac{1}{1-\alpha}\log{\left(\sum_{i=1}^np_i^\alpha\right)}$ is the R\'{e}nyi entropy of order $\alpha$~\cite{renyi1961measures}. Also, note that for the special case of $k=1$, we always have $s^*=1$, thereby recovering \cite[Lemma~1]{LiaoKS20}.
\end{remark}
A detailed proof of Theorem~\ref{thm:minloss-alpha-kguesses} is in Appendix~\ref{proofofthm2}. The proof builds up on  a careful decomposition of the simplex of probability distributions $P_X$ and then solving the desired optimization problem in each case using tools from convex optimization. We present a simpler and a more descriptive proof for the special case of log-loss ($\alpha=1$) and $k=2$ in Appendix~\ref{logloss2}. We remark that the proof of this special case does not directly generalize to arbitrary $\alpha\in(0,\infty]$ (even for the case of $k=2$). However, we remark that it essentially captures the intuition and the main tools involved in the relatively complex analysis in the proof for the more general case with $\alpha\in(0,\infty]$ and $k\in\mathbb{N}$.

We illustrate the optimal guessing strategies to achieve the minimal expected $\alpha$-loss in \eqref{eqn:thm11} through Examples~\ref{Example1} and \ref{Example2}.
\begin{example}[Optimal guessing strategy with $k=2$ guesses]\label{Example1}
Consider a probability distribution $P_X$ supported on $\mathcal{X}=\{x_1,x_2,x_3\}$. Let $p_i=P_X(x_i)$ and $p_i^{(\alpha)}=\frac{p_i^\alpha}{\sum_{j=1}^3p_j^\alpha}$, for $i\in[1:3]$ and fix $\alpha=2$. Optimal guessing strategy depends on the value of the parameter $s^*$ in \eqref{eqn:thm1sstar}. We present an optimal guessing strategy for each case specified by $s^*\in[1:2]$. Notice that $\frac{p_r^\alpha}{\sum_{i=r}^np_i^\alpha}=\frac{p_r^{(\alpha)}}{\sum_{i=r}^np_i^{(\alpha)}}$.

\noindent\underline{With $s^*=1$}: Let $p_1=\frac{3}{8}$, $p_2=\frac{3}{8}$, and $p_3=\frac{1}{4}$. This gives us $p_1^{(\alpha)}=\frac{9}{22}$, $p_2^{(\alpha)}=\frac{9}{22}$, and $p_3^{(\alpha)}=\frac{2}{11}$. It can be verified that $s^*=1$ for this $P_X$. In particular, we have $p_i^{(\alpha)}\leq \frac{1}{2}$, for all $i\in[1:3]$. From \eqref{eqn:thm11}, an optimal guessing strategy $P^*_{\hat{X}_1\hat{X}_2}$ is such that the adversary always guesses $x_i$ in one of the two guesses with a probability proportional to $p_i^{(\alpha)}$, i.e., $\mathrm{P}^*(\hat{X}_1=x_1\ \text{or}\ \hat{X}_2=x_1)=2p_1^{(\alpha)}=\frac{9}{11}$, $\mathrm{P}^*(\hat{X}_1=x_2\ \text{or}\ \hat{X}_2=x_2)=2p_2^{(\alpha)}=\frac{9}{11}$, and $\mathrm{P}^*(\hat{X}_1=x_3\ \text{or}\ \hat{X}_2=x_3)=2p_3^{(\alpha)}=\frac{4}{11}$.\\
\noindent\underline{With $s^*=2$}: Let $p_1=\frac{2}{3}$, $p_2=\frac{1}{4}$, and $p_3=\frac{1}{12}$. This gives us $p_1^{(\alpha)}=\frac{32}{37}$, $p_2^{(\alpha)}=\frac{9}{74}$, and $p_3^{(\alpha)}=\frac{1}{74}$. It can be verified that $s^*=2$ for this $P_X$. In particular, we have $p_1^{(\alpha)}>\frac{1}{2}$. From \eqref{eqn:thm11}, an optimal guessing strategy $P^*_{\hat{X}_1\hat{X}_2}$ is such that the adversary always guesses $x_1$ in one of the two guesses, i.e., $\mathrm{P}^*(\hat{X}_1=x_1\ \text{or}\ \hat{X}_2=x_1)=1$, and guesses $x_i$ with a probability proportional to $p_i^{(\alpha)}$, for $i\in[2:3]$, i.e., $\mathrm{P}^*(\hat{X}_1=x_2\ \text{or}\ \hat{X}_2=x_2)=\frac{p_2^{(\alpha)}}{\sum_{j=2}^3p_j^{(\alpha)}}=\frac{9}{10}$ and $\mathrm{P}^*(\hat{X}_1=x_3\ \text{or}\ \hat{X}_2=x_3)=\frac{p_3^{(\alpha)}}{\sum_{j=2}^3p_j^{(\alpha)}}=\frac{1}{10}$.
\end{example}
\begin{example}[Optimal guessing strategy with $k=3$ guesses]\label{Example2}
Let $P_X$ be supported on $\mathcal{X}=\{x_1,x_2,x_3,x_4\}$. Let $p_i=P_X(x_i)$ and $p_i^{(\alpha)}=\frac{p_i^\alpha}{\sum_{j=1}^4p_j^\alpha}$, for $i\in[1:4]$ and fix $\alpha=2$. We present an optimal guessing strategy for each case specified by $s^*\in[1:3]$.

\noindent\underline{With $s^*=1$}: Notice that specifying a tilted distribution $P_X^{(\alpha)}$ uniquely determines the original distribution $P_X$. Let $P_X$ be such that $p_1^{(\alpha)}=p_2^{(\alpha)}=\frac{1}{4}$, $p_3^{(\alpha)}=\frac{1}{5}$, and $p_4^{(\alpha)}=\frac{3}{10}$. It can be verified that $s^*=1$ for this $P_X$. In particular, we have $p_1^{(\alpha)}\leq\frac{1}{3}$, for all $i\in[1:4]$. From \eqref{eqn:thm11}, an optimal guessing strategy $P^*_{\hat{X}_1\hat{X}_2\hat{X}_3}$ is such that the adversary guesses $x_i$ in one of the three guesses with a probability proportional to $p_i^{(\alpha)}$, for $i\in[1:3]$, i.e., $\mathrm{P}^*(\cup_{j=1}^{{\blue 3}}(\hat{X}_j=x_1))={\blue 3}p_1^{(\alpha)}=\frac{3}{4}$, $\mathrm{P}^*(\cup_{j=1}^{{\blue 3}}(\hat{X}_j=x_2))={\blue {3}}p_2^{(\alpha)}=\frac{3}{5}$, and {\blue{$\mathrm{P}^*(\cup_{j=1}(\hat{X}_j=x_3))=3p_3^{(\alpha)}=\frac{9}{10}$}}.\\
\noindent\underline{With $s^*=2$}: Let $P_X$ be such that $p_1^{(\alpha)}=\frac{3}{8}$, $p_2^{(\alpha)}=\frac{1}{4}$, $p_3^{(\alpha)}=\frac{3}{16}$, and $p_4^{(\alpha)}=\frac{3}{16}$. It can be verified that $s^*=2$ for this $P_X$. In particular, we have $p_1^{(\alpha)}>\frac{1}{3}$ and $\frac{2p_i^{(\alpha)}}{\sum_{j=2}^4p_j^{(\alpha)}}\leq 1$, for $i\in[2:4]$. From \eqref{eqn:thm11}, an optimal guessing strategy $P^*_{\hat{X}_1\hat{X}_2\hat{X}_3}$ is such that the adversary always guesses $x_1$ in one of the three guesses with a probability $1$, i.e., $\mathrm{P}^*(\cup_{j=1}^{{\blue 3}}(\hat{X}_j=x_1))=1$, and guesses $x_i$ with a probability proportional to $p_i^{(\alpha)}$, for $i\in[2:4]$, i.e., $\mathrm{P}^*(\cup_{j=1}^{{\blue 3}}(\hat{X}_j=x_2))=\frac{2p_2^{(\alpha)}}{\sum_{j=2}^4p_j^{(\alpha)}}=\frac{4}{5}$, and $\mathrm{P}^*(\cup_{j=1}(\hat{X}_j=x_i))=\frac{2p_i^{(\alpha)}}{\sum_{j=2}^4p_j^{(\alpha)}}=\frac{6}{10}$, for $i\in[3:4]$.  \\
\noindent\underline{With $s^*=3$}: Let $P_X$ be such that $p_1^{(\alpha)}=\frac{2}{3}$, $p_2^{(\alpha)}=\frac{1}{4}$, and $p_3^{(\alpha)}=p_4^{(\alpha)}=\frac{1}{24}$. It can be verified that $s^*=3$ for this $P_X$. In particular, we have $p_1^{(\alpha)}>\frac{1}{3}$, $\frac{2p_2^{(\alpha)}}{\sum_{j=2}^4p_j^{(\alpha)}}> 1$, and $\frac{p_3^{(\alpha)}}{\sum_{j=3}^4p_j^{(\alpha)}}\leq 1$. From \eqref{eqn:thm11}, an optimal guessing strategy $P^*_{\hat{X}_1\hat{X}_2\hat{X}_3}$ is such that the adversary always guesses $x_i$ in one of the three guesses with a probability $1$, i.e., $\mathrm{P}^*(\cup_{j=1}^{{\blue 3}}(\hat{X}_j=x_i))=1$, for $i\in[1:2]$, and guesses $x_i$ with a probability proportional to $p_i^{(\alpha)}$, i.e., $\mathrm{P}^*(\cup_{j=1}^{{\blue 3}}(\hat{X}_j=x_i)){\blue =\frac{p_i^{\alpha}}{\sum_{i=3}^4p_i^{(\alpha)}}}=\frac{1}{2}$, for $i\in[3:4]$. 
\end{example}
The following two corollaries of Theorem~\ref{thm:minloss-alpha-kguesses} give the expressions for the minimal expected log-loss ($\alpha=1$) for $k$ guesses and the minimal expected $0$-$1$ loss ($\alpha=\infty$) for $k$ guesses, respectively.
\begin{corollary}[Minimal expected log-loss \{$\alpha=1$\} for $k$ guesses]\label{corollary1}
Under the notations of Theorem~\ref{thm:minloss-alpha-kguesses}, the minimal expected log-loss for $k$ guesses is given by
\begin{align}
    \mathcal{ME}^{(k)}_1(P_X)&= H(X)-H_{s^*}\left(p_1,p_2,\dots,p_{{s^*}-1},\sum_{i={s^*}}^np_i\right)\nonumber\\
    &\hspace{12pt}-\left(\sum\limits_{i={s^*}}^np_i\right)\log{(k-{s^*}+1)},
\end{align}
where $s^*=\min\left\{r\in\{1,2,\ldots,k\}: \frac{(k-r+1)p_r}{\sum_{i=r}^np_i}\leq 1\right\}$
and $H_{s^*}(q_1,q_2,\dots,q_{s^*}):=\sum_{i=1}^{s^*}q_i\log{\frac{1}{q_i}}$ is the entropy function.
\end{corollary}
{\blue{The proof of Corollary~\ref{corollary1} follows by taking limit $\alpha\rightarrow 1$ using L'H\^{o}pital's rule in the result of Theorem~\ref{thm:minloss-alpha-kguesses} and rearranging the terms.}} 
\begin{corollary}[Minimal expected $0$-$1$ loss \{$\alpha=\infty$\} for $k$ guesses]\label{corollary2}
Under the notations of Theorem~\ref{thm:minloss-alpha-kguesses}, the minimal expected $0$-$1$ loss for $k$ guesses is given by
\begin{align}
   \mathcal{ME}^{(k)}_\infty(P_X)
    &=1-\sum_{i=1}^kp_i\nonumber\\
    &=1-\max_{\substack{a_1,a_2,\dots,a_k:\\ a_l\neq a_m,l\neq m}}\sum_{i=1}^kP_X(a_i).
\end{align}
\end{corollary}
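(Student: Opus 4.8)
The plan is to obtain Corollary~\ref{corollary2} by passing to the limit $\alpha\to\infty$ in the closed-form expression of Theorem~\ref{thm:minloss-alpha-kguesses}, exactly as one would derive Corollary~\ref{corollary1} for $\alpha=1$. Throughout, recall $p_1\geq p_2\geq\cdots\geq p_n$. The first task is to pin down the limiting value of the threshold index $s^*$ defined in \eqref{eqn:thm1sstar}. Since $\frac{p_r^\alpha}{\sum_{i=r}^n p_i^\alpha}\to \frac{1}{m_r}$ as $\alpha\to\infty$, where $m_r:=|\{i\geq r: p_i=p_r\}|$ is the multiplicity of the largest mass in the tail starting at $r$, the defining inequality $\frac{(k-r+1)p_r^\alpha}{\sum_{i=r}^n p_i^\alpha}\leq 1$ converges to $k-r+1\leq m_r$. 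I would show that this forces $s^*$ to stabilize, for all sufficiently large $\alpha$, to the smallest index $a$ with $p_a=p_k$, i.e. the first index of the tie block containing position $k$; indeed, that block satisfies the limiting inequality because $k\leq b$ (its last index), while every earlier block, having $b'<a\leq k$, fails it.

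With $s^*=a$ fixed, I would evaluate the limit of \eqref{eqn:thm11} term by term, using $\frac{\alpha}{\alpha-1}\to 1$ and $\frac{\alpha-1}{\alpha}\to 1$. Writing $c:=k-a+1$ and $m:=m_a$, the indices split into two groups. For $i$ inside the tie block (so $p_i=p_a$), the ratio $\frac{(k-a+1)p_i^\alpha}{\sum_{j=a}^n p_j^\alpha}\to \frac{c}{m}\leq 1$, so each such term tends to $p_a\bigl(1-\frac{c}{m}\bigr)$ and the $m$ of them sum to $(m-c)p_a=(b-k)p_a$. For $i$ beyond the block ($p_i<p_a$) the ratio tends to $0$ and the term tends to $p_i$, contributing $\sum_{i=b+1}^n p_i$. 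Summing and using $1=\sum_{i=1}^{a-1}p_i+m\,p_a+\sum_{i=b+1}^n p_i$ together with $\sum_{i=1}^k p_i=\sum_{i=1}^{a-1}p_i+c\,p_a$ gives $(b-k)p_a+\sum_{i=b+1}^n p_i=1-\sum_{i=1}^k p_i$, which is the claimed value.

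The main obstacle is the careful bookkeeping of ties: $s^*$ is defined by an inequality at each finite $\alpha$, so one must verify that it does not oscillate, that it converges to $a$ (here it helps that when $c/m=1$ the finite-$\alpha$ ratio is already strictly below $1$, so no boundary ambiguity arises), and that the power terms whose bases tend to exactly $\frac{c}{m}$ are handled correctly rather than collapsing to $0$ or $1$. A cleaner route that sidesteps these subtleties is to solve the $\alpha=\infty$ problem directly: since $\ell_\infty(p)=1-p$ is affine, \eqref{eqn:optimizationproblem} becomes $1-\max_{P_{\hat{X}_{[1:k]}}}\sum_x P_X(x)\,q_x$ with $q_x:=\mathrm{P}(\cup_{i=1}^k(\hat{X}_i=x))$. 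By the Farkas'-lemma achievability invoked in the Remark after Theorem~\ref{thm:minloss-alpha-kguesses}, the vectors $(q_x)$ range exactly over $\{q_x\in[0,1]:\sum_x q_x=k\}$, so the inner problem is a uniform-weight fractional knapsack whose optimum puts $q_x=1$ on the $k$ largest masses $x_1,\dots,x_k$, yielding $\sum_{i=1}^k p_i$.

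Finally, the second equality is immediate: among all choices of $k$ distinct symbols $a_1,\dots,a_k$, the sum $\sum_{i=1}^k P_X(a_i)$ is maximized by selecting the $k$ most likely outcomes, whose masses are precisely $p_1,\dots,p_k$, so $\max\sum_{i=1}^k P_X(a_i)=\sum_{i=1}^k p_i$.
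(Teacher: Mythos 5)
Your proposal is correct. The paper gives no written proof of Corollary~\ref{corollary2}; it is presented as a direct specialization of Theorem~\ref{thm:minloss-alpha-kguesses} at $\alpha=\infty$, which is exactly your primary route, and your bookkeeping is right: for all sufficiently large $\alpha$ the index $s^*$ stabilizes at the first index $a$ of the tie block containing position $k$ (the condition at $r=a$ in fact holds for every finite $\alpha$, since $\sum_{j=a}^n p_j^\alpha\geq (b-a+1)p_a^\alpha\geq(k-a+1)p_a^\alpha$, while each $r<a$ eventually fails), and the termwise limit $(b-k)p_a+\sum_{i=b+1}^n p_i=1-\sum_{i=1}^k p_i$ checks out. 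Your alternative direct argument --- observing that $\ell_\infty(p)=1-p$ is affine, so by Lemma~\ref{fact:alphaloss2noneq}, Remark~\ref{remark}, and Lemma~\ref{theorem:probclass1} the problem reduces to maximizing $\sum_x P_X(x)q_x$ over $\{0\leq q_x\leq 1,\ \sum_x q_x= k\}$, a fractional knapsack solved by the $k$ largest masses --- is arguably the cleaner derivation, since it works at $\alpha=\infty$ exactly as defined in \eqref{eqn:optimizationproblem} and avoids having to justify the interchange of the limit $\alpha\to\infty$ with the minimization; it also recovers the optimal strategy of Issa \emph{et al.} directly. The only cosmetic imprecision is your remark that when $c/m=1$ the finite-$\alpha$ ratio is strictly below $1$: if $b=n$ it equals $1$ exactly, but the defining condition is a non-strict inequality, so nothing breaks.
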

{\blue{The proof of Corollary~\ref{corollary2} follows by taking limit $\alpha\rightarrow \infty$ in Theorem~\ref{thm:minloss-alpha-kguesses}}}. Interestingly, the polychotomy induced by minimal expected $\alpha$-loss for $\alpha\in(0,\infty)$ collapses for the case of $\alpha=\infty$~\cite{Issaetal} as clear from Corollary~\ref{corollary2}.\\

\subsection{$\alpha$-Leakage and Maximal $\alpha$-Leakage Under Multiple Guesses}\label{subsection:alphaleaka-multguesses}
{\blue{Notice that minimizing the expected $\alpha$-loss in \eqref{eqn:optimizationproblem} amounts to maximizing the expected gain for the gain function $g(t)=t^{\frac{\alpha-1}{\alpha}}$, $t\in(0,1)\cup(1,\infty)$.}} Motivated by $\alpha$-leakage~\cite[Definition~5]{LiaoKS20} (that is defined based on this gain function) which captures how much information an adversary can learn about a random variable $X$ from a correlated random variable $Y$ when a single guess is allowed, {\blue{we define $\alpha$-leakage under multiple guesses that captures the information an adversary can learn when $k$ guesses are allowed.}} 
\begin{definition}[$\alpha$-leakage under $k$ guesses]
{\blue{Given a joint distribution $P_{XY}$, the $\alpha$-leakage from $X$ to $Y$ under $k$ guesses is defined as}}
\begin{align}\label{defn:kalphaleakage}
    &\mathcal{L}^{(k)}_\alpha(X\rightarrow Y)\nonumber\\
    &\triangleq \frac{\alpha}{\alpha-1}\log{\frac{\max\limits_{P_{\hat{X}_{[1:k]}|Y}}\mathbb{E}\left[{\nblue{\frac{\alpha}{\alpha-1}}}\mathrm{P}\left(\bigcup\limits_{i=1}^k(\hat{X}_i=X)|Y\right)^{\frac{\alpha-1}{\alpha}}\right]}{\max\limits_{P_{\hat{X}_{[1:k]}}}\mathbb{E}\left[{\nblue{\frac{\alpha}{\alpha-1}}}\mathrm{P}\left(\bigcup\limits_{i=1}^k(\hat{X}_i=X)\right)^{\frac{\alpha-1}{\alpha}}\right]}},
\end{align}
{\blue where $\hat{X}_1,\hat{X}_2,\dots,\hat{X}_k$ represent $k$ estimators of $X$ with the same support as $X$, for $\alpha\in(0,1)\cup (1,\infty)$, and by the continuous extension of \eqref{defn:kalphaleakage} for $\alpha=1$ and $\alpha=\infty$}.
\end{definition}
{\blue{We call maximal $g$-leakage under $k$ guesses when specialized to the gain function $g(t)=t^{\frac{\alpha-1}{\alpha}}$, $t\in(0,1)\cup(1,\infty)$, as maximal $\alpha$-leakage under $k$ guesses.}}
\begin{definition}[Maximal $\alpha$-leakage under $k$ guesses]
Given a joint distribution $P_{XY}$, the maximal $\alpha$-leakage from $X$ to $Y$ under $k$ guesses is defined as 
\begin{align}\label{eqn:maxalphaleakage-multiguesses}
     \mathcal{L}^{(k)-\emph{max}}_\alpha(X\rightarrow Y)\triangleq\sup_{U-X-Y}\mathcal{L}^{(k)}_\alpha(U\rightarrow Y),
\end{align}
for {\blue{$\alpha\in(0,1)\cup(1,\infty)$}}.
\end{definition}
 
Let $P_{X|Y=y}^{(\alpha)}$ denote the tilted distribution of $P_{X|Y=y}$, i.e., $P_{X|Y}^{(\alpha)}(x|y)=\frac{P_{X|Y}(x|y)^\alpha}{\sum_xP_{X|Y}(x|y)^\alpha}$.
\begin{theorem}[Robustness of $\alpha$-leakage to number of guesses]\label{thm:robustness-of-alphaleakage}
Consider a $P_{XY}$ such that $P_{X|Y}^{(\alpha)}(x|y)\leq \frac{1}{k}$ and $P_{X}^{(\alpha)}(x)\leq \frac{1}{k}$, for all $x,y$, and {\blue{for $\alpha\in(0,1)\cup(1,\infty)$}}. Then
\begin{align}
    \mathcal{L}^{(k)}_\alpha(X\rightarrow Y)=\mathcal{L}^{(1)}_\alpha(X\rightarrow Y).
\end{align}
\end{theorem}
{\blue{For $k=1$, recall from Theorem~\ref{thm:minloss-alpha-kguesses} that in the optimal guessing strategy, the adversary guesses $x$ with a probability that is equal to the tilted probability value $P^{(\alpha)}_X(x)$, $x\in\mathcal{X}$. For $k>1$, when the condition $P_{X}^{(\alpha)}(x)\leq \frac{1}{k}$, for all $x$, holds, it follows from Theorem~\ref{thm:minloss-alpha-kguesses} that the adversary can apply essentially the same guessing strategy as with $k=1$ in the following sense. In particular, when this condition holds, in the optimal guessing strategy, the adversary can still guess $x$ in one of the $k$ guesses with a probability that is proportional to tilted probability $P^{(\alpha)}_X(x)$, i.e., $\mathrm{P}^*(\cup_{j=1}^{k}(\hat{X}_j=x))=kP_X^{(\alpha)}(x)$, $x\in\mathcal{X}$. Moreover, for the existence of a strategy to satisfy this equality, the condition $P_{X}^{(\alpha)}(x)\leq \frac{1}{k}$, for all $x,$ is a necessary and sufficient condition. This is because if $P^{(\alpha)}_X(x)>\frac{1}{k}$, for some $x\in\mathcal{X}$, then $\mathrm{P}^*(\cup_{j=1}^{k}(\hat{X}_j=x))$ cannot be equal to $kP^{(\alpha)}_X(x)>1$.}}
A detailed proof of Theorem~\ref{thm:robustness-of-alphaleakage} is in Appendix~\ref{proofofthm3}. It is shown in~\cite[Theorem~4]{Issaetal} that maximal leakage, i.e., maximal $\alpha$-leakage with $\alpha=\infty$, does not change with the number of guesses, i.e., $\mathcal{L}^{(k)-\text{max}}_\infty(X\rightarrow Y)=  \mathcal{L}^{(1)-\text{max}}_\infty(X\rightarrow Y)$. The proof of this result does not directly generalize to any arbitrary $\alpha\in(0,\infty]$. This is mainly due to the fact that unlike the case for $\alpha=\infty$, the minimal expected $\alpha$-loss for arbitrary $\alpha$ ($\neq\infty$) induces a polychotomy on the simplex of probability distributions. However, we are able to show that maximal $\alpha$-leakage under $k$ guesses is at least that of under a single guess where the proof relies on a non-trivial observation about the optimizer in \eqref{eqn:maxalphaleakage-multiguesses} and it also uses Theorem~\ref{thm:robustness-of-alphaleakage}. 

\begin{theorem}[Lower Bound on Maximal $\alpha$-leakage Under $k$ guesses]\label{thm:lowerbound-maxalphaleakage-kguesses}
Given a joint probability distribution $P_{XY}$ on finite alphabet $\mathcal{X}\times\mathcal{Y}$, we have
\begin{align}\label{eqn:maxalphaleakmult-atleast-single}
    \mathcal{L}^{(k)-\emph{max}}_\alpha(X\rightarrow Y)\geq  \mathcal{L}^{(1)-\emph{max}}_\alpha(X\rightarrow Y),
\end{align}
{\blue{for $\alpha\in(0,1)\cup(1,\infty)$}}.
\end{theorem}
A detailed proof of Theorem~\ref{thm:lowerbound-maxalphaleakage-kguesses} is in Appendix~\ref{proofofthm4}. We do not know if the reverse direction of \eqref{eqn:maxalphaleakmult-atleast-single} is true, i.e., if $\mathcal{L}^{(k)-\text{max}}_\alpha(X\rightarrow Y)\leq  \mathcal{L}^{(1)-\text{max}}_\alpha(X\rightarrow Y)$, in general.

\section{A Variational Formula for $\infty$-R\'{e}nyi divergence with Applications to Information Leakage}\label{section:variational}
{\blue{In the previous two sections, we examined information leakage measures, in particular, maximal $g$-leakage, where the adversary is allowed to make multiple guesses. In this section, we study some variants of maximal $g$-leakage depending on the type of adversary by obtaining a variational characterization of R\'{e}nyi divergence of order $\infty$. Such a characterization adds to a growing set of information measures written in a variational form, which may be of independent interest. We focus on the scenario where the adversary can make only a single guess in this section.}} In particular, we consider maximal expected gains of an adversary in separately guessing randomized functions of $X$ and our variational characterization is in terms of the ratio of these maximal expected gains. 
\begin{theorem}[A variational characterization for $D_\infty(\cdot||\cdot)$]\label{theorem:main-varchar}
{\blue{Let $P_X$ and $Q_X$ be two probability distributions on a finite alphabet $\mathcal{X}$ and $g:[0,1]\rightarrow [0,\infty)$ be a gain function  satisfying the following assumptions:}}
\begin{itemize}
    \item $g(0)=0$ and $g$ is continuous at 0,
    \item $0<\sup_{p\in[0,1]}g(p)<\infty$.
\end{itemize}
Then, we have
\begin{align}\label{eqn:variationalmain}
    D_\infty(P_X||Q_X)=\sup_{P_{U|X}}\log{\frac{\sup_{P_{\hat{U}}}\mathbb{E}_{U\sim P_U}\left[g(P_{\hat{U}}(U))\right]}{\sup_{P_{\hat{U}}}\mathbb{E}_{U\sim Q_U}\left[g(P_{\hat{U}}(U))\right]}},
\end{align}
where $P_U(u)=\sum_xP_X(x)P_{U|X}(u|x)$ and $Q_U(u)=\sum_xQ_X(x)P_{U|X}(u|x)$.
\end{theorem}
\begin{remark}\label{remark:log}
{\blue{Interestingly, there are non-positive gain functions {\blue{also}} for which while the conditions in Theorem~\ref{theorem:main-varchar} are not satisfied, \eqref{eqn:variationalmain} still holds.}} For example, $g(t)=\log{t}$ is one such function (see Appendix~\ref{appendix-loggain} for details). 
\end{remark}
A detailed proof of Theorem~\ref{theorem:main-varchar} is in Appendix~\ref{proofofthm5}. The numerator and the denominator in the ratio in \eqref{eqn:variationalmain} capture the maximal expected gains in guessing an unknown random variable $U$ distributed according to $P_U$ or $Q_U$, respectively. In a way, this ratio compares the distributions $P_X$ and $Q_X$ and is certainly dependent on the gain function $g$. However, our variational characterization in \eqref{eqn:variationalmain} shows that this ratio when optimized over all the channels $P_{U|X}$ remains constant irrespective of the gain function used, i.e., \eqref{eqn:variationalmain} holds for a broad class of gain functions. Some examples of gain function $g$ that satisfy the conditions in Theorem~\ref{theorem:main-varchar} are
\begin{align*}
    g(p)=p^2,\ \mathbbm{1}\{p=1/2\}, \ \frac{\alpha}{\alpha-1}p^{\frac{\alpha-1}{\alpha}} \ \text{where}\ \alpha\in(1,\infty).
\end{align*}
We obtain the following corollary from Theorem~\ref{theorem:main-varchar} by substituting the latter gain function 
$g_\alpha(t)=\frac{\alpha}{\alpha-1}t^{\frac{\alpha-1}{\alpha}}, \ \text{where}\ \alpha\in(1,\infty)$
(related to a class of adversarial loss functions, namely, $\alpha$-loss~\cite{LiaoKS20}) and using \cite[Lemma~1]{LiaoKS20} which gives closed-form expressions for the corresponding optimization problems in the numerator and the denominator in \eqref{eqn:variationalmain}.
\begin{corollary}\label{corollary:var3}
Given two probability distributions $P_X$ and $Q_X$ on a finite alphabet $\mathcal{X}$, we have, for $\alpha\in(1,\infty)$,
\begin{align}\label{prop1eq}
    D_\infty(P_X||Q_X)=\sup_{P_{U|X}}\log\frac{\left(\sum_uP_U(u)^\alpha\right)^\frac{1}{\alpha}}{\left(\sum_uQ_U(u)^\alpha\right)^\frac{1}{\alpha}},
\end{align}
where $P_U(u)=\sum_xP_X(x)P_{U|X}(u|x)$ and $Q_U(u)=\sum_xQ_X(x)P_{U|X}(u|x)$.
\end{corollary}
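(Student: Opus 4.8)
The plan is to obtain this corollary as a direct specialization of the general variational characterization in Theorem~\ref{theorem:main-varchar}, rather than proving anything from scratch. The idea is to plug the single gain function $g_\alpha(t)=\frac{\alpha}{\alpha-1}t^{\frac{\alpha-1}{\alpha}}$ into \eqref{eqn:variationalmain} and then evaluate the two inner optimization problems (numerator and denominator) in closed form, so that the only remaining work is an algebraic simplification of the resulting ratio.

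First I would check that $g_\alpha$ is admissible for Theorem~\ref{theorem:main-varchar}. Since $\alpha\in(1,\infty)$, the exponent $\frac{\alpha-1}{\alpha}$ lies in $(0,1)$, so $g_\alpha$ is continuous on $[0,1]$ with $g_\alpha(0)=0$, and $\sup_{p\in[0,1]}g_\alpha(p)=g_\alpha(1)=\frac{\alpha}{\alpha-1}$, which is strictly positive and finite. Hence both hypotheses of the theorem hold, and applying it with $g=g_\alpha$ yields
\[
D_\infty(P_X||Q_X)=\sup_{P_{U|X}}\log\frac{\sup_{P_{\hat{U}}}\mathbb{E}_{U\sim P_U}[g_\alpha(P_{\hat{U}}(U))]}{\sup_{P_{\hat{U}}}\mathbb{E}_{U\sim Q_U}[g_\alpha(P_{\hat{U}}(U))]}.
\]

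Next I would evaluate each inner supremum using \cite[Lemma~1]{LiaoKS20}. For a fixed distribution $P_U$, the optimal estimator tilts toward the $\alpha$-th power, $P_{\hat{U}}^*(u)\propto P_U(u)^\alpha$, and substituting it back gives $\sup_{P_{\hat{U}}}\mathbb{E}_{U\sim P_U}[g_\alpha(P_{\hat{U}}(U))]=\frac{\alpha}{\alpha-1}\big(\sum_u P_U(u)^\alpha\big)^{1/\alpha}$, with the identical expression holding for $Q_U$. Inserting both into the ratio, the common prefactor $\frac{\alpha}{\alpha-1}$ cancels, leaving $\frac{(\sum_u P_U(u)^\alpha)^{1/\alpha}}{(\sum_u Q_U(u)^\alpha)^{1/\alpha}}$ inside the logarithm, which is precisely \eqref{prop1eq}.

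There is essentially no technical obstacle here: the whole content of the corollary is the substitution together with the closed-form evaluation imported from \cite[Lemma~1]{LiaoKS20}. The only point requiring a moment of care is the admissibility check above, since (as observed in the remark following Theorem~\ref{thm:maximalg-leakage-concaveg}) $g_\alpha$ has $g_\alpha'(0)=\infty$ and therefore fails the stronger derivative condition needed for that theorem. Because Theorem~\ref{theorem:main-varchar} asks only for continuity at $0$ and a finite, positive supremum, this weaker requirement is met, and the derivation goes through unchanged.
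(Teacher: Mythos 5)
Your proposal is correct and follows exactly the route the paper takes: substituting $g_\alpha(t)=\frac{\alpha}{\alpha-1}t^{\frac{\alpha-1}{\alpha}}$ into Theorem~\ref{theorem:main-varchar} and invoking \cite[Lemma~1]{LiaoKS20} for the closed-form evaluation of the inner suprema. Your explicit check that $g_\alpha$ satisfies the (weaker) hypotheses of Theorem~\ref{theorem:main-varchar} despite $g_\alpha'(0)=\infty$ is a nice touch that the paper only covers implicitly by listing $g_\alpha$ among its admissible examples.
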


As mentioned earlier, we note that the existing variational characterizations for $D_\alpha(\cdot\|\cdot)$ (with finite $\alpha$)  also give rise to variational characterizations for $D_\infty(\cdot\|\cdot)$ by taking limit $\alpha\rightarrow \infty$. Shayevitz~\cite{Shayevitz11} and Birrell~\emph{et al.}~\cite{BirrellDKRW21} proved that 
\begin{align}
    D_\alpha(P_X\|Q_X)=\sup_{R_X:R_X\ll P_X}(D(R_X\|Q_X)-\frac{\alpha D(R_X\|P_X)}{\alpha-1}),\nonumber\\
   \label{eqn:varshayvitz}
    \end{align}
    for $\alpha>1$, and
\begin{align}
   &D_\alpha(P_X\|Q_X)\nonumber\\
   & =\sup_{g:\mathcal{X}\rightarrow \mathbb{R}}\left(\frac{\alpha\log{\mathbb{E}_{X\sim Q_X}\mathrm{e}^{(\alpha-1)g(X)}}}{\alpha-1}-\log{\mathbb{E}_{X\sim P_X}\mathrm{e}^{\alpha g(X)}}\right),\label{eqn:varBirrell}
\end{align}
{for} $\alpha\in\mathbb{R}\setminus\{0,1\}$, respectively. {\blue{More general forms of \eqref{eqn:varshayvitz} and an equivalent form of \eqref{eqn:varBirrell} appear in \cite{VanH14,Sason16,Anantharam} and \cite{atar2015robust}, respectively.}} One can obtain the variational characterizations for $D_\infty(\cdot\|\cdot)$ by taking limit $\alpha\rightarrow \infty$ in \eqref{eqn:varshayvitz} and assuming interchangeability of the limit and the supremum; 
one can similarly do so, in \eqref{eqn:varBirrell}, using a change of variable $f=\mathrm{e}^{\alpha g}$ and assuming interchangeability of the limit and the supremum. For the sake of completeness and rigor, we summarize the resulting variational forms for $D_\infty(\cdot\|\cdot)$ in the following proposition and present a proof in Appendix~\ref{appendix1}.
\begin{proposition}\label{proposition}
Given two probability distributions $P_X$ and $Q_X$ on a finite alphabet $\mathcal{X}$, we have
\begin{align}
     D_{\infty}(P_X\|Q_X)&=\sup_{R_X:R_X\ll P_X} \left(D(R_X\|Q_X)-D(R_X\|P_X)\right),\label{eqn:varinf-Anantharam}\\
      D_\infty(P_X\|Q_X)&= \sup_{f:\mathcal{X}\rightarrow[0,\infty)}\log\frac{\mathbb{E}_{X\sim P_X}[f(X)]}{\mathbb{E}_{X\sim Q_X}[f(X)]}\label{eqn:varinf-Birrell}.
\end{align}
\end{proposition}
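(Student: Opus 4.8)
The plan is to prove both identities directly from the closed form $D_\infty(P_X\|Q_X)=\max_{x\in\mathcal{X}}\log\frac{P_X(x)}{Q_X(x)}$ recorded above, rather than passing to the limit $\alpha\to\infty$ in \eqref{eqn:varshayvitz}--\eqref{eqn:varBirrell}; a direct argument sidesteps the interchangeability-of-limits assumption and keeps the proof self-contained. Throughout I treat the regular case $P_X\ll Q_X$ first, so that the maximum above is finite, and then dispatch the degenerate case $P_X\not\ll Q_X$, where $D_\infty=+\infty$, by exhibiting a feasible point that drives each supremum to $+\infty$.

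For \eqref{eqn:varinf-Anantharam}, the first step is the algebraic observation that for any $R_X\ll P_X$,
\begin{align*}
D(R_X\|Q_X)-D(R_X\|P_X)=\sum_{x}R_X(x)\log\frac{P_X(x)}{Q_X(x)}=\mathbb{E}_{X\sim R_X}\left[\log\frac{P_X(X)}{Q_X(X)}\right],
\end{align*}
which is well defined because $R_X\ll P_X$ guarantees that $R_X(x)>0$ implies $P_X(x)>0$. The right-hand side is a linear functional of $R_X$ on the simplex of distributions supported on $\mathrm{supp}(P_X)$, so its supremum is attained at an extreme point, i.e., a point mass $\delta_{x^\ast}$ with $x^\ast\in\mathrm{supp}(P_X)$. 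Choosing $x^\ast\in\argmax_{x:P_X(x)>0}\log\frac{P_X(x)}{Q_X(x)}$ yields exactly $\max_{x}\log\frac{P_X(x)}{Q_X(x)}=D_\infty(P_X\|Q_X)$, while $\sum_x R_X(x)\log\frac{P_X(x)}{Q_X(x)}\le D_\infty(P_X\|Q_X)$ for every feasible $R_X$, giving both directions. If $P_X\not\ll Q_X$, pick $x^\ast$ with $P_X(x^\ast)>0=Q_X(x^\ast)$; then $R_X=\delta_{x^\ast}$ is feasible and the difference of divergences is $+\infty$, matching $D_\infty=+\infty$.

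For \eqref{eqn:varinf-Birrell}, the upper bound follows by factoring the likelihood ratio out of the numerator: for any $f:\mathcal{X}\to[0,\infty)$,
\begin{align*}
\mathbb{E}_{X\sim P_X}[f(X)]=\sum_x\frac{P_X(x)}{Q_X(x)}Q_X(x)f(x)\le\left(\max_x\frac{P_X(x)}{Q_X(x)}\right)\mathbb{E}_{X\sim Q_X}[f(X)]=e^{D_\infty(P_X\|Q_X)}\,\mathbb{E}_{X\sim Q_X}[f(X)],
\end{align*}
so $\log\frac{\mathbb{E}_{P_X}[f(X)]}{\mathbb{E}_{Q_X}[f(X)]}\le D_\infty(P_X\|Q_X)$. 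Achievability comes from the indicator $f=\mathbbm{1}\{x=x^\ast\}$ at the maximizing $x^\ast$, for which the ratio equals $P_X(x^\ast)/Q_X(x^\ast)=e^{D_\infty(P_X\|Q_X)}$. The degenerate case is handled by the same indicator at an $x^\ast$ with $Q_X(x^\ast)=0<P_X(x^\ast)$, which makes the denominator vanish and the ratio $+\infty$.

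I expect the only real care to be bookkeeping around absolute continuity and the $0/0$ and $p/0$ conventions in the relative-entropy and log-ratio terms, in particular verifying that the decomposition in the first identity remains valid (possibly as $+\infty=+\infty$) when $R_X$ charges a point outside $\mathrm{supp}(Q_X)$. The substantive content, namely that a linear functional over the simplex is maximized at a vertex and that the likelihood ratio factors out of an expectation, is elementary.
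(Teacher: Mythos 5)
Your proposal is correct and follows essentially the same route as the paper's proof in Appendix~B: the same decomposition $D(R_X\|Q_X)-D(R_X\|P_X)=\sum_x R_X(x)\log\frac{P_X(x)}{Q_X(x)}$ bounded by its maximum and attained at a point mass, and the same indicator function $f=\mathbbm{1}\{x=x^\ast\}$ together with the bound $\frac{\sum_i a_i}{\sum_i b_i}\le\max_i\frac{a_i}{b_i}$ for the second identity. Your explicit treatment of the degenerate case $P_X\not\ll Q_X$ is a small addition the paper omits, but it does not change the substance of the argument.
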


Motivated by Issa~\emph{et al.}~\cite[Definitions~2 and 8]{Issaetal}, we define the following variants of maximal $g$-leakage (see Definition~\ref{def:maximal-g-leakage}) depending on the type of adversary. In particular, note that the definition of maximal $g$-leakage corresponds to an adversary interested in a \emph{single} randomized function of $X$. However, in some scenarios, the adversary could choose the guessing function depending on the realization of $Y$, leading to the following definition.

\begin{definition}[Opportunistic maximal $g$-leakage]\label{def:opp-g-leakage}
{\blue{Given a gain function $g:[0,1]\rightarrow [0,\infty)$ and a joint probability distribution $P_{XY}$ on a finite alphabet $\mathcal{X}\times\mathcal{Y}$, the opportunistic maximal $g$-leakage is defined as}} 
\begin{align}\label{opp-g-leakage}
    \tilde{\mathcal{L}}_g^{\emph{max}}&(X\rightarrow Y)=
\log \sum_{y\in\emph{supp}(Y)}P_Y(y)\nonumber\\
&\sup_{U:U-X-Y}\frac{\sup_{P_{\hat{U}|Y=y}}\mathbb{E}_{U|Y=y}\left[g(P_{\hat{U}|Y}(U|y))\right]}{\sup_{P_{\hat{U}}}\mathbb{E}\left[g(P_{\hat{U}}(U))\right]}.
\end{align}
\end{definition}

Maximal $g$-leakage captures the \emph{average} guessing performance of the adversary over all the realizations of $Y$. In some contexts, it might be more relevant to consider \emph{maximum} instead of the average.  
\begin{definition}[Maximal realizable $g$-leakage]\label{def:max-rel-leakage}
{\blue{Given a gain function $g:[0,1]\rightarrow [0,\infty)$ and a joint probability distribution $P_{XY}$ on a finite alphabet $\mathcal{X}\times\mathcal{Y}$, the opportunistic maximal $g$-leakage is defined as}}  
\begin{align}\label{max-rel-gleakage}
    &{\mathcal{L}}_g^{\emph{r-max}}(X\rightarrow Y)=
\sup_{U:U-X-Y}\\
&\log\frac{\max_{y\in\emph{supp}(Y)}\sup_{P_{\hat{U}|Y=y}}\mathbb{E}_{U|Y=y}\left[g(P_{\hat{U}|Y}(U|y))\right]}{\sup_{P_{\hat{U}}}\mathbb{E}\left[g(P_{\hat{U}}(U))\right]}.
\end{align}
\end{definition}

When $g(t)=t$, note that the Definitions~\ref{def:opp-g-leakage} and \ref{def:max-rel-leakage} simplify to those of opportunistic maximal leakage~\cite[Definition~2]{Issaetal} and maximal realizable leakage~\cite[Definition~8]{Issaetal}, respectively. Unlike the expressions for maximal $g$-leakage (e.g., for $g(t)=t$, $g(t)=\frac{\alpha}{\alpha-1}t^{\frac{\alpha-1}{\alpha}}$), interestingly, it turns out that the closed-form expressions for the opportunistic maximal $g$-leakage and maximal realizable $g$-leakage do not depend on the particular gain function $g$ as long as it satisfies some mild regularity conditions. This is a consequence of the robustness of our variational characterization to gain function (Theorem~\ref{theorem:main-varchar}).   
\begin{corollary}[Opportunistic maximal, and maximal realizable $g$-leakages]\label{thm:alphaleakage-variants}
Let $g:[0,1]\rightarrow [0,\infty)$ be a function satisfying the following assumptions:
\begin{itemize}
    \item $g(0)=0$ and $g$ is continuous at 0,
    \item $0<\sup_{p\in[0,1]}g(p)<\infty$.
\end{itemize}
Then the opportunistic maximal $g$-leakage and maximal realizable $g$-leakage defined in \eqref{opp-g-leakage} and \eqref{max-rel-gleakage}, respectively, simplify to
\begin{align}
    \tilde{\mathcal{L}}_g^{\emph{max}}(X\rightarrow Y)&=I_{\infty}^{\text{S}}(X;Y),\\
     {\mathcal{L}}_g^{\emph{r-max}}(X\rightarrow Y)&=D_\infty(P_{XY}\|P_X\times P_Y).
\end{align}
\end{corollary}

{\blue{Note that the intuitive interpretation given for $g(0)=0$ below Theorem~\ref{thm:maximalg-leakage-concaveg} holds for Corollary~\ref{thm:alphaleakage-variants} also.}} A detailed proof of Corollary~\ref{thm:alphaleakage-variants} is in {\nblue{Appendix}}~\ref{proofofthm6}. When $g(t)=t$, {\nblue{Corollary}}~\ref{thm:alphaleakage-variants} recovers the expressions for opportunistic maximal leakage and maximal realizable leakage~\cite[Theorems~2 and 13]{Issaetal}. As another concrete example, consider Theorem~\ref{thm:alphaleakage-variants} for the corresponding variants of maximal $\alpha$-leakage with the gain function $g(t)=\frac{\alpha}{\alpha-1}t^{\frac{\alpha-1}{\alpha}}$, wherein the optimization problems in the numerators and the denominators of \eqref{opp-g-leakage} and \eqref{max-rel-gleakage} have closed-form expressions~\cite[Lemma~1]{LiaoKS20}. In particular, {\blue{for $\alpha\in(1,\infty)$}}, the opportunistic maximal $\alpha$-leakage is given by
\begin{align}\label{eqn:opp-alpha-leakage}
    &\frac{\alpha}{\alpha-1}\log \sum_{y\in\text{supp}(Y)}P_Y(y)\sup_{U:U-X-Y}\frac{\left(\sum_uP_{U|Y}(u|y)^\alpha\right)^\frac{1}{\alpha}}{\left(\sum_uP_U(u)^\alpha\right)^{\frac{1}{\alpha}}}\nonumber\\
    &=\frac{\alpha}{\alpha-1}I_{\infty}^{\text{S}}(X;Y)
\end{align}
and the maximal realizable $\alpha$-leakage is given by
\begin{align}
     &\sup_{U:U-X-Y}\frac{\alpha}{\alpha-1}\log\frac{\max_{y\in\text{supp}(Y)}\left(\sum_uP_{U|Y}(u|y)^\alpha\right)^{\frac{1}{\alpha}}}{\left(\sum_uP_U(u)^\alpha\right)^{\frac{1}{\alpha}}}\nonumber\\
     &=\frac{\alpha}{\alpha-1}D_\infty(P_{XY}\|P_X\times P_Y).
\end{align}
{\blue{Note that when $X$ and $Y$ are independent, the opportunistic maximal $(\alpha=1)$-leakage and  the maximal realizable $(\alpha=1)$-leakage defined by taking limit $\alpha\rightarrow 1$ are both equal to zero.}} {\blue{When $X$ and $Y$ are not independent,}} it can be inferred from the above expressions that opportunistic maximal $(\alpha=1)$-leakage and  maximal realizable $(\alpha=1)$-leakage are both equal to $\infty$. However, note that maximal $\alpha$-leakage as $\alpha\rightarrow 1$ is equal to Shannon channel capacity or Shannon mutual information depending on whether we define it using the supremum first and the limit next, or the limit first and the supremum next~\cite[Theorem~2]{LiaoKS20}. We show in Appendix~\ref{appendix:1-leakages} that the latter way of defining the opportunistic maximal $1$-leakage and the maximal realizable $1$-leakage also yields $\infty$.

 Another interesting consequence of our variational characterization is in its natural connection to the definition of pointwise maximal leakage, another measure of information leakage studied by Saeidian~\emph{et al.}~\cite{saeidian2022pointwise}. Pointwise maximal leakage captures the maximum multiplicative increase in the probability of correctly guessing \emph{any} randomized function of $X$ upon observing a single outcome $Y=y$. 
\begin{definition}[Pointwise maximal leakage~\cite{saeidian2022pointwise}]
Given a joint distribution $P_{XY}$ on a finite alphabet $\mathcal{X}\times\mathcal{Y}$ and a $y\in\mathcal{Y}$, the pointwise maximal leakage is defined as 
\begin{align}
    &\mathcal{L}^{\emph{pw-max}}(X\rightarrow y)\nonumber\\
    &=\sup_{U:U-X-Y}\log\frac{\sup_{P_{\hat{U}|Y=y}}\mathbb{E}_{U|Y=y}\left[P_{\hat{U}|Y}(U|y)\right]}{\sup_{P_{\hat{U}}}\mathbb{E}\left[P_{\hat{U}}(U)\right]}.
\end{align}
\end{definition}
Notice that pointwise maximal leakage is also defined in terms of an adversary interested in maximizing the expected value of the gain function $g(t)=t$ but it differs from the maximal leakage because the latter captures the average performance of the guessing adversary over all the realizations of $Y$. Saeidian~\emph{et al.}~\cite{saeidian2022pointwise} showed that 
\begin{align}
    \mathcal{L}^{\text{pw-max}}(X\rightarrow y)=D_\infty(P_{X|Y=y}\|P_X)
\end{align}
following the techniques in the proof of the closed-form expression for maximal realizable leakage~\cite[Theorem~13]{Issaetal}. 

An immediate consequence of our variational characterization is that the definition of the pointwise maximal leakage can be generalized by incorporating an adversary interested in maximizing the expected values of an arbitrary gain function $g$ that satisfies the mild regularity conditions mentioned in Corollary~\ref{theorem:main-varchar}.
\begin{corollary}[Pointwise maximal $g$-leakage]\label{thm:ptwisegleakage}
{\blue{Let $P_{XY}$ be a probability distribution on a finite alphabet $\mathcal{X}\times\mathcal{Y}$ and $g:[0,1]\rightarrow [0,\infty)$ be a gain function satisfying the following conditions:}}
\begin{itemize}
    \item $g(0)=0$ and $g$ is continuous at 0,
    \item $0<\sup_{p\in[0,1]}g(p)<\infty$.
\end{itemize}
Then, for $y\in\mathcal{Y}$, we have
\begin{align}\label{eqn:ptwisegleakage}
    \sup_{U:U-X-Y}\log\frac{\sup_{P_{\hat{U}|Y=y}}\mathbb{E}_{U|Y=y}\left[g(P_{\hat{U}|Y}(U|y))\right]}{\sup_{P_{\hat{U}}}\mathbb{E}\left[g(P_{\hat{U}}(U))\right]}\nonumber\\
    =D_\infty(P_{X|Y=y}\|P_X).
    \end{align}
\end{corollary}
The proof of this {\blue{corollary}} follows directly from the proof of Corollary~\ref{thm:alphaleakage-variants}. However, this result may be of separate interest in view of the following observations. The expression on the LHS in \eqref{eqn:ptwisegleakage} can be thought of as a new leakage measure, namely, pointwise maximal $g$-leakage (denoted by $\mathcal{L}_g^{\text{pw-max}}(X\rightarrow y)$) and Theorem~\ref{thm:ptwisegleakage} implies that pointwise maximal $g$-leakage is exactly the same as pointwise maximal leakage, i.e., $\mathcal{L}_g^{\text{pw-max}}(X\rightarrow y)=\mathcal{L}^{\text{pw-max}}(X\rightarrow y)$, for all $g$ satisfying some mild conditions. Moreover, adopting the gain function approach of Alvim \emph{et al.}~\cite{Alvimetal14}, Saeidian~\emph{et al.}{\blue{\cite{saeidian2022pointwise}}} showed that the multiplicative increase in the maximal expected value of the gain function (a function of the true value and the guessed value) in guessing $X$ after observing an outcomne $Y=y$ when maximized over all gain functions $g(x,\hat{x})$ is equal to the pointwise maximal leakage (\cite[Theorem~2 and Corollary~1]{saeidian2022pointwise}). On the other hand, Theorem~\ref{thm:ptwisegleakage} implies that such a multiplicative increase with gain function applied to the probability of correctly guessing a (randomized) function of $X$ when maximized over all the functions of $X$ is equal to pointwise maximal leakage, for \emph{any} gain function $g$ satisfying some mild conditions. Finally, we note that all the properties and privacy guarantees of pointwise maximal leakage studied in \cite{saeidian2022pointwise} (e.g., \emph{composition} and \emph{data-processing}) will straightaway generalize to pointwise maximal $g$-leakage.

\section{Conclusion}

{\nblue{We have proposed a gain function viewpoint of information leakage using arbitrary non-negative functions applied to the probability of correctly guessing. The primary benefit of restricting the gain functions considered by \cite{Alvimetal14,Issaetal} to the ones using the probability of correctness (as in Definition~1) is that it allows us to obtain closed-form expressions for maximal $g$-leakage for a class of concave gain functions (Theorem~1). We have shown that maximal $g$-leakage under multiple guesses is equal to Sibson mutual information of order infinity for a class of concave gain functions. This is in contrast to the corresponding results of \cite{Alvimetal14,Issaetal} where the worst-case scenario over all the gain functions is considered. Moreover, obtaining a closed-form expression for such leakage measures with a fixed gain function was conjectured to be challenging in \cite[Section~VI-A]{Alvimetal14} even when the threat model focuses on guessing $X$ itself (rather than a possibly randomized function of $X$). 

Even though the closed-form expression for maximal leakage is equal to that of min-capacity (i.e., the maximum min-entropy leakage) \cite{braun2009quantitative}, maximal leakage is important mainly because of the relaxation of assumptions about the adversary. In particular, in the threat model considered for maximal leakage, the adversary is interested in the (possibly randomized) function of $X$. In contrast, in the setup of min-capacity, the adversary is interested in $X$ itself. In the same vein, we extended the framework of maximal leakage to incorporate arbitrary functions of probability of correctness in the performance measure of the adversary. Thus, our setup actually leads to considering a larger class of adversaries because the gain functions in Theorem~\ref{thm:maximalg-leakage-concaveg} recover the probability of correctness as a special case. We also presented a variational characterization of R\'{e}nyi divergence of order infinity, which is naturally related to the pointwise version of maximal $g$-leakage. 

We also studied the setting in which the adversary is allowed multiple attempts in guessing by focusing on a specific tunable gain function. Such a setting has not been explored earlier even for log-loss (a special case of the tunable loss function considered here) which is extensively used in machine learning. We have proved that a new measure of divergence that belongs to the class of Bregman divergences captures the relative performance of an arbitrary adversarial strategy with respect to an optimal strategy in minimizing the expected $\alpha$-loss. 

All these results strengthen the connection between privacy leakage and the information measures -- Sibson mutual information and R\'{e}nyi divergence of infinite orders. We believe that these results are beneficial in applications where the adversary's performance is measured via generalized gain functions. In terms of privacy-utility tradeoff, a consequence of our results is that we can obtain the same utility even when we consider various privacy leakage measures with potentially different operational interpretations as per the adversary's performance measure of interest. Motivated by the results in this work, we anticipate that closed-form expressions for maximal $g$-leakage measures for more gain functions depending on specific applications will be explored further. There are many questions to be further studied. One limitation of our study is that we have considered only the gain functions that are non-negative (though we studied a particular non-positive gain function, $g(t)=\log{t}$ in Appendix~\ref{appendix-loggain}). It would be interesting to characterize maximal $g$-leakage for any general class of gain functions. We have shown that maximal $\alpha$-leakage under multiple guesses is at least that of with a single guess. It would be worth studying if the reverse direction is also true as in $\alpha=\infty$ (maximal leakage) case. }}

{\blue\section{Acknowledgment} Gowtham R. Kurri would like to thank Tyler Sypherd for helpful discussions on Bregman divergence and its connections to R\'{e}nyi divergence in Lemma~\ref{thm:expected-alpha-loss-singleguess} motivated via $\alpha$-loss.}
\appendices
\section{Proofs for Section~\ref{section:maximalg-leakage}}
\subsection{Proof of Theorem~\ref{thm:maximalg-leakage-concaveg}}\label{thm:maxgleakage2}
{\blue{
{\blue{The upper bound $\mathcal{L}_g^{(k)-\text{max}}(X\rightarrow Y)\leq I_{\infty}^{\emph{S}}(X;Y)$ follows from Proposition~\ref{prop:upperboundg}.}} We prove the lower bound now. For this, we use the {\blue{``shattering'' conditional distribution $P_{U|X}$~\cite[Proof of Theorem~1]{Issaetal},\cite[Proof of Theorem~5]{Liaoetal}.}} Let $\mathcal{U}=\cup_{x\in\mathcal{X}}\mathcal{U}_x$ (a disjoint union) and $|\mathcal{U}_x|=m_x$. Then define
\begin{align}\label{eqn:shatter}
    P_{U|X}(u|x)=\begin{cases}
    \frac{1}{m_x}, &u\in\mathcal{U}_x,\\
    0, &\text{otherwise}.
    \end{cases}
\end{align}
This gives 
\begin{align}
    P_U(u)&=P_X(x)/m_x, u\in\mathcal{U}_x,\label{eqn:newthm11}\\
    P_{U|Y}(u|y)&=P_{X|Y}(x|y)/m_x, u\in\mathcal{U}_x.
\end{align}
{\blue{To simplify the notation, let
\begin{align}
    \hat{P}_k(u)&:=\mathrm{P}\left(\bigcup\limits_{i=1}^k(\hat{U}_i=u)\right),\\
    \hat{P}_k(u|y)&:=\mathrm{P}\left(\bigcup\limits_{i=1}^k(\hat{U}_i=u)|Y=y\right).
\end{align}
}}
We upper bound the denominator of
{\blue\begin{align}\label{eqn:objmaxgleakge}
    \frac{\sup_{P_{\hat{U}_{[1:k]}|Y}}\mathbb{E}_{UY}\left[g(\hat{P}_k(U|Y))\right]}{\sup_{P_{\hat{U}_{[1:k]}}}\mathbb{E}_U\left[g(\hat{P}_k(U))\right]}
\end{align}}
as
\begin{align}
&\sup_{P_{\hat{U}_{[1:k]}}}\mathbb{E}_U\left[g(\hat{P}_k(U))\right]\nonumber\\
&=\sup_{P_{\hat{U}_{[1:k]}}}\sum_ug(\hat{P}_k(u))P_U(u)\\
&\leq \sup_{P_{\hat{U}_{[1:k]}}}\sum_u(g(0)+g^\prime(0)\hat{P}_k(u))P_U(u)\label{eqn:thm61}\\
&=g^\prime(0)\sup_{P_{\hat{U}_{[1:k]}}}\sum_u\hat{P}_k(u)P_U(u)\label{eqn:thm613}\\
&=g^\prime(0){\blue{\max_{u_1,u_2,\dots,u_k:u_i\neq u_j, i\neq j}\sum_{i=1}^kP_U(u_i)}}\label{eqn:thm62}\\
&=kg^\prime(0)\max_{u}P_U(u)\label{eqn:newthm12},
\end{align}
where \eqref{eqn:thm61} follows because $g(s)\leq g(t)+g^\prime(t)(s-t)$, for all $s,t\in[0,1]$ since $g$ is a concave function, \eqref{eqn:thm613} follows because $g(0)=0$, and \eqref{eqn:newthm12} follows from \eqref{eqn:newthm11} if $k\leq m_x$, for all $x\in\mathcal{X}$.

We now lower bound the numerator. Since the function $g$ is differentiable at $0$, there exists a function $h(x)$ such that 
\begin{align}\label{eqn:Tyler}
    g(x)=g(0)+g^\prime(0)x+xh(x),\  \lim_{x\rightarrow 0}h(x)=0.
\end{align}
Notice that
\begin{align}\label{eqn:thm6new}
    &\sup_{P_{\hat{U}_{[1:k]}|Y}}\mathbb{E}_{UY}\left[g(\hat{P}_k(U|Y))\right]\nonumber\\
    &=\sum_yP_Y(y)\sup_{P_{\hat{U}_{[1:k]}|Y=y}}\mathbb{E}_{U|Y=y}\left[g(\hat{P}_k(U|y))\right].
\end{align}
Consider, for a fixed $y\in\mathcal{Y}$,
\begin{align}
    &\sup_{P_{\hat{U}_{[1:k]}|Y=y}}\mathbb{E}_{U|Y=y}\left[g(\hat{P}_k(U|y))\right]\label{eqn:neqopt1}\\
    &=\sup_{P_{\hat{U}_{[1:k]}|Y=y}}\sum_ug(\hat{P}_k(u|y))P_{U|Y}(u|y)\\
    &=\sup_{P_{\hat{U}_{[1:k]}|Y=y}}\sum_{x}P_{X|Y}(x|y)\sum_{u\in\mathcal{U}_x}\frac{1}{m_x}g(\hat{P}_k(u|y))\\
    &\geq\sup_{\substack{P_{\hat{U}_{[1:k]}|Y=y}:\forall u_1,u_2\in\mathcal{U}_x,x\in\mathcal{X}\\ \hat{P}_k(u_1|y)=\hat{P}_k(u_2|y)}}\bigg[\sum_{x}P_{X|Y}(x|y)\nonumber\\
    &\hspace{24pt}\sum_{u\in\mathcal{U}_x}\frac{1}{m_x}g(\hat{P}_k(u|y))\bigg]\label{eqn:thm6bound}\\
    &=\sup_{\substack{P_{\hat{U}_{[1:k]}|Y=y}:\forall u_1,u_2\in\mathcal{U}_x,x\in\mathcal{X}\\ \hat{P}_k(u_1|y)=\hat{P}_k(u_2|y)}}\bigg[\sum_{x}P_{X|Y}(x|y)\nonumber\\
    &\hspace{24pt}g\left(\frac{1}{m_x}\sum_{u\in\mathcal{U}_x}\hat{P}_k(u|y)\right)\bigg]\label{eqn:thm6concaveg}\\
    &=\sup_{P_{\hat{X}|Y=y}}\sum_xP_{X|Y}(x|y)g\left(\frac{k}{m_x}P_{\hat{X}|Y}(x|y)\right)\label{eqn:thm63}\\
    &\geq \sup_{P_{\hat{X}|Y=y}}\sum_xP_{X|Y}(x|y)(g(0)+\frac{k}{m_x}P_{\hat{X}|Y}(x|y)(g^\prime(0)-\epsilon))\label{eqn:thm64}\\
    &=(g^\prime(0)-\epsilon)\sup_{P_{\hat{X}|Y=y}}\sum_x\frac{k}{m_x}P_{X|Y}(x|y)P_{\hat{X}|Y}(x|y)\\
    &=k(g^\prime(0)-\epsilon)\max_x(\frac{1}{m_x}P_{X|Y}(x|y))\\
    &=k(g^\prime(0)-\epsilon)\max_x\max_{u\in\mathcal{U}_x}P_{U|Y}(u|y)\\
    &=k(g^\prime(0)-\epsilon)\max_uP_{U|Y}(u|y),
\end{align}
{\blue{where \eqref{eqn:thm6bound} follows because $\sup_{x\in A}f(x)\geq \sup_{x\in B}f(x)$ when $A\supseteq B$ and \eqref{eqn:thm6concaveg} follows because $\sup_{x\in B}f(x)=\sup_{x\in B}g(x)$ when $f(x)=g(x)$, for $x\in B$.}} We remark that it suffices to consider $P_{\hat{U}_{[1:k]}|Y=y}$ such that $\sum_{u}\hat{P}_k(u|y)=k$ in all the optimizations problems in \eqref{eqn:neqopt1}-\eqref{eqn:thm6concaveg} (see Lemma~\ref{fact:alphaloss2noneq} and Remark~\ref{remark} in Appendix~\ref{proofofthm2}). Then, it follows that $\eqref{eqn:thm6concaveg}\leq\eqref{eqn:thm63}$ by defining $P_{\hat{X}|Y}(x|y):=\frac{1}{k}\sum_{u\in\mathcal{U}_x}\hat{P}_k(u|y)$ which is a probability distribution noticing that $\mathcal{U}_x$, $x\in\mathcal{X}$, are disjoint and that $\sum_{u}\hat{P}_k(u|y)=k$. It follows that $\eqref{eqn:thm6concaveg}\geq\eqref{eqn:thm63}$ by defining $\hat{P}_k(u|y):=\frac{k}{m_x}P_{\hat{X}|Y}(x|y)$, for $u\in\mathcal{U}_x$, and using Lemma~\ref{theorem:probclass1} (see also the discussion above it) in Appendix~\ref{proofofthm2}. Thus, $\eqref{eqn:thm6concaveg}=\eqref{eqn:thm63}$. The inequality \eqref{eqn:thm64} follows from \eqref{eqn:Tyler} by choosing $m_x$ appropriately large enough, for $x\in\mathcal{X}$ and for $0<\epsilon\leq g^\prime(0)$. This gives
\begin{align}
   \sup_{P_{\hat{U}_{[1:k]}|Y}}&\mathbb{E}_{UY}\left[g(\hat{P}_k(U|Y))\right]\nonumber\\
    &\geq {\nblue{k}}(g'(0)-\epsilon)\sum_{y}P_Y(y)\max_{u}P_{U|Y}(u|y)\label{eqn:thm6numer}.
\end{align}

Putting together the bounds in \eqref{eqn:thm62} and \eqref{eqn:thm6numer} and using \eqref{eqn:thm6new}, we have
\begin{align}
     \sup_{U:U-X-Y}&\frac{\sup_{P_{\hat{U}_{[1:k]}|Y}}\mathbb{E}_{UY}\left[g(\hat{P}_k(U|Y))\right]}{\sup_{P_{\hat{U}_{[1:k]}}}\mathbb{E}_U\left[g(\hat{P}_k(U))\right]}\\
     &\geq\sup_{0<\epsilon\leq g^\prime(0)}\sup_{\substack{U:U-X-Y,\\ \ P_{U|X}\ \text{in}\  \eqref{eqn:shatter}}}\nonumber\\
     &\hspace{12pt}{\blue{\frac{k(g^\prime(0)-\epsilon)\sum_{y}P_Y(y)\max_{u}P_{U|Y}(u|y)}{kg^\prime(0)\max_{u}P_U(u)}}}\\
     &=\sup_{0<\epsilon\leq g^\prime(0)}\frac{(g^\prime(0)-\epsilon)}{g^\prime(0)}\sum_y\max_{x}P_{Y|X}(y|x)\label{eqn:maxleakageshater}\\
     &=\sum_y\max_{x}P_{Y|X}(y|x),\label{eqn:nonengarivestrictg}
\end{align}
where \eqref{eqn:maxleakageshater} follows because maximal leakage is achieved by the shattering $P_{U|X}$ {\blue{(defined in \eqref{eqn:shatter})}} with sufficiently large $m_x$~\cite{Liaoetal} and \eqref{eqn:nonengarivestrictg} follows because $0<g^\prime(0)<\infty$.

{\nblue\subsection{Proof of Theorem~\ref{thm:shiftedmaxL}}\label{proof of thm2}
Note that 
\begin{align}
&{\mathcal{L}}_g^{\text{max}}(X\rightarrow Y)\nonumber\\
&=\sup_{U:U-X-Y}\log \frac{1+\sup_{P_{\hat{U}|Y}}\mathbb{E}_{UY}\left[P_{\hat{U}|Y}(U|Y)\right]}{1+\sup_{P_{\hat{U}}}\mathbb{E}_U\left[P_{\hat{U}}(U)\right]}\\
&=\sup_{U:U-X-Y}\log \frac{1+\sum_y \max_u P_{UY}(u,y)}{1+\max_u P_U(u)}\label{thm2:objshiftdleak}.
\end{align}
The lower bound follows directly by using the ``Shattering'' conditional distribution $P_{U|X}$ used in \cite[Proof of Theorem~1]{Issaetal}. In particular, let $p^*=\min_{x:P_X(x)>0}$, $k(x)=\frac{P_X(x)}{p^*}$, for each $x\in\text{supp}(X)$, and $\mathcal{U}=\cup_{x\in\text{supp}(X)}\{(x,1),\dots,(x,\lceil k(x)\rceil)\}$. For each $u=(i_u,j_u)\in\mathcal{U}$ and $x\in\text{supp}(X)$, define $P_{U|X}$ as
\begin{align}
    P_{U|X}(i_u,j_u|X)=\begin{cases}
        \frac{p^*}{P_X(x)}, & \!\!\!i_u=x, \!\ j_u\in[1:\lfloor k(x)\rfloor],\\
        1-\frac{\lfloor k(x)\rfloor p^*}{P_X(x)}, &\!\!\! i_u=x, \!\ j_u=\lceil k(x)\rceil,\\
        0,&\!\!\!\text{otherwise}.
    \end{cases}
\end{align}
By substituting this into the objective function of \eqref{thm2:objshiftdleak}, we get the lower bound
\begin{align}\label{eqn:thm2-lowerbound}
    \mathcal{L}_g(X\rightarrow Y)\geq \log\frac{1+p^*\sum_{y\in\mathcal{Y}}\max_{x\in\text{supp}(X)}P_{Y|X}(y|x)}{1+p^*}.
\end{align}

We note that this lower bound holds for $\mathcal{X}$ and $\mathcal{Y}$ of arbitrary cardinalities. The binary assumption on $\mathcal{X}$ and $\mathcal{Y}$ is required for the upper bound. We prove the upper bound now. We have
\begin{align}
&{\mathcal{L}}_g^{\text{max}}(X\rightarrow Y)\nonumber\\
&=
\sup_{U:U-X-Y}\log \frac{\sup_{P_{\hat{U}|Y}}\mathbb{E}_{UY}\left[g(P_{\hat{U}|Y}(U|Y))\right]}{\sup_{P_{\hat{U}}}\mathbb{E}_U\left[g(P_{\hat{U}}(U))\right]}
\\&=\sup_{U:U-X-Y}\log \frac{1+\sup_{P_{\hat{U}|Y}}\mathbb{E}_{UY}\left[P_{\hat{U}|Y}(U|Y)\right]}{1+\sup_{P_{\hat{U}}}\mathbb{E}_U\left[P_{\hat{U}}(U)\right]}
\\&=\sup_{U:U-X-Y}\log \frac{1+\sum_y \max_u P_{UY}(u,y)}{1+\max_u P_U(u)}
\\&=\sup_{q\in[0,1]}\ \sup_{\substack{U:U-X-Y,\\\max_u P_U(u)\le q}} \log \frac{1+\sum_y \max_u P_{UY}(u,y)}{1+q}
\\&=\sup_{q\in[0,1]} \log \frac{\displaystyle 1+\sup_{\substack{U:U-X-Y,\\\max_u P_U(u)\le q}} \sum_y \max_u P_{UY}(u,y)}{1+q}\label{eqn:thm2-eq1}.
\end{align}
Let us consider just the optimization in the numerator in \eqref{eqn:thm2-eq1}:
\begin{equation}\label{eqn:thm2-fq-def}
f(q)=\sup_{\substack{U:U-X-Y,\\\max_u P_U(u)\le q}} \sum_y \max_u P_{U,Y}(u,y).
\end{equation}
We may upper bound it by
\begin{align}
&f(q)\nonumber\\
&=\sup_{\substack{U:U-X-Y,\\\max_u P_U(u)\le q}} \sum_y \max_u \sum_x P_U(u) P_{X|U}(x|u) P_{Y|X}(y|x) 
\\&\le \sup_{\substack{U:U-X-Y,\\\max_u P_U(u)\le q}} q\sum_y \max_u \sum_x P_{X|U}(x|u) P_{Y|X}(y|x) 
\\&\le q\sum_y \max_x P_{Y|X}(y|x).\label{fq_bound1}
\end{align}
We will also need another upper bound on $f(q)$ which uses the assumption that $\mathcal{X}$ and $\mathcal{Y}$ are binary. Let $\mathcal{X}=\{x_1,x_2\}$ and $\mathcal{Y}=\{y_1,y_2\}$. Assume without loss of generality that $P_X(x_1)\le P_X(x_2)$. Also, assume without loss of generality that $x_1=\argmax_x P_{Y|X}(y_1|x)$. That is,
\begin{equation}
P_{Y|X}(y_1|x_1)\ge P_{Y|X}(y_1|x_2).\label{y1x1_y1x2_bound}
\end{equation}
Since $\mathcal{Y}$ is binary, a consequence is that
\begin{equation}
P_{Y|X}(y_2|x_2)\ge P_{Y|X}(y_2|x_1).\label{y2x2_y2x1_bound}
\end{equation}
That is, $x_2=\argmax_x P_{Y|X}(y_2|x)$. We may write \eqref{eqn:thm2-fq-def} as
\begin{equation}
f(q)=\sup_{\substack{U:U-X-Y,\\\max_u P_U(u)\le q}} \max_u P_{U,Y}(u,y_1)+\max_u P_{U,Y}(u,y_2).
\end{equation}
Let $u_i=\argmax_u P_{U,Y}(u,y_i)$ for $i=1,2$. If $u_1=u_2$, then 
\begin{align}
f(q)&\le \sup_{\substack{U:U-X-Y,\\\max_u P_U(u)\le q}} P_{U,Y}(u_1,y_1)+P_{U,Y}(u_1,y_2)\nonumber\\
&=P_U(u_1)\le q.
\end{align}
Now consider the case where $u_1\ne u_2$. We construct an upper bound using weak duality as follows. Consider the constraints
\begin{align}
\sum_{j=1}^2 P_X(x_j)P_{U|X}(u_i|x_j)\le q,\ i=1,2,\\
\sum_{i=1}^2 P_{U|X}(u_i|x_j)\le 1,\ j=1,2,\\
P_{U|X}(u_i|x_j)\ge 0,\ i=1,2,j=1,2.
\end{align}
We may upper bound $f(q)$ by
\begin{align}
f(q)&\le \min_{\substack{\lambda_i\ge 0,\ i=1,2,\\
\alpha_j\ge 0,\ j=1,2,\\
\nu_{ij}\ge 0,\ i=1,2,j=1,2}}
\sup_{P_{U|X}} \sum_{i,j=1}^2 P_{X,Y}(x_j,y_i) P_{U|X}(u_i|x_j)
\nonumber\\&\qquad-\sum_{i=1}^2 \lambda_i\left(\sum_{j=1}^2 P_X(x_j)P_{U|X}(u_i|x_j)-q\right)\nonumber\\
&\qquad-\sum_{j=1}^2 \alpha_j \left(\sum_{i=1}^2 P_{U|X}(u_i|x_j)-1\right)
\nonumber\\&\qquad+\sum_{i,j=1}^2 \nu_{ij}P_{U|X}(u_i|x_j)
\\&=\min_{\substack{\lambda_i\ge 0,\ i=1,2,\\
\alpha_j\ge 0,\ j=1,2,\\
\nu_{ij}\ge 0,\ i=1,2,j=1,2}}
\sup_{P_{U|X}} \sum_{i,j=1}^2 \bigg[P_{U|X}(u_i|x_j)\nonumber\\
&\qquad\times\left(P_{X,Y}(x_j,y_i)-\lambda_i P_X(x_j)-\alpha_j+\nu_{ij}\right)\nonumber\\
&\qquad+\sum_{i=1}^2 \lambda_i q+\sum_{j=1}^2 \alpha_j\bigg].
\end{align}
This leads to the upper bounding dual program
\begin{equation}
\begin{array}{ll}
\text{minimize} & \displaystyle \sum_{i=1}^2 \lambda_i q+\sum_{j=1}^2 \alpha_j\\
\text{subject to} & \lambda_i\ge 0,\ i=1,2,\\
&\alpha_j\ge 0,\ j=1,2,\\
&\nu_{ij}\ge 0,\ i=1,2,j=1,2,\\
&P_{X,Y}(x_j,y_i)-\lambda_i P_X(x_j)-\alpha_j+\nu_{ij}=0,\nonumber\\
& i=1,2,j=1,2.
\end{array}
\end{equation}
Using the equality constraint to eliminate $\nu_{ij}$, this can be further simplified to
\begin{equation}
\begin{array}{ll}
\text{minimize} & \displaystyle \sum_{i=1}^2 \lambda_i q+\sum_{j=1}^2 \alpha_j\\
\text{subject to} & \lambda_i\ge 0,\ i=1,2,\\
&\alpha_j\ge 0,\ j=1,2,\\
&-P_{X,Y}(x_j,y_i)+\lambda_i P_X(x_j)+\alpha_j\ge 0,\nonumber\\
&i=1,2,j=1,2.
\end{array}
\end{equation}
Noting that, for fixed $\alpha_j$, $j\in\{1,2\}$, the optimal value of $\lambda_i$ is $\max_j\max\{0,P_{Y|X}(y_i|x_j)-\frac{\alpha_j}{P_X(x_j)}\}$, the dual program can again be simplified to
\begin{equation}
\begin{array}{ll}
\text{minimize} & \displaystyle q\sum_{i=1}^2 \max_j\max\left\{0,P_{Y|X}(y_i|x_j)-\frac{\alpha_j}{P_X(x_j)}\right\}\nonumber\\
&+\sum_{j=1}^2 \alpha_j\\
\text{subject to} &\alpha_j\ge 0,\ j=1,2.\\
\end{array}
\end{equation}
We can now form a further upper bound on $f(q)$ by choosing $\alpha_j$ as we like. With some hindsight, we set
\begin{equation}
\alpha_1=P_X(x_1)(P_{Y|X}(y_1|x_1)-P_{Y|X}(y_1|x_2)),\quad \alpha_2=0.
\end{equation}
By the assumption in \eqref{y1x1_y1x2_bound}, $\alpha_1$ is non-negative. Now we have the upper bound
\begin{align}
f(q)&\le 
q \sum_{i=1}^2 \max\{0,P_{Y|X}(y_i|x_1)\nonumber\\
&\qquad-(P_{Y|X}(y_1|x_1)-P_{Y|X}(y_1|x_2)),
P_{Y|X}(y_i|x_2)\}
\nonumber\\&\qquad+P_X(x_1)(P_{Y|X}(y_1|x_1)-P_{Y|X}(y_1|x_2))
\\&=q\left(P_{Y|X}(y_1|x_2)+P_{Y|X}(y_2|x_2)\right)\nonumber\\
&\qquad+P_X(x_1)(P_{Y|X}(y_1|x_1)-P_{Y|X}(y_1|x_2))\label{maximizing_j}
\\&=q+P_X(x_1)(P_{Y|X}(y_1|x_1)-P_{Y|X}(y_1|x_2))\label{fq_bound2}
\end{align}
where in \eqref{maximizing_j} we have used the assumption in \eqref{y2x2_y2x1_bound}. Note that the bound in \eqref{fq_bound2} is larger than the bound $f(q)\le q$ for the case where $u_1=u_2$. Thus \eqref{fq_bound2} is an upper bound on $f(q)$ in all cases.
Let us now consider the quantity $\frac{1+f(q)}{1+q}$, separately for $q\le P_X(x_1)$ and for $q\ge P_X(x_1)$. For $q\le P_X(x_1)$, by the bound on $f(q)$ in \eqref{fq_bound1},
\begin{align}
\frac{1+f(q)}{1+q}&\leq\frac{1+q\sum_y \max_x P_{Y|X}(y|x)}{1+q}\label{eqn:thm2boundwithincreasinginq}\\
&\le \frac{1+P_X(x_1)\sum_y \max_x P_{Y|X}(y|x)}{1+P_X(x_1)}\label{thm2:qtopx}
\end{align}
where the inequality \eqref{thm2:qtopx} holds because the right-hand side of \eqref{eqn:thm2boundwithincreasinginq} is non-decreasing in $q$ as $\sum_y \max_x P_{Y|X}(y|x)\ge 1$. For $q\ge P_X(x_1)$, by the bound on $f(q)$ in \eqref{fq_bound2},
\begin{align}
&\frac{1+f(q)}{1+q}\nonumber\\
&\le \frac{1+q+P_X(x_1)(P_{Y|X}(y_1|x_1)-P_{Y|X}(y_1|x_2))}{1+q}\label{rational_function_q0}
\\&\le \frac{1+P_X(x_1)+P_X(x_1)(P_{Y|X}(y_1|x_1)-P_{Y|X}(y_1|x_2))}{1+P_X(x_1)}\label{rational_function_q}
\\&=\frac{1+P_X(x_1)(P_{Y|X}(y_1|x_1)+P_{Y|X}(y_2|x_2))}{1+P_X(x_1)}
\\&=\frac{1+P_X(x_1)\sum_y \max_x P_{Y|X}(y|x)}{1+P_X(x_1)}
\end{align}
where \eqref{rational_function_q} holds because the right-hand side of \eqref{rational_function_q0} is non-increasing in $q$ since $P_X(x_1)(P_{Y|X}(y_1|x_1)-P_{Y|X}(y_1|x_2))\ge 0$.
This proves that
\begin{align}
\mathcal{L}_g^{\text{max}}(X\to Y)&=\sup_{q\in[0,1]}\log \frac{1+f(q)}{1+q}\\
&\le\log\frac{1+P_X(x_1)\sum_y \max_x P_{Y|X}(y|x)}{1+P_X(x_1)}.
\end{align}
By the assumption that $P_X(x_1)=\min_x P_X(x)=p^*$, we have proven an upper bound on $\mathcal{L}_g^{\text{max}}(X\rightarrow Y)$ matching the lower bound in \eqref{eqn:thm2-lowerbound}.

}
}}

\section{Proofs for Section~\ref{section:multiple-guesses}}
\subsection{Proof of Proposition~\ref{thm:expected-alpha-loss-singleguess}}\label{proofofthm1}
Let $P_{\tilde{X}}(x):=\frac{\mathrm{P}(\cup_{i=1}^k (\hat{X}_i=x))}{k}$. Consider the following simplification to the expected $\alpha$-loss under $k$ guesses.
\begin{align}
&\mathbb{E}_{X}\left[\ell_\alpha(\mathrm{P}(\cup_{i=1}^k (\hat{X}_i=X)))\right]\nonumber\\
    &=\mathbb{E}\left[\ell_\alpha(k{P}_{\tilde{X}}(X)) \right]\nonumber
    \\
    &=\frac{\alpha}{\alpha-1}\sum_{x}P_{X}(x)\left(1-(kP_{\tilde{X}}(x))^{\frac{(\alpha-1)}{\alpha}}\right)\label{eqn:lemma2}\\
    &=\frac{\alpha}{\alpha-1}(1-\sum_{x}P_{X}(x)(kP_{\tilde{X}}(x))^{\frac{(\alpha-1)}{\alpha}})\\
    &=\frac{\alpha}{\alpha-1}\bigg(1-k^{\frac{\alpha-1}{\alpha}}(\sum_xP_{X}(x)^\alpha)^\frac{1}{\alpha}+k^{\frac{\alpha-1}{\alpha}}(\sum_xP_{X}(x)^\alpha)^\frac{1}{\alpha}\nonumber\\
    &\hspace{12pt}-\sum_{x}P_{X}(x)(kP_{\tilde{X}}(x))^{\frac{(\alpha-1)}{\alpha}}\bigg)\\
    &=\frac{\alpha}{\alpha-1}\bigg(1-k^{\frac{\alpha-1}{\alpha}}\mathrm{e}^{(\frac{1-\alpha}{\alpha}H_\alpha(X))}+{\blue{k^{\frac{\alpha-1}{\alpha}}}}(\sum_xP_{X}(x)^\alpha)^\frac{1}{\alpha}\nonumber\\
    &\hspace{12pt}-\sum_{x}P_{X}(x)(kP_{\tilde{X}}(x))^{\frac{(\alpha-1)}{\alpha}}\bigg)\\
    &=\frac{\alpha}{\alpha-1}\bigg(1-k^{\frac{\alpha-1}{\alpha}}\mathrm{e}^{(\frac{1-\alpha}{\alpha}H_\alpha(X))}+k^{\frac{\alpha-1}{\alpha}}(\sum_xP_{X}(x)^\alpha)^\frac{1}{\alpha}\nonumber\\
    &\hspace{12pt}-(\sum_xP_{X}(x)^\alpha)^\frac{1}{\alpha}\sum_x(\frac{P_{X}(x)^\alpha}{\sum\limits_{\blue{x'}}P_{X}({\blue{x'}})^\alpha})^\frac{1}{\alpha}(k{P}_{\tilde{X}}(x))^{(1-\frac{1}{\alpha})}\bigg)\\
     &=\frac{\alpha}{\alpha-1}\bigg(1-k^{\frac{\alpha-1}{\alpha}}\mathrm{e}^{\left(\frac{1-\alpha}{\alpha}H_\alpha(X)\right)}+k^{\frac{\alpha-1}{\alpha}}(\sum_xP_{X}(x)^\alpha)^\frac{1}{\alpha}\nonumber\\
     &\hspace{12pt}\times\big(1-\sum_x\big(\frac{P_{X}(x)^\alpha}{\sum\limits_{\blue{x'}}P_{X}({\blue{x'}})^\alpha}\big)^\frac{1}{\alpha}{P}_{\tilde{X}}(x)^{(1-\frac{1}{\alpha})}\big)\bigg)\\
     &=\frac{\alpha}{\alpha-1}\bigg(1-k^{\frac{\alpha-1}{\alpha}}\mathrm{e}^{\left(\frac{1-\alpha}{\alpha}H_\alpha(X)\right)}+k^{\frac{\alpha-1}{\alpha}}(\sum_xP_{X}(x)^\alpha)^\frac{1}{\alpha}\nonumber\\
     &\hspace{12pt}\times(1-\sum_xP_X^{(\alpha)}(x)^\frac{1}{\alpha}{P}_{\tilde{X}}(x)^{(1-\frac{1}{\alpha})})\bigg)\\
     %
    &=\frac{\alpha}{\alpha-1} \bigg(1-k^{\frac{\alpha-1}{\alpha}}\mathrm{e}^{\frac{1-\alpha}{\alpha}H_\alpha(X)}+k^{\frac{\alpha-1}{\alpha}}(\sum_xP_X(x)^\alpha)^{\frac{1}{\alpha}}\nonumber\\
    &\hspace{12pt}\times\big(1-\mathrm{e}^{\frac{1-\alpha}{\alpha}D_\frac{1}{\alpha}(P^{(\alpha)}_X\|P_{\tilde{X}})}\big)\bigg)\label{eqn:thm1eqn1}\\
    &=\frac{\alpha}{\alpha-1} \left(1-k^{\frac{\alpha-1}{\alpha}}\mathrm{e}^{\frac{1-\alpha}{\alpha}H_\alpha(X)}\right)+k^{\frac{\alpha-1}{\alpha}}B_F(P_X,P^{(\frac{1}{\alpha})}_{\hat{X}})\label{eqn:thm1proof1new},
\end{align}
where \eqref{eqn:thm1proof1new} follows from Appendix~\ref{appx:Bregman} with Bregman divergence $B_F$ associated with $F(P_X)=\frac{\alpha}{\alpha-1}\left((\sum_xP_X(x)^\alpha)^{\frac{1}{\alpha}}-1\right)$. In view of Lemma~\ref{fact:alphaloss2noneq} and Remark~\ref{remark} in {\nblue{Appendix}}~\ref{proofofthm2}, we can consider $P_{\tilde{X}}$ to be a probability distribution, i.e., $\sum_xP_{\tilde{X}}(x)=1$, for optimizing the expected $\alpha$-loss. Now from the non-negativity of the R\'{e}nyi divergence, it can be seen that the last term inside the brackets in \eqref{eqn:thm1eqn1} is non-negative and is equal to zero if and only if there exists a guessing strategy $P_{\hat{X}_{[1:k]}}$ such that $P_{\tilde{X}}=P_X^{(\alpha)}$. There always exists such a guessing strategy for the special case of $k=1$, in particular, $P_{\hat{X}_1}=P_{X}^{(\alpha)}$, thereby giving an alternative proof of \cite[Lemma~1]{LiaoKS20} in addition to quantifying the relative performance of an arbitrary guessing strategy with respect to the optimal guessing strategy. However, this is not always possible when $k>1$. In particular, it follows from Lemma~\ref{theorem:probclass1} and the discussion above it (in {\nblue{Appendix}}~\ref{proofofthm2}) that there exists a guessing strategy $P_{\hat{X}_{[1:k]}}$ such that $P_{\tilde{X}}=P_X^{(\alpha)}$ if and only if $P_{X}^{(\alpha)}(x)\leq \frac{1}{k}$, for all $x\in\mathcal{X}$. Then the minimal expected $\alpha$-loss is given by $ \mathcal{ME}_\alpha^{(k)}(P_X)=\frac{\alpha}{\alpha-1}(1-k^{\frac{\alpha-1}{\alpha}}\mathrm{e}^{\frac{1-\alpha}{\alpha}H_\alpha(X)})$ when $P_{X}^{(\alpha)}(x)\leq \frac{1}{k}$, for all $x\in\mathcal{X}$.

\subsection{Minimal expected log-loss under $2$ guesses}\label{logloss2}
{\blue{Here we present a proof of Theorem~\ref{thm:minloss-alpha-kguesses} for the special case of $\alpha=1$ and $k=2$ that essentially captures the intuition and the main tools involved in the relatively complex analysis in the proof of Theorem~\ref{thm:minloss-alpha-kguesses}.}}
\begin{theorem}[Minimal Expected log-loss \{$\alpha=1$\} under $2$ guesses]\label{thm:minimal-logloss2}
Let $P_X$ be a probability distribution supported on $\mathcal{X}=\{x_1,x_2,\dots,x_n\}$, and let $p_{\emph{max}}=\max_{i\in[1:n]}P_X(x_i)=P_X(x_\emph{max})$. Then the minimal expected log-loss under 2 guesses is given by
\begin{align}\label{eqn:logloss2thm}
    \min_{P_{\hat{X}_1\hat{X}_2}}\mathbb{E}&\left[\log{\frac{1}{\mathrm{P}\left(\bigcup_{i=1}^2(\hat{X}_i=X)\right)}}\right] \nonumber\\
    &=H(X)-h_\emph{b}\left(\max{\{\frac{1}{2},p_\emph{max}\}}\right),
\end{align}
where $h_\emph{b}(\cdot)$ is the binary entropy function defined by $h_\emph{b}(t):=-t\log{t}-(1-t)\log{(1-t)}$ and the optimal guessing strategy is given by $P_{\hat{X}_1,\hat{X}_2}$ such that
\begin{align}\label{eqn:optimalstrategylog}
\begin{cases}
\mathrm{P}(\hat{X}_1=x\ \emph{or}\ \hat{X}_2=x)=2P_X(x), \\
\hspace{12pt}\emph{if} \ P_{X}(x_{\emph{max}})\leq \frac{1}{2},\\
P_{\hat{X}_1,\hat{X}_2}(x_\emph{max},x)+P_{\hat{X}_1,\hat{X}_2}(x,x_\emph{max})=\frac{P_X(x)}{\sum\limits_{x^\prime:x^\prime\neq x_\emph{max}}P_X(x^\prime)}, \\
\hspace{12pt}x\neq x_{\emph{max}}, \emph{if}\  P_X(x_{\emph{max}})>\frac{1}{2}.
\end{cases}
\end{align}
\end{theorem}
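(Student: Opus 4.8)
The plan is to eliminate the joint guessing strategy in favor of a single \emph{coverage vector} and then solve a convex water-filling problem. Writing $q(x):=\mathrm{P}(\hat X_1=x\text{ or }\hat X_2=x)$, the objective in \eqref{eqn:logloss2thm} depends on $P_{\hat X_1\hat X_2}$ only through $q$, namely it equals $-\sum_{x}P_X(x)\log q(x)$. First I would use the realizability characterization of Lemma~\ref{theorem:probclass1} (the Farkas' lemma argument referenced in the remark after Theorem~\ref{thm:minloss-alpha-kguesses}) to replace the minimization over all valid $P_{\hat X_1\hat X_2}$ by the minimization of $-\sum_x P_X(x)\log q(x)$ over the polytope $\mathcal{Q}=\{q\in[0,1]^{n}:\sum_x q(x)\le 2\}$: every achievable $q$ lies in $\mathcal{Q}$ (the budget bound $\sum_x q(x)\le 2$ is just the union bound over the two guesses), and, conversely, every $q\in\mathcal{Q}$ is realized by some joint strategy. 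Since the objective is strictly decreasing in each coordinate $q(x)$, the optimum uses the full budget, so I may impose $\sum_x q(x)=2$.

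Next I would solve this convex program. The objective is strictly convex and the constraints are linear, so the KKT/Lagrangian stationarity conditions pin down a water-filling solution $q^\star(x)=\min\{1,P_X(x)/\mu\}$, with the multiplier $\mu$ fixed by $\sum_x q^\star(x)=2$. The crux is to show that \emph{at most one} coordinate is clipped to $1$, and that it must be $x_{\mathrm{max}}$: since probabilities sum to $1$, at most one outcome can exceed $\tfrac12$, and ordering $p_1\ge\cdots\ge p_n$ forces the clipping threshold to bite first (and only) at $x_{\mathrm{max}}$. This produces exactly the dichotomy of the statement. If $p_{\mathrm{max}}\le \tfrac12$ there is no clipping, $\mu=\tfrac12$, and $q^\star(x)=2P_X(x)$; if $p_{\mathrm{max}}>\tfrac12$ then $q^\star(x_{\mathrm{max}})=1$, $\mu=1-p_{\mathrm{max}}$, and $q^\star(x)=P_X(x)/(1-p_{\mathrm{max}})$ for $x\ne x_{\mathrm{max}}$; feasibility here uses $P_X(x)\le 1-p_{\mathrm{max}}$ for every $x\ne x_{\mathrm{max}}$, which holds because $p_{\mathrm{max}}+P_X(x)\le 1$.

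Finally I would substitute $q^\star$ back into $-\sum_x P_X(x)\log q(x)$. In the first case one gets $H(X)-\log 2=H(X)-h_{\mathrm{b}}(\tfrac12)$, and in the second case a short computation collecting the $x_{\mathrm{max}}$ term gives $H(X)+p_{\mathrm{max}}\log p_{\mathrm{max}}+(1-p_{\mathrm{max}})\log(1-p_{\mathrm{max}})=H(X)-h_{\mathrm{b}}(p_{\mathrm{max}})$; the two cases combine into $H(X)-h_{\mathrm{b}}(\max\{\tfrac12,p_{\mathrm{max}}\})$, matching \eqref{eqn:logloss2thm}. To certify achievability I would check that the explicit strategy in \eqref{eqn:optimalstrategylog} is a bona fide distribution realizing $q^\star$: in the clipped case it places all mass on unordered pairs $\{x_{\mathrm{max}},x\}$ with weights $P_X(x)/(1-p_{\mathrm{max}})$ summing to $1$, so $x_{\mathrm{max}}$ is always guessed and each other $x$ is covered with probability $P_X(x)/(1-p_{\mathrm{max}})$.

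I expect the main obstacle to be the realizability step, i.e., verifying that the box-and-budget polytope $\mathcal{Q}$ is \emph{exactly} the set of coverage vectors attainable by a joint strategy $P_{\hat X_1\hat X_2}$; the nontrivial direction (that every $q\in\mathcal{Q}$ is attainable) is a linear-feasibility statement and is where Farkas' lemma enters. A secondary but genuine subtlety is the water-filling argument that precisely one coordinate clips and that it is $x_{\mathrm{max}}$, together with the feasibility check $P_X(x)\le 1-p_{\mathrm{max}}$ in the second case; handling this carefully is what makes the two-case structure (and hence the $\max\{\tfrac12,p_{\mathrm{max}}\}$) emerge cleanly.
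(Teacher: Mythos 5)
Your proposal is correct, but it is not the route the paper takes for this particular theorem. The paper's proof of Theorem~\ref{thm:minimal-logloss2} (Section~\ref{logloss2}) deliberately avoids convex programming: it uses Facts~\ref{fact:positivityrelent}--\ref{fact:factfromFarkas} to restrict to coverage vectors summing to $2$, and then, in the case $p_{\text{max}}\leq\frac{1}{2}$, rewrites the expected log-loss as $H(X)-\log 2+D(P_X\|P_{\tilde{X}})$ and invokes non-negativity of relative entropy, while in the case $p_{\text{max}}>\frac{1}{2}$ it performs a three-term decomposition whose residual is a relative entropy $D(\tilde{P}_X\|\tilde{Q}_X)$ against the sub-probability vector $\tilde{Q}_X(x_i)=t_{x_1}t_{x_i}$, handled by Fact~\ref{fact:positivityrelent}. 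Your reduction to the box-and-budget polytope followed by KKT water-filling is instead exactly the strategy the paper deploys for the general Theorem~\ref{thm:minloss-alpha-kguesses} (arbitrary $\alpha$ and $k$), specialized to $\alpha=1$, $k=2$: your $q^\star(x)=\min\{1,P_X(x)/\mu\}$ is the $\alpha=1$ instance of the paper's $t_i=\min\{(p_i/\lambda)^\alpha,1\}$, and your realizability step is Lemma~\ref{theorem:probclass1}. What your route buys is uniformity and immediate generalizability; what the paper's route buys is an identity-level explanation (the gap from optimality is literally a divergence, foreshadowing the Bregman-divergence statement of Theorem~\ref{thm:expected-alpha-loss-singleguess}) without any appeal to Lagrange multipliers. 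One small caveat: your claim that \emph{every} $q\in\mathcal{Q}=\{q\in[0,1]^n:\sum_x q(x)\le 2\}$ is attainable is false as stated, since any joint strategy has $\sum_x q(x)=2-\mathrm{P}(\hat{X}_1=\hat{X}_2)\ge 1$ (so, e.g., $q\equiv 0$ is not realizable). This does not damage the argument --- $\mathcal{Q}$ still contains all achievable coverage vectors, so minimizing over it gives a lower bound, and the minimizer you exhibit satisfies $\sum_x q^\star(x)=2$ with $q^\star\in[0,1]^n$ and hence is admissible by Lemma~\ref{theorem:probclass1} --- but you should phrase the converse direction as realizability of the optimizer rather than of all of $\mathcal{Q}$. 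Likewise, the assertion that at most one coordinate clips because at most one $p_i$ exceeds $\tfrac12$ is slightly off (clipping is governed by the threshold $\mu$, not by $\tfrac12$); the cleanest fix is simply to verify that your two candidate solutions satisfy the KKT conditions, which are sufficient here by convexity.
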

\begin{remark}
In words, the optimal guessing strategy in \eqref{eqn:optimalstrategylog} is to guess $x$ in either of the guesses with a probability proportional to $P_{X}(x)$, for all $x$, if $p_{\text{max}}\leq \frac{1}{2}$. If $p_{\text{max}}>\frac{1}{2}$, the optimal strategy is to guess $x_\text{max}$ with probability $1$ (i.e., deterministically) and randomly guess the other $x$ with a probability proportional to $P_X(x)$, for $x\neq x_{\text{max}}$. 
\end{remark}
It can be inferred from \eqref{eqn:logloss2thm} that the minimal expected log-loss under 2 guesses induces a dichotomy on the simplex of probability distributions $P_X$ on $\mathcal{X}$. 
\begin{proof}[Proof of Theorem~\ref{thm:minimal-logloss2}]
We first state some useful lemmas which will be needed in the proof.
\begin{lemma}[Non-negativity]\label{fact:positivityrelent}
If $\tilde{P}_X$ is a probability distribution and $\tilde{Q}_X$ is such that $\tilde{Q}_X(x)\geq 0$ for all $x$ and $\sum\limits_x\tilde{Q}_X(x)\leq 1$ with the same support set as $\tilde{P}_X$. Then $D(\tilde{P}_X||\tilde{Q}_X)\geq 0$.
\end{lemma}
Lemma~\ref{fact:positivityrelent} is proved in Appendix~\ref{appndx:proofsof-facts}.
\begin{lemma}[Non-equality]\label{fact:non-equal}
If $P^*_{\hat{X}_1,\hat{X}_2}$ is an optimal strategy for the optimization problem in \eqref{eqn:logloss2thm}, then 
\begin{align}
    P^*_{\hat{X}_1,\hat{X}_2}(x,x)=0,\ \text{for all}\ x.
\end{align}
\end{lemma}
Lemma~\ref{fact:non-equal} follows from Lemma~\ref{fact:alphaloss2noneq} for the special case of $k=2$.
\begin{lemma}\label{fact:factfromFarkas}
There exists $P_{\hat{X}_1\hat{X}_2}$ such that $\mathrm{P}\left(\hat{X}_1=x\ \emph{or} \ \hat{X}_2=x\right)=2P_X(x)$, for all $x$ if and only if $P_X(x)\leq \frac{1}{2}$, for all $x$.
\end{lemma}
Lemma~\ref{fact:factfromFarkas} follows from Lemma~\ref{theorem:probclass1} for the special case of $k=2$. A consequence of  Lemma~\ref{fact:non-equal} is that if $P^*_{\hat{X}_1\hat{X}_2}$ is an optimal strategy, then we have 
\begin{align}\label{eqn:nonequalimpl}
    \sum_x\mathrm{P}^*\left(\hat{X}_1=x\ \text{or} \ \hat{X}_2=x\right)=2,
\end{align}
where the probability $\mathrm{P}^*$ is taken with respect to the optimal strategy $P^*_{\hat{X}_1\hat{X}_2}$. So, it suffices to consider the optimization in \eqref{eqn:logloss2thm} over the strategies $P_{\hat{X}_1\hat{X}_2}$ satisfying \eqref{eqn:nonequalimpl}. Notice that for such strategies, $P_{\tilde{X}}(x):=\frac{\Pr\left(\hat{X}_1=x\  \text{or}\  \hat{X}_2=x \right)}{2}$ is a probability distribution. Now we are ready to prove Theorem~\ref{thm:minimal-logloss2}. Let $p_{\max}\leq \frac{1}{2}$. Then we have
\begin{align}
&\mathbb{E}\left[\ell_1(\textrm{P}(\hat{X}_1=x\ \text{or} \ \hat{X}_2=x))\right]\nonumber\\
&=\sum_{x}P_{X}(x)\log{\frac{1}{\mathrm{P}(\hat{X}_1=x\ \text{or}\ \hat{X}_2=x)}}\\
&=\sum_{x}P_{X}(x)\log{\frac{1}{P_{X}(x)}}\nonumber\\
&\hspace{12pt}+\sum_{x}P_{X}(x)\log{\frac{P_{X}(x)}{\mathrm{P}(\hat{X}_1=x\ \text{or}\ \hat{X}_2=x)}}\\
&=H(X)-1+\sum_{x}P_{X}(x)\log{\frac{2P_{X}(x)}{\mathrm{P}(\hat{X}_1=x\ \text{or}\ \hat{X}_2=x)}}\\
&=H(X)-1+D(P_{X}\|P_{\tilde{X}})\label{eqn:defnest}\\
&\geq H(X)-1\label{eqn:relentnonneg}
\end{align}
where we have defined $P_{\tilde{X}}(x)=\frac{\Pr\left(\hat{X}_1=x\  \text{or}\  \hat{X}_2=x \right)}{2}$ in \eqref{eqn:defnest}, and \eqref{eqn:relentnonneg} follows from the non-negativity of the relative entropy. We remark that \eqref{eqn:defnest} can be obtained from \eqref{eqn:singleguess-altform} also by substituting $k=2$ and taking limit $\alpha\rightarrow 1$. Notice that the inequality in \eqref{eqn:relentnonneg} can be tight if and only if there exists $P_{\hat{X}_1\hat{X}_2}$ such that $P_{\tilde{X}}(x)=2P_X(x)$, for all $x$, which is possible if and only if $p_{\text{max}}\leq \frac{1}{2}$ using Lemma~\ref{fact:factfromFarkas}.

Now, consider the case when $p_\text{max}>\frac{1}{2}$. Without loss of generality, suppose that $p_\text{max}=P_X(x_1)$. Let $t_x=\mathrm{P}(\hat{X}_1=x\ \text{or}\ \hat{X}_2=x)$. For example, $t_{x_1}=\sum\limits_{j=2}\limits^np_{1j}+\sum\limits_{j=2}\limits^np_{j1}=2\sum\limits_{j=2}\limits^np_{1j}$, where $P_{\hat{X}_1,\hat{X}_2}(x_i,x_j)=p_{ij}$. Also, note that $\sum\limits_{i=1}\limits^nt_x=2$ in view of Lemma~\ref{fact:non-equal}. Then we have
\begin{align}
&\mathbb{E}\left[\ell_{\text{log}}(X,P_{\hat{X}_1,\hat{X}_2})\right]\nonumber\\
&=P_X(x_1)\log{\frac{1}{t_{x_1}}}+\sum_{i=2}^nP_X(x_i)\log{\frac{1}{t_{x_i}}}\\
&=(P_X(x_1)-\sum_{i=2}^nP_X(x_i))\log{\frac{1}{t_{x_1}}}+(\sum_{i=2}^nP_X(x_i))\log{\frac{1}{t_{x_1}}}\nonumber\\
&\hspace{12pt}+\sum_{i=2}^nP_X(x_i)\log{\frac{1}{(\frac{P_X(x_i)}{\sum\limits_{j=2}\limits^nP_X(x_j)})}}\nonumber\\
&\hspace{12pt}+\sum_{i=2}^nP_X(x_i)\log{\frac{{(\frac{P_X(x_i)}{\sum\limits_{j=2}\limits^nP_X(x_j)})}}{t_{x_i}}}\\
&=\sum_{i=2}^nP_X(x_i)\log{\frac{1}{(\frac{P_X(x_i)}{\sum\limits_{j=2}\limits^nP_X(x_j)})}}\nonumber\\
&\hspace{12pt}+(P_X(x_1)-\sum_{i=2}^nP_X(x_i))\log{\frac{1}{t_{x_1}}}\nonumber\\
&\hspace{12pt}+\sum_{i=2}^nP_X(x_i)\log{\frac{{(\frac{P_X(x_i)}{\sum\limits_{j=2}\limits^nP_X(x_j)})}}{t_{x_1}t_{x_i}}}\label{eqn:thm21}
\end{align}
The second term in \eqref{eqn:thm21} is non-negative since $P_X(x_1)>0.5$ is equivalent to $P_X(x_1)>\sum\limits_{i=2}\limits^nP_X(x_i)$. Consider the third term in \eqref{eqn:thm21}.
\begin{align}
    &\sum_{i=2}^nP_X(x_i)\log{\frac{{(\frac{P_X(x_i)}{\sum\limits_{j=2}\limits^nP_X(x_j)})}}{t_{x_1}t_{x_i}}}\nonumber\\
    &= (\sum\limits_{j=2}\limits^nP_X(x_j))\sum_{i=2}^n(\frac{P_X(x_i)}{\sum\limits_{j=2}\limits^nP_X(x_j)})\log{\frac{{(\frac{P_X(x_i)}{\sum\limits_{j=2}\limits^nP_X(x_j)})}}{t_{x_1}t_{x_i}}}\\
    &=(\sum\limits_{j=2}\limits^nP_X(x_j))D(\tilde{P}_X\|\tilde{Q}_X)\label{eqn:relax1}\\
    &\geq 0\label{eqn:relax2},
\end{align}
where \eqref{eqn:relax1} follows by defining $\tilde{P}_X(x_i)=\frac{P_X(x_i)}{\sum\limits_{j=2}\limits^nP_X(x_j)}$ and $\tilde{Q}_X(x_i)=t_{x_1}t_{x_i}$, $i=[2:n]$, \eqref{eqn:relax2} follows from Lemma~\ref{fact:positivityrelent} noticing that $\sum\limits_{i=2}\limits^n\tilde{Q}_X(x_i)=t_{x_1}(\sum\limits_{i={\blue{2}}}\limits^nt_{x_i})=t_{x_1}(2-t_{x_1})\leq 1$. Equality in \eqref{eqn:relax2} is attained if and only if $t_{x_1}=1$ and $p_{1i}+p_{i1}=\frac{P_X(x_i)}{\sum\limits_{j=2}\limits^nP_X(x_j)},i\in[2:n]$. Under this condition, note that the second term in \eqref{eqn:thm21} is also zero and the first term simplifies to $H(X)-h_{\text{b}}(p_{\text{max}})$.
\end{proof}

\subsection{Proof of Theorem~\ref{thm:minloss-alpha-kguesses}}\label{proofofthm2}
We begin with the following lemmas which will be useful in the proof of Theorem~\ref{thm:minloss-alpha-kguesses}. It is intuitive to expect that an optimal strategy, $P^*_{\hat{X}_{[1:k]}}$, puts zero weight on ordered tuples $(a_1,a_2,\dots,a_k)$ (denoted as $a_{[1:k]}$ in the sequel) whenever $a_i=a_j$ for some $i\neq j$, since there is no advantage in guessing the same estimate more than once. The following lemma based on the monotonicity of the $\alpha$-loss formalizes this.
\begin{lemma}\label{fact:alphaloss2noneq}
If $P^*_{\hat{X}_{[1:k]}}$ is an optimal strategy for the optimization problem in \eqref{eqn:optimizationproblem}, then 
\begin{align*}
    P^*_{\hat{X}_{[1:k]}}(a_{[1:k]})=0, \ \text{for all} \ a_{[1:k]} \ \text{s.t.} \ a_i=a_j, \ \text{for some} \ i\neq j.
\end{align*}
\end{lemma}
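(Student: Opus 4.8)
The plan is to argue by contradiction using a mass-reallocation (exchange) argument that exploits the strict monotonicity of $\ell_\alpha$. Suppose $P^*_{\hat{X}_{[1:k]}}$ is optimal for \eqref{eqn:optimizationproblem} yet assigns strictly positive mass $\delta:=P^*_{\hat{X}_{[1:k]}}(a_{[1:k]})>0$ to some tuple $a_{[1:k]}$ with $a_i=a_j$ for some $i\ne j$. I would construct a strictly better strategy, thereby contradicting optimality and forcing the claimed conclusion that every repeat-tuple carries zero mass.

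First I would record that, since $a_{[1:k]}$ contains a repeated coordinate, the set $S$ of distinct symbols it uses satisfies $|S|\le k-1$. Because we are in the regime $k<n$ (as noted just below \eqref{eqn:optimizationproblem}), we get $|S|\le k-1<n=|\mathcal{X}|$, so there is a fresh symbol $b\in\mathcal{X}\setminus S$; moreover $P_X(b)>0$ since $\mathcal{X}=\mathrm{supp}(X)$. I then define a new strategy $P'_{\hat{X}_{[1:k]}}$ by moving the entire mass $\delta$ off of $a_{[1:k]}$ and onto the tuple $a'_{[1:k]}$ obtained by overwriting the $j$-th coordinate (whose value equals $a_i$) with $b$, leaving $P^*$ unchanged everywhere else; this keeps $P'$ a valid distribution. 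The crux is then to track how the union probabilities $q_x:=\mathrm{P}(\bigcup_{l=1}^k(\hat{X}_l=x))$ change. Writing $q'_x$ for the values under $P'$, the reallocation alters $q_x$ only through whether $x$ appears in the old versus the new tuple. Since $a_i$ is still covered by the $i$-th coordinate of $a'_{[1:k]}$, every symbol of $S$ remains covered, so $q'_x=q_x$ for all $x\ne b$, while $b$ (previously uncovered by this tuple) is now covered, giving $q'_b=q_b+\delta>q_b$. As $\ell_\alpha$ is strictly decreasing on $[0,1]$ and $P_X(b)>0$, only the $b$-term of the objective changes and it strictly decreases, so $P'$ attains a strictly smaller expected $\alpha$-loss than $P^*$.

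I expect the main obstacle to be precisely the bookkeeping in the last step: one must verify that overwriting a single repeated coordinate leaves all symbols of $S$ (in particular the duplicated value $a_i=a_j$) still covered, so that no $q_x$ with $x\ne b$ can decrease. This is exactly why it suffices to break just one repeat rather than to de-duplicate the whole tuple, and why a single strict improvement already contradicts optimality. The two supporting facts that make the improvement strict, namely the existence of a fresh symbol $b$ and the positivity of $P_X(b)$, both rest on the standing assumption $k<n$ together with $\mathcal{X}=\mathrm{supp}(X)$; these should be stated explicitly so the argument does not silently assume an unused symbol of positive probability is available.
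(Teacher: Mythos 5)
Your proof is correct and follows essentially the same exchange argument as the paper: reallocate the mass on a repeat-tuple to a tuple that covers one additional symbol of positive probability, and invoke the strict monotonicity of $\ell_\alpha$ to contradict optimality. Your bookkeeping is in fact a bit cleaner than the paper's (which aggregates over multiset permutations and classifies terms into three types), and you are right to flag explicitly the two ingredients the paper leaves implicit, namely $k<n$ and $\mathcal{X}=\mathrm{supp}(X)$.
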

The proof of Lemma~\ref{fact:alphaloss2noneq} is deferred to Appendix~\ref{app1}.

\begin{remark}\label{remark}
An important consequence of Lemma~\ref{fact:alphaloss2noneq} is that, if $P^*_{\hat{X}_{[1:k]}}$ is an optimal strategy for the optimization problem in \eqref{eqn:optimizationproblem}, then we have
\begin{align}\label{eqn:nonequality}
    \sum_x\mathrm{P}^*\left(\bigcup_{i=1}^k(\hat{X}_i=x)\right)=k,
\end{align}
where the probability $\mathrm{P}^*$ is taken with respect to an optimal strategy $P^*_{\hat{X}_{[1:k]}}$. Hence, it suffices to consider the optimization in \eqref{eqn:optimizationproblem} over all the strategies $P_{\hat{X}_{[1:k]}}$ satisfying \eqref{eqn:nonequality}.
\end{remark}
 A vector $(t_1,t_2,\dots,t_n)$ such that $\sum_{i=1}^nt_i=k$ is said to be \emph{admissible} if there exists a strategy $P_{\hat{X}_{[1:k]}}$ satisfying

\begin{align}\label{eqn:systemprob}
    t_i=\mathrm{P}\left(\bigcup_{j=1}^k(\hat{X}_j=x_i)\right ),\ \text{for all}\ i\in[1:n].
\end{align}
Equivalently, \eqref{eqn:systemprob} can be written as the following system of linear equations.
\begin{align}\label{eqn:linsystemk}
    t_i=\sum_{a_{[1:k]}:\bigcup\limits_{j=1}^k(a_j=x_i)} P_{\hat{X}_{[1:k]}}(a_{[1:k]}), \ \text{for all}\ i\in[1:n].
\end{align}
 In general, in order to determine whether a vector $(t_1,t_2,\dots,t_n)$ is admissible or not, we need to solve a linear programming problem (LPP) with number of variables and constraints that are polynomial in the support size of $P_X$, i.e, $n$. Nonetheless, the following lemma based on Farkas' lemma~\cite[Proposition 6.4.3]{Matousek} completely characterizes the necessary and sufficient conditions for the admissibility of a vector $(t_1,t_2,\dots,t_n)$.
\begin{lemma}\label{theorem:probclass1}
A vector $(t_1,t_2,\dots,t_n)$ such that $\sum\limits_{i=1}^nt_i=k$ is admissible if and only if $0\leq t_i\leq 1$, for all $i\in[1:n]$.
\end{lemma}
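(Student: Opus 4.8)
The plan is to dispatch necessity in one line and devote the work to sufficiency via Farkas' lemma. For necessity, if $(t_1,\dots,t_n)$ is admissible with witnessing strategy $P_{\hat{X}_{[1:k]}}$, then each $t_i=\mathrm{P}(\cup_{j=1}^k(\hat{X}_j=x_i))$ is a probability, so $0\le t_i\le 1$ is immediate. This is the trivial direction.

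For sufficiency, assume $0\le t_i\le 1$ for all $i$ and $\sum_i t_i=k$. By the Remark following Lemma~\ref{fact:alphaloss2noneq} it suffices to search for a strategy supported on tuples $a_{[1:k]}\in\mathcal{X}^k$ with distinct entries, so I would restrict the nonnegative variables $P_{\hat{X}_{[1:k]}}(a_{[1:k]})$ to such injective tuples. The defining system \eqref{eqn:linsystemk} then reads $AP=t$, where $A$ is the $0$--$1$ matrix whose rows are indexed by $i\in[1:n]$, whose columns are indexed by injective tuples $a$, and with $A_{i,a}=\mathbbm{1}\{x_i \text{ appears in } a\}$. Admissibility of $(t_1,\dots,t_n)$ is exactly feasibility of $\{P\ge 0:AP=t\}$.

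Next I would invoke Farkas' lemma in the form \cite[Proposition~6.4.3]{Matousek}: either $AP=t$ has a solution $P\ge 0$, or there exists $y\in\mathbb{R}^n$ with $A^\top y\ge 0$ and $t^\top y<0$. The first condition $A^\top y\ge 0$ says $\sum_{i:x_i\in a}y_i\ge 0$ for every injective tuple $a$; since these tuples realize every $k$-element subset $S\subseteq[1:n]$ (here $k<n$), it is equivalent to $\sum_{i\in S}y_i\ge 0$ for all $|S|=k$, i.e.\ the sum of the $k$ smallest coordinates of $y$ is nonnegative. The crux, which I expect to be the main obstacle, is then to rule out the second alternative: I must show that $0\le t_i\le 1$, $\sum_i t_i=k$, and ``the $k$ smallest coordinates of $y$ sum to a nonnegative number'' together force $t^\top y\ge 0$. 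Relabelling so that $y_1\le\cdots\le y_n$, it suffices to prove $\sum_i t_i y_i\ge\sum_{j=1}^k y_j$, since the right-hand side is $\ge 0$ by hypothesis. Writing $M:=\sum_{i=1}^k(1-t_i)=\sum_{i=k+1}^n t_i\ge 0$ (the two expressions agree because $\sum_i t_i=k$), and using $y_i\le y_k$ for $i\le k$ together with $y_i\ge y_{k+1}$ for $i\ge k+1$, a short estimate gives $\sum_i t_i y_i-\sum_{j=1}^k y_j\ge -y_k M+y_{k+1}M=M(y_{k+1}-y_k)\ge 0$. This is precisely the fractional-knapsack/hypersimplex fact that $t\mapsto t^\top y$ over $\{t\in[0,1]^n:\sum_i t_i=k\}$ is minimized by loading the $k$ smallest $y_i$.

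Having excluded the second Farkas alternative, I conclude that a solution $P\ge 0$ with $AP=t$ exists. Finally I would verify that this $P$ is genuinely a probability distribution: summing $t_i=\sum_{a:x_i\in a}P(a)$ over $i$ gives $\sum_i t_i=k\sum_a P(a)$, because each injective tuple covers exactly $k$ indices, so $\sum_a P(a)=1$ follows from $\sum_i t_i=k$. This produces an admissible strategy realizing $(t_1,\dots,t_n)$ and completes the proof.
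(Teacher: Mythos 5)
Your proposal is correct and follows essentially the same route as the paper: the forward direction is immediate, and the converse is established via Farkas' lemma, reducing $A^\top y\ge 0$ to ``the $k$ smallest coordinates of $y$ sum to a nonnegative number'' and then proving $t^\top y\ge \sum_{j=1}^k y_j$ by the same fractional-knapsack estimate (the paper bounds both blocks by $y_{k+1}$ where you use $y_k$ and $y_{k+1}$, which is an immaterial variation). Your explicit check that $\sum_a P(a)=1$ is a welcome detail the paper leaves implicit, but the argument is the same.
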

The proof of Lemma~\ref{theorem:probclass1} is deferred to Appendix~\ref{app2}. We now prove Theorem~\ref{thm:minimal-logloss2}.
From the definition of the minimal expected $\alpha$-loss for $k$ guesses in \eqref{eqn:optimizationproblem}, we have 
\begin{align}
    &\mathcal{ME}^{(k)}_\alpha(P_X)\nonumber\\
    &=\min_{P_{\hat{X}_{[1:k]}}}\frac{\alpha}{\alpha-1}\text{\small$\left[\sum_{i=1}^np_i\left(1-\mathrm{P}\left(\bigcup_{j=1}^k(\hat{X}_j=x_i)\right)^\frac{\alpha-1}{\alpha}\right)\right]$}\\
    &=\min_{P_{\hat{X}_{[1:k]}}}\frac{\alpha}{\alpha-1}\text{\small$\left[\sum_{i=1}^np_i\left(1-\mathrm{P}\left(\bigcup_{j=1}^k(\hat{X}_j=x_i)\right)^\frac{\alpha-1}{\alpha}\right)\right]$}\nonumber\\
  &\hspace{24pt}\text{s.t.}\ \sum\limits_{i=1}^n\mathrm{P}\left(\bigcup_{j=1}^k(\hat{X}_j=x_i)\right)=k\label{eqn:thmnoneq}\\
  &=\min_{t_1,\dots,t_n}\frac{\alpha}{\alpha-1}\left[\sum_{i=1}^np_i(1-t_i^{\frac{\alpha-1}{\alpha}})\right]\nonumber\\
  &\hspace{24pt}\text{s.t.}\ \sum_{i=1}^nt_i=k,\nonumber\\
  &\hspace{43pt}0\leq t_i\leq 1,\ i\in[1:n]\label{eqn:thmduality1},
\end{align}
where \eqref{eqn:thmnoneq} follows from Lemma~\ref{fact:alphaloss2noneq} and Remark~\ref{remark}, and \eqref{eqn:thmduality1} follows from  the change of variable $t_i=\mathrm{P}\left(\bigcup\limits_{j=1}^k(\hat{X}_j=x_i)\right)$ and Lemma~\ref{theorem:probclass1}.
Consider the Lagrangian 
\begin{align}
   \mathcal{L}&=\frac{\alpha}{\alpha-1}\left[\sum_{i=1}^np_i(1-t_i^{\frac{\alpha-1}{\alpha}})\right]+\lambda\left(\sum_{i=1}^nt_i-k\right)\nonumber\\
   &\hspace{12pt}+\sum_{i=1}^n\mu_i(t_i-1)
\end{align}
The Karush-Kuhn-Tucker (KKT) conditions~\cite[Chapter~5.5.3]{boyd_vandenberghe_2004} are given by
\begin{align}
    &\text{(Stationarity):}\ \frac{\partial\mathcal{L}}{\partial t_i}=0,i\in[1:n],\nonumber\\ &\text{i.e.,}\ t_i=\left(\frac{p_i}{\lambda+\mu_i}\right)^\alpha,i\in[1:n],\label{stationarity}\\
   &\text{(Primal feasibility):}\ \sum_{i=1}^nt_i=k,
    0\leq t_i\leq 1, i\in[1:n],\label{eqn:primalfeasibility}\\
   &\text{(Dual feasibility):}\ \mu_i\geq 0,i\in[1:n],\label{dualfeasibility}\\
&\text{(Complementary slackness):}\ \mu_i(t_i-1)=0, i\in[1:n]\label{slackness}.
\end{align}
Notice that for $\alpha>1$, $t^{\frac{\alpha-1}{\alpha}}$ is a concave function of $t$, meaning the overall objective function in \eqref{eqn:thmduality1} is convex. For $\alpha<1$, $t^{\frac{\alpha-1}{\alpha}}$ is a convex function of $t$, but since $\frac{\alpha}{\alpha-1}$ is negative, the overall function is again convex. Thus \eqref{eqn:thmduality1} amounts to a convex optimization problem. Now since KKT conditions are necessary and sufficient conditions for optimality in a convex optimization problem, it suffices to find values of $t_i$, $i\in[1:n]$, $\lambda$, $\mu_i$, $i\in[1:n]$ satisfying \eqref{stationarity}--\eqref{slackness} in order to solve the optimization problem~\eqref{eqn:thmduality1}. 

First we simplify the KKT conditions \eqref{stationarity}--\eqref{slackness} in the following manner. 
\begin{itemize}
\item For $i$ such that $\left(\frac{p_i}{\lambda}\right)^\alpha\leq 1$, we take $\mu_i=0$ and $t_i=\left(\frac{p_i}{\lambda}\right)^\alpha$.
\item For $i$ such that $\left(\frac{p_i}{\lambda}\right)^\alpha> 1$, we take $\mu_i=p_i-\lambda$ and $t_i=1$. Notice that for such $i$, we have $\mu_i>0$, since $p_i>\lambda$.
\end{itemize}
This is equivalent to choosing $t_i=\min\left\{\left(\frac{p_i}{\lambda}\right)^\alpha,1\right\}$ and $\mu_i=0$ or $\mu_i=p_i-\lambda$ depending on whether $t_i=\left(\frac{p_i}{\lambda}\right)^\alpha$ or $t_i=1$, respectively, for each $i\in[1:n]$. Notice that this choice is consistent with the KKT conditions \eqref{stationarity}--\eqref{slackness} except for that $\lambda$ has to be chosen appropriately satisfying $\sum_{i=1}^nt_i=k$ also. In effect, we have essentially reduced the KKT conditions \eqref{stationarity}--\eqref{slackness} to the following equations by eliminating $\mu_i$'s:
\begin{align}
    &t_i=\min\left\{\left(\frac{p_i}{\lambda}\right)^\alpha,1\right\},i\in[1:n],\label{eqn:new1}\\
    &\sum_{i=1}^nt_i=k\label{eqn:new2}.
\end{align}

We solve the equations \eqref{eqn:new1} and \eqref{eqn:new2} by considering the following $k$ mutually exclusive and exhaustive cases (clarified later) based on $P_X$. 

\medskip
\noindent \underline{Case $1$}  $\left(\frac{p_1^\alpha}{\sum_{i=1}^np_i^\alpha}\leq \frac{1}{k}\right)$:\\
Consider the choice
\begin{align}
\lambda=\left(\frac{\sum_{i=1}^np_i^\alpha}{k}\right)^\frac{1}{\alpha},\ 
    t_i=\frac{kp_i^\alpha}{\sum_{j=1}^np_j^\alpha},i\in[1:n].
\end{align}
This choice satisfies \eqref{eqn:new1} and \eqref{eqn:new2} since $\frac{kp_1^\alpha}{\sum_{i=1}^np_i^\alpha}\leq 1$ and $p_1\geq p_2\dots\geq p_n$. 

\bigskip

\noindent\underline{Case `$s$'} ($2\leq s\leq k$) $\left(\frac{(k-s+2)p_{s-1}^\alpha}{\sum_{i=s-1}^np_i^\alpha}>1, \frac{(k-s+1)p_s^\alpha}{\sum_{i=s}^np_i^\alpha}\leq 1\right)$:\\
Consider the choice
\begin{align}
    \lambda&=\left(\frac{\sum_{i=s}^np_i^\alpha}{k-s+1}\right)^\frac{1}{\alpha},\\
    t_i&=1, i\in[1:s-1], t_i=\frac{(k-s+1)p_i^\alpha}{\sum_{j=s}^np_j^\alpha}, i\in[s:n]\label{eqn:newcam}.
\end{align}
This choice satisfies \eqref{eqn:new1}
\begin{itemize}
    \item for $i\in[1:s-1]$ because $\frac{(k-s+2)p_{s-1}^\alpha}{\sum_{i=s-1}^np_i^\alpha}>1$ and $p_1\geq p_2\geq\dots\geq p_{s-1}$, and
    \item for $i\in[s:n]$ because $\frac{(k-s+1)p_s^\alpha}{\sum_{i=s}^np_i^\alpha}\leq 1$ and $p_s\geq p_{s+1}\geq\dots\geq p_n$.
\end{itemize} 
Also, this choice clearly satisfies \eqref{eqn:new2}. 
Finally, notice that the condition for Case `$s$', $2\leq s\leq n$, can be written as 
\begin{align}
    \frac{(k-i+1)p_{i}^\alpha}{\sum\limits_{j=i}^np_j^\alpha}>1, \ \text{for}\ i\in[1:s-1], \frac{(k-s+1)p_s^\alpha}{\sum\limits_{i=s}^np_i^\alpha}\leq 1
\end{align}
since $\frac{(k-s+2)p_{s-1}^\alpha}{\sum\limits_{i=s-1}^np_i^\alpha}>1$ and $p_1\geq p_2\geq\dots\geq p_{s-1}$. This proves that the cases considered above are mutually exclusive and exhaustive, and together with the case-wise analysis gives the expression for the minimal expected $\alpha$-loss for $k$ guesses as presented in Theorem~\ref{thm:minloss-alpha-kguesses}.

\subsection{Proof of Theorem~\ref{thm:robustness-of-alphaleakage}}\label{proofofthm3}
From the definition of $\alpha$-leakage with $k$ guesses in \eqref{defn:kalphaleakage}, we have
\begin{align}
    &\mathcal{L}^{(k)}_\alpha(X\rightarrow Y)\nonumber\\ &=\frac{\alpha}{\alpha-1}\log{\frac{\max\limits_{P_{\hat{X}_{[1:k]}|Y}}\mathbb{E}\left[{\nblue{\frac{\alpha}{\alpha-1}}}\mathrm{P}\left(\bigcup\limits_{i=1}^k(\hat{X}_i=X)|Y\right)^{\frac{\alpha-1}{\alpha}}\right]}{\max\limits_{P_{\hat{X}_{[1:k]}}}\mathbb{E}\left[{\nblue{\frac{\alpha}{\alpha-1}}}\mathrm{P}\left(\bigcup\limits_{i=1}^k(\hat{X}_i=X)\right)^{\frac{\alpha-1}{\alpha}}\right]}}\\
    &=\frac{\alpha}{\alpha-1}\log{\frac{{\nblue{\frac{\alpha}{\alpha-1}}}k^{\frac{\alpha-1}{\alpha}}\mathrm{e}^{\frac{1-\alpha}{\alpha}H_\alpha^A(X|Y)}}{{\nblue{\frac{\alpha}{\alpha-1}}}k^{\frac{\alpha-1}{\alpha}}\mathrm{e}^{\frac{1-\alpha}{\alpha}H_\alpha(X)}}}\label{eqn:leakage1}\\
     &=\frac{\alpha}{\alpha-1}\log{\frac{\mathrm{e}^{\frac{1-\alpha}{\alpha}H_\alpha^A(X|Y)}}{\mathrm{e}^{\frac{1-\alpha}{\alpha}H_\alpha(X)}}}\\
    &=\mathcal{L}_\alpha^{(1)}(X\rightarrow Y),
\end{align}
where \eqref{eqn:leakage1} follows from Theorem~\ref{thm:minloss-alpha-kguesses}, in particular from the case when $s^*=1$ since $P_{X|Y}^{(\alpha)}(x|y)\leq \frac{1}{k}$, for all $x,y$ and $P_{X}^{(\alpha)}(x)\leq \frac{1}{k}$, for all $x$, and $H_\alpha^A(X|Y)$ in \eqref{eqn:leakage1} is the Arimoto conditional entropy~\cite{arimoto1977information} defined as $H_\alpha^A(X|Y)=\frac{\alpha}{1-\alpha}\log{\sum\limits_y\left(\sum\limits_xP_{XY}(x,y)^\alpha\right)^\frac{1}{\alpha}}$.

\subsection{Proof of Theorem~\ref{thm:lowerbound-maxalphaleakage-kguesses}}\label{proofofthm4}
Consider 
\begin{align}\label{eqn:maxleakagek}
   \mathcal{L}_\alpha^{(1)-{\text{max}}}(X\rightarrow Y)=\sup_{U-X-Y}I_\alpha^A(U;Y). 
\end{align}
Then we have the following lemma proved later.
\begin{lemma}\label{claim}
There exists an optimizer $P_{U^*|X}$ for the optimization problem in the RHS of \eqref{eqn:maxleakagek} such that $P_{U^*|Y}^{(\alpha)}(u|y)$, $P_{U^*}^{(\alpha)}(u)\leq\frac{1}{k}$, for all $u,y$, where $P_{{U}^*|Y}^{(\alpha)}({u}|y)=\frac{P_{{U}^*|Y}^{\alpha}({u}|y)}{\sum_{{u}}P_{{U^*}|Y}^{\alpha}({u}|y)}$ and $P_{{U^*}}^{(\alpha)}({u})=\frac{P_{{U}^*}^{\alpha}({u)}}{\sum_{{u}}P_{{U}^*}^{\alpha}({u})}$.
\end{lemma}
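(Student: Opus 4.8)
The plan is to first secure an optimizer of $\sup_{U-X-Y}I_\alpha^A(U;Y)$ from the single-letter characterization of maximal $\alpha$-leakage, and then to \emph{uniformly shatter} that optimizer so as to push all of its tilted probabilities below $1/k$ while leaving its Arimoto mutual information untouched. For the first step, I would appeal to the single-letter form of maximal $\alpha$-leakage due to Liao \emph{et al.}~\cite{Liaoetal}: this reduces the supremum in \eqref{eqn:maxleakagek} to the optimization of a continuous objective over the compact probability simplex supported on $\mathrm{supp}(P_X)$, which is therefore attained and yields a finite-alphabet optimizer $P_{U_0|X}$ with $U_0$ valued in some $\mathcal{U}_0$.

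For the second step, I would fix an integer $m\ge k$ and define a shattered random variable $U^*$ by replacing each symbol $u_0\in\mathcal{U}_0$ with $m$ fresh copies $u_0^{(1)},\dots,u_0^{(m)}$ and setting $P_{U^*|X}(u_0^{(j)}|x)=P_{U_0|X}(u_0|x)/m$. This is a legitimate channel, the Markov chain $U^*-X-Y$ persists because $U^*$ is generated from $X$ alone, and the induced laws are $P_{U^*}(u_0^{(j)})=P_{U_0}(u_0)/m$ and $P_{U^*|Y}(u_0^{(j)}|y)=P_{U_0|Y}(u_0|y)/m$.

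The crux is that this operation is invariant for the Arimoto mutual information. Writing $I_\alpha^A(U;Y)=H_\alpha(U)-H_\alpha^A(U|Y)$, I would check that $H_\alpha(U^*)=\log m+H_\alpha(U_0)$ and $H_\alpha^A(U^*|Y)=\log m+H_\alpha^A(U_0|Y)$: in each case the $m$ identical copies pull out a common factor ($m^{1-\alpha}$ in $H_\alpha$, and $m^{1-\alpha}$ inside the $y$-sum for $H_\alpha^A$) which, after the logarithm and the respective prefactor $\tfrac{1}{1-\alpha}$ or $\tfrac{\alpha}{1-\alpha}$, yields exactly an additive $\log m$. Subtracting, the two $\log m$ terms cancel, so $I_\alpha^A(U^*;Y)=I_\alpha^A(U_0;Y)$ and $U^*$ is again an optimizer. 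The same factorizations show each tilted mass contracts by $1/m$, namely $P_{U^*}^{(\alpha)}(u_0^{(j)})=\tfrac1m P_{U_0}^{(\alpha)}(u_0)\le\tfrac1m\le\tfrac1k$, and, using the per-$y$ normalization, $P_{U^*|Y}^{(\alpha)}(u_0^{(j)}|y)=\tfrac1m P_{U_0|Y}^{(\alpha)}(u_0|y)\le\tfrac1k$; this is precisely the asserted bound, so $P_{U^*|X}$ is the desired optimizer.

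I expect the genuine obstacle to lie in the first step rather than in the shattering: one must argue that the supremum over \emph{all} Markov chains $U-X-Y$ with arbitrary finite alphabets is actually attained, which is delicate because even achieving maximal $\alpha$-leakage in the first place requires spreading mass across a shattered alphabet. Routing through the compact single-letter characterization of Liao \emph{et al.}~\cite{Liaoetal} and reconstructing an attaining channel is what makes attainment rigorous; once an optimizer is in hand, the uniform-shattering computation is a direct verification that moreover goes through uniformly for every $\alpha\in(0,1)\cup(1,\infty)$, since the only $\alpha$-dependent bookkeeping is the sign of $\tfrac{1}{1-\alpha}$, which cancels in the difference $H_\alpha-H_\alpha^A$.
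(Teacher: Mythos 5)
Your core construction is exactly the paper's: shatter each symbol $u$ into $m\ge k$ equiprobable copies via $P_{\tilde U|U}(\tilde u|u)=1/m$, pull the common factor $m^{1-\alpha}$ out of $\sum_{\tilde u}P_{\tilde U}(\tilde u)^\alpha$ and of $\sum_{\tilde u}P_{\tilde U Y}(\tilde u,y)^\alpha$ so that the two resulting $\log m$ terms cancel in $I_\alpha^A=H_\alpha-H_\alpha^A$, and observe that every tilted mass contracts to $\tfrac1m P^{(\alpha)}(\cdot)\le\tfrac1k$. That computation is correct and is the entire substance of the paper's proof of the lemma.

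The gap is in your first step, and you have located it but not closed it. The single-letter characterization of Liao \emph{et al.} gives you that the \emph{value} $\sup_{U-X-Y}I_\alpha^A(U;Y)$ equals a supremum of a continuous functional over the compact simplex of distributions on $\mathrm{supp}(P_X)$, and that latter supremum is attained; it does not hand you back a channel $P_{U_0|X}$ attaining the former. The achievability direction in Liao \emph{et al.} is itself asymptotic: the single-letter value is approached only along a shattering sequence with block sizes tending to infinity, and for generic $P_{XY}$ and $\alpha\in(1,\infty)$ no finite-alphabet $U$ with $U-X-Y$ attains the supremum. So ``reconstructing an attaining channel'' from the optimal $P_{\tilde X}$ is not available, and your $P_{U_0|X}$ need not exist. (The paper's lemma statement shares this looseness by speaking of ``an optimizer,'' but its proof does not rely on attainment: it establishes the reduction ``for every $P_{U|X}$ there exists $P_{\tilde U|U}$ with $I_\alpha^A(\tilde U;Y)=I_\alpha^A(U;Y)$ and $P^{(\alpha)}_{\tilde U|Y},P^{(\alpha)}_{\tilde U}\le 1/k$,'' which is all that the downstream argument needs.) The repair is immediate with what you already have: since your uniform shattering applies verbatim to an arbitrary $P_{U|X}$ and preserves $I_\alpha^A$ exactly while forcing the tilted masses below $1/k$, apply it to an $\epsilon$-optimizer of \eqref{eqn:maxleakagek}, run the chain of inequalities in the proof of Theorem~\ref{thm:lowerbound-maxalphaleakage-kguesses} with that $U$, and let $\epsilon\to 0$. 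Nothing else in your computation needs to change.
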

Then for $P_{U^*|X}$ in the Lemma~\ref{claim}, by definition we have
\begin{align}
  &\mathcal{L}_\alpha^{(k)-{\text{max}}}(X\rightarrow Y)\nonumber\\
  &\geq \frac{\alpha}{\alpha-1}\log{\frac{\max_{P_{\hat{U}_{[1:k]}|Y}}\mathbb{E}[{\nblue{\frac{\alpha}{\alpha-1}}}\mathrm{P}(\bigcup_{i=1}^k(\hat{U}_i=U^*)|Y)^{\frac{\alpha-1}{\alpha}}]}{\max_{P_{\hat{U}_{[1:k]}}}\mathbb{E}[{\nblue{\frac{\alpha}{\alpha-1}}}\mathrm{P}(\bigcup_{i=1}^k\hat{U}_i=U^*)^{\frac{\alpha-1}{\alpha}}]}}\\
  &=\frac{\alpha}{\alpha-1}\log{\frac{{\nblue{\frac{\alpha}{\alpha-1}}}k^{\frac{\alpha-1}{\alpha}}{\mathrm{e}^{\frac{1-\alpha}{\alpha}H_\alpha^A(U^*|Y)}}}{{\nblue{\frac{\alpha}{\alpha-1}}}k^{\frac{\alpha-1}{\alpha}}\mathrm{e}^{\frac{1-\alpha}{\alpha}H_\alpha(U^*)}}}\label{eqn:isit}\\
  &=I_\alpha^A(U^*;Y)\\
  &=\mathcal{L}_\alpha^{(1)-{\text{max}}}(X\rightarrow Y)\label{eqn:claim1},
  \end{align}
  where \eqref{eqn:isit} follows from Theorem~\ref{thm:minloss-alpha-kguesses}, \eqref{eqn:claim1} follows because $P_{U^*|X}$ is an optimizer in \eqref{eqn:maxleakagek}.
  It remains to prove Lemma~\ref{claim}.
  
  \noindent\emph{Proof of Lemma~\ref{claim}:} Note that it suffices to prove that for every $P_{U|X}$, there exists $P_{\tilde{U}|U}$ such that $I_\alpha^A(U;Y)=I_\alpha^A(\tilde{U};Y)$ and $P_{\tilde{U}|Y}^{(\alpha)}(\tilde{u}|y),P^{(\alpha)}_{\tilde{U}}(\tilde{u})\leq \frac{1}{k}$, for all $u,y$.
  To that end, {\blue{we use the ``shattering'' conditional distribution $P_{U|X}$~\cite[Proof of Theorem~1]{Issaetal},\cite[Proof of Theorem~5]{Liaoetal}.}} Let us first define $\tilde{\mathcal{U}}=\bigcup_{u\in\mathcal{U}}\tilde{\mathcal{U}}_u$, with
 $\tilde{\mathcal{U}}_u=\{(u,1),(u,2),\dots,(u,m)\}$ for some $m$ to be fixed later. Let $P_{\tilde{U}|U}(\tilde{u}|u)=\frac{1}{m}$, for $\tilde{u}\in\tilde{\mathcal{U}}_u$. This gives
  \begin{align}
      &P_{\tilde{U}|Y}(\tilde{u}|y)=\sum_u P_{U|Y}(u|y)P_{\tilde{U}|U}(\tilde{u}|u)=\frac{P_{U|Y}(u|y)}{m},\nonumber\\
      &\ \text{for}\ \tilde{u}\in\tilde{\mathcal{U}}_u,\label{eqn:claim1proof1}\\
      &P_{\tilde{U}}(\tilde{u})=\sum_u P_U(u)P_{\tilde{U}|U}(\tilde{u}|u)=\frac{P_U(u)}{m}\ \text{for}\ \tilde{u}\in\tilde{\mathcal{U}}_u,\label{eqn:claim1proof2}\\
      &P_{U|\tilde{U}}(u|\tilde{u})=1\ \text{for}\ \tilde{u}\in\tilde{\mathcal{U}}_u,\label{eqn:claim1proof3}\\
      &P_{Y|\tilde{U}}(y|\tilde{u})=P_{Y|U}(y|u)\ \text{for}\ \tilde{u}\in\tilde{\mathcal{U}}_u\label{eqn:claim1proof4}.
  \end{align}
  Now we have
  \begin{align}
      &I_\alpha^A(\tilde{U};Y)\nonumber\\
      &=\frac{\alpha}{\alpha-1}\log{\frac{\sum_y\left(\sum_u\sum_{\tilde{u}\in\tilde{\mathcal{U}}_u}P_{Y|\tilde{U}}(y|\tilde{u})^\alpha P_{\tilde{U}}(\tilde{u})^\alpha\right)^{\frac{1}{\alpha}}}{\left(\sum_u\sum_{\tilde{u}\in\tilde{\mathcal{U}}_u}P_{\tilde{U}}(\tilde{u})^\alpha\right)^{\frac{1}{\alpha}}}}\\
      &=\frac{\alpha}{\alpha-1}\log{\frac{\sum_y\left(\sum_uP_{Y|{U}}(y|{u})^\alpha\sum_{\tilde{u}\in\tilde{\mathcal{U}}_u} P_{\tilde{U}}(\tilde{u})^\alpha\right)^{\frac{1}{\alpha}}}{\left(\sum_u\sum_{\tilde{u}\in\tilde{\mathcal{U}}_u}P_{\tilde{U}}(\tilde{u})^\alpha\right)^{\frac{1}{\alpha}}}}\label{eqn:claim1proof5}\\
      &=\frac{\alpha}{\alpha-1}\log{\frac{\sum_y\left(\sum_uP_{Y|{U}}(y|{u})^\alpha \frac{P_{{U}}({u})^\alpha}{m^{\alpha-1}}\right)^{\frac{1}{\alpha}}}{\left(\sum_u\frac{P_{{U}}({u})^\alpha}{m^{\alpha-1}}\right)^{\frac{1}{\alpha}}}}\label{eqn:claim1proof6}\\
      &=\frac{\alpha}{\alpha-1}\log{\frac{\sum_y\left(\sum_uP_{Y|{U}}(y|{u})^\alpha P_{{U}}({u})^\alpha\right)^{\frac{1}{\alpha}}}{\left(\sum_uP_{{U}}({u})^\alpha\right)^{\frac{1}{\alpha}}}}\label{eqn:claim1proof7}\\
      &=I_\alpha^A(U;Y),
  \end{align}
  where \eqref{eqn:claim1proof5} follows from \eqref{eqn:claim1proof4}, and \eqref{eqn:claim1proof6} follows from \eqref{eqn:claim1proof2}. Now consider
  \begin{align}
      P^{(\alpha)}_{\tilde{U}|Y}(\tilde{u}|y)&=\frac{P_{\tilde{U}|Y}(\tilde{u}|y)^\alpha}{\sum_u\sum_{\tilde{u}\in\tilde{\mathcal{U}}_u}P_{\tilde{U}|Y}(\tilde{u}|y)^\alpha}\\
      &=\frac{\frac{P_{U|Y}(u|y)^\alpha}{m^\alpha}}{\sum_um\cdot \frac{P_U(u|y)^\alpha}{m^\alpha}}\label{eqn:claim1proof9}\\
      &=\frac{1}{m}\cdot\frac{P_{U|Y}(u|y)^\alpha}{\sum_uP_U(u|y)^\alpha}\\
      &\leq \frac{1}{m}\label{eqn:claim1proof10},
  \end{align}
   where \eqref{eqn:claim1proof9} follows from \eqref{eqn:claim1proof1}. Now we choosing any $m$ such that $m\geq k$ guarantees that $P^{(\alpha)}_{\tilde{U}|Y}(\tilde{u}|y)\leq \frac{1}{k}$. Similarly, $P^{(\alpha)}_{\tilde{U}}(\tilde{u})\leq \frac{1}{k}$.

\section{Proofs for Section~\ref{section:variational}}
\subsection{Proof of Theorem~\ref{theorem:main-varchar}}\label{proofofthm5}
We first prove the lower bound $\text{LHS}\geq\text{RHS}$. Consider
\begin{align}
&\sup_{P_{U|X}} \log \frac{\sup_{P_{\hat{U}}} \mathbb{E}_{U\sim P_U} g(P_{\hat{U}}(U))}{\sup_{P_{\hat{U}}} \mathbb{E}_{U\sim Q_U} g(P_{\hat{U}}(U))}\nonumber\\
&=\sup_{P_{U|X}} \log \frac{\sup_{P_{\hat{U}}} \sum_{x,u} P_X(x)P_{U|X}(u|x) g(P_{\hat{U}}(u))}{\sup_{P_{\hat{U}}} \sum_{x,u} Q_X(x)P_{U|X}(u|x) g(P_{\hat{U}}(u))}\\
&=\sup_{P_{U|X}} \sup_{P_{\hat{U}}} \inf_{Q_{\hat{U}}} \log \frac{\sum_{x,u} P_X(x)P_{U|X}(u|x) g(P_{\hat{U}}(u))}{\sum_{x,u} Q_X(x)P_{U|X}(u|x) g(Q_{\hat{U}}(u))}\\
&\le \sup_{P_{U|X}} \sup_{P_{\hat{U}}}\log \frac{\sum_{x,u} P_X(x)P_{U|X}(u|x) g(P_{\hat{U}}(u))}{\sum_{x,u} Q_X(x)P_{U|X}(u|x) g(P_{\hat{U}}(u))}\\
&\le \sup_{P_{U|X}} \sup_{P_{\hat{U}}}\max_{x:P_X(x)>0} \log \frac{P_X(x) \sum_{u} P_{U|X}(u|x) g(P_{\hat{U}}(u))}{Q_X(x)\sum_{u} P_{U|X}(u|x) g(P_{\hat{U}}(u))}\label{eqn:thm1proofratiofact}\\
&=\max_{x:P_X(x)>0} \log \frac{P_X(x)}{Q_X(x)}\\
&=D_\infty(P_X\|Q_X).
\end{align}
where \eqref{eqn:thm1proofratiofact} follows because $\frac{\sum_ia_i}{\sum_ib_i}\leq \max_i\frac{a_i}{b_i}$, for $b_i>0$, $\forall i$. 

Now we prove the upper bound $\text{LHS}\leq\text{RHS}$. We lower bound the RHS of \eqref{eqn:variationalmain} by using the {\blue{``shattering" $P_{U|X}$~\cite[Proof of Theorem~1]{Issaetal},\cite[Proof of Theorem~5]{Liaoetal}}}. We pick a letter $x^\star$, and let $\mathcal{U}=\{x^\star\}\uplus\biguplus_{x\ne x^\star}\mathcal{U}_{x}$, where $|\mathcal{U}_{x}|=m$ for each $x\ne x^\star$. Then define
\begin{align}
P_{U|X}(u|x)=\begin{cases} 1 & u=x=x^\star,\\ 1/m & u\in\mathcal{U}_x,x\ne x^\star, \\ 0 & \text{otherwise.}\end{cases}
\end{align}
Note that
\begin{align}
P_U(u)=\begin{cases} P_X(x^\star) & u=x^\star\\ P_X(x)/m, & u\in\mathcal{U}_x,x\ne x^\star\end{cases}, 
\end{align}
and
\begin{align}Q_U(u)=\begin{cases} Q_X(x^\star) & u=x^\star\\ Q_X(x)/m, & u\in\mathcal{U}_x,x\ne x^\star\end{cases}.
\end{align}
Consider the numerator of the objective function in the RHS of \eqref{eqn:variationalmain}. We have
\begin{align}
\sup_{P_{\hat{U}}} \mathbb{E}_{U\sim P_U} \left[g(P_{\hat{U}}(U))\right]
&=\sup_{P_{\hat{U}}} \sum_u P_U(u) g(P_{\hat{U}}(u))
\\&\geq \sup_{P_{\hat{U}}} P_X(x^\star) g(P_{\hat{U}}(x^\star))
\\&=P_X(x^\star) \sup_{q\in[0,1]} g(q)\label{eqn:thm1proof1}.
\end{align}
Note that the expression in \eqref{eqn:thm1proof1} is finite because of the assumption on $g$ that $\sup_{q\in[0,1]}g(q)<\infty$. 

To bound the denominator of the objective function in the RHS of \eqref{eqn:variationalmain}, we will need the upper concave envelope of $g$, denoted $g^{**}$. Since $g$ is a function of a scalar, its upper concave envelope can be written as
\begin{align}
g^{**}(q)=\sup_{a,b,\lambda\in[0,1]: a\lambda+b(1-\lambda)=q} \lambda g(a)+(1-\lambda) g(b).
\end{align}
We claim that $g^{**}(0)=0$ and $g^{**}$ is continuous at 0. Fix some $\epsilon>0$. It suffices to show that there exists $\delta>0$ where $g^{**}(q)\leq \epsilon$ for all $q\in[0,\delta]$. By the assumption that $g(0)=0$ and $g$ is continuous at 0, there exists a $\delta$ small enough so that $g(q)\leq \epsilon/2$ for all $q\in[0,\sqrt{\delta}]$. Now, for any $q\in[0,\delta]$, consider any $a,b,\lambda$ where $a\lambda+b(1-\lambda)=q$. We assume without loss of generality that $a\leq q\leq b$. If $b\leq\sqrt{\delta}$, then we have $\lambda g(a)+(1-\lambda) g(b)\leq \epsilon/2$.
If $b>\sqrt{\delta}$, then we have
\begin{align}
q=a\lambda+b(1-\lambda)\geq b(1-\lambda)>\sqrt{\delta}(1-\lambda).
\end{align}
So we get $1-\lambda<\frac{q}{\sqrt{\delta}}\leq \frac{\delta}{\sqrt{\delta}}\leq\sqrt{\delta}$.
Thus
\begin{align}
\lambda g(a)+(1-\lambda) g(b)
&\leq \epsilon/2+\sqrt{\delta} \sup_{q\in[0,1]} g(q)\leq\epsilon\label{eqn:thm1proof5},
\end{align}
where \eqref{eqn:thm1proof5} holds for sufficiently small $\delta$, and again we have used the assumption that $\sup_{q\in[0,1]} g(q)<\infty$. This proves that $g^{**}(q)\leq\epsilon$ whenever $q\in[0,\delta]$. In particular, for sufficiently large $m$, 
\begin{align}\label{eqn:proofthm13}
\sup_{q\in[0,1/m]} g^{**}(q)\leq \epsilon.
\end{align}
Now the denominator in \eqref{eqn:variationalmain} can be upper bounded as
\begin{align}
&\sup_{P_{\hat{U}}} \mathbb{E}_{U\sim Q_U} \left[g(P_{\hat{U}}(U)))\right]\nonumber\\
&=\sup_{P_{\hat{U}}} \sum_u Q_U(u) g(P_{\hat{U}}(u))\\
&=\sup_{P_{\hat{U}}} (Q_X(x^\star) g(P_{\hat{U}}(x^\star))\nonumber\\
&\hspace{12pt}+\sum_{x\ne x^\star} Q_X(x) \sum_{u\in\mathcal{U}_x} \frac{1}{m} g(P_{\hat{U}}(u)))\\
&\leq \sup_{P_{\hat{U}}} Q_X(x^\star) g(P_{\hat{U}}(x^\star))\nonumber\\
&\hspace{12pt}+\sum_{x\neq x^\star} Q_X(x)  g^{**}(\sum_{u\in\mathcal{U}_x} \frac{1}{m} P_{\hat{U}}(u))\label{eqn:thm1proof2}\\
&\leq \sup_{q\in[0,1]} Q_X(x^\star) g(q)+\sum_{x\neq x^\star} Q_X(x) \sup_{q\in[0,1/m]} g^{**}(q)\\
&\leq \sup_{q\in[0,1]} Q_X(x^\star) g(q)+(1-Q_X(x^\star)) \epsilon,\label{eqn:thm1proof3}
\end{align}
where \eqref{eqn:thm1proof2} follows from the definition of the upper concave envelope and \eqref{eqn:thm1proof3} follows from \eqref{eqn:proofthm13} for sufficiently large $m$.

Putting together the bounds in \eqref{eqn:thm1proof1} and \eqref{eqn:thm1proof3}, we have
\begin{align}
   & \sup_{P_{U|X}} \log \frac{\sup_{P_{\hat{U}}} \mathbb{E}_{U\sim P_U} g(P_{\hat{U}}(U))}{\sup_{P_{\hat{U}}} \mathbb{E}_{U\sim Q_U} g(P_{\hat{U}}(U))}\nonumber\\
    &\geq \log\max_{x^\star}\ \sup_{\epsilon>0}\ \frac{\sup_{q\in[0,1]} P_X(x^\star) g(q)}{\sup_{q\in[0,1]} Q_X(x^\star) g(q)+(1-Q_X(x^\star)) \epsilon}\\
        &= \log\max_{x^\star} \frac{\sup_{q\in[0,1]} P_X(x^\star) g(q)}{\sup_{q\in[0,1]} Q_X(x^\star) g(q)}\\
        &=\log\max_{x^\star} \frac{P_X(x^\star)}{Q_X(x^\star)}\label{eqn:thm1proof4}\\
        &=D_{\infty}(P_X\|Q_X),
\end{align}
where \eqref{eqn:thm1proof4} uses the assumption that $\sup_{q\in[0,1]}g(q)<\infty$. Finally, we note that the assumption $\sup_{q\in[0,1]}g(q)>0$ is to ensure that the objective function in \eqref{eqn:variationalmain} is well-defined. In particular, for any $P_{U|X}$, fix a $u^\prime$ such that $P_U(u^\prime)>0$. Then we have
\begin{align}
    \sup_{P_{\hat{U}}} \mathbb{E}_{U\sim P_U} \left[g(P_{\hat{U}}(U))\right]&\geq \sup_{P_{\hat{U}}} g(P_{\hat{U}}(u^\prime))P_U(u^\prime)\\
    &= P_U(u^\prime)\sup_{q\in[0,1]}g(q)>0.
\end{align}
Similarly,  $\sup_{P_{\hat{U}}} \mathbb{E}_{U\sim Q_U} \left[g(P_{\hat{U}}(U))\right]>0$.

\subsection{Proof of Proposition~\ref{proposition}}\label{appendix1}
 For any $R_X\ll P_X$, consider
\begin{align}
   & D(R_X\|Q_X)-D(R_X\|P_X)\nonumber\\
    &=\sum_{x}R_X(x)\log{\frac{R_X(x)}{Q_X(x)}}-\sum_{x}R_X(x)\log{\frac{R_X(x)}{P_X(x)}}\\
    &=\sum_xR_X(x)\log{\frac{P_X(x)}{Q_X(x)}}\\
    &\leq \sum_x R_X(x)\left(\max_{x^\prime}\log\frac{P_X(x^\prime)}{Q_X(x^\prime)}\right)\label{eqn:appendix1}\\
    &=\max_{x^\prime}\log\frac{P_X(x^\prime)}{Q_X(x^\prime)}\\
    &=D_\infty(P_X\|Q_X).
\end{align}
Moreover, for $R_X$ such that $R_X(x^*)=1$ for a fixed $x^*\in\argmax \frac{P_X(x)}{Q_X(x)}$, \eqref{eqn:appendix1} is tight. This proves \eqref{eqn:varinf-Anantharam}.

To prove \eqref{eqn:varinf-Birrell}, for the upper bound, we give a choice of the function $f$ for which the objective function in the RHS of \eqref{eqn:varinf-Birrell} is equal to $D_\infty(P_X||Q_X)$. In particular, fix an $x^*\in\argmax\frac{P_X(x)}{Q_X(x)}$ and consider a function $\tilde{f}$ defined by
\begin{align}
    \tilde{f}(x)=\begin{cases}
    1, &\ \text{if}\ x=x^*,\\
    0, &\ \text{otherwise}
    \end{cases}.
\end{align}
Clearly, we have 
\begin{align}\log\frac{\mathbb{E}_{X\sim P_X}[\tilde{f}(X)]}{\mathbb{E}_{X\sim Q_X}[\tilde{f}(X)]}=\log\frac{P_X(x^*)}{Q_X(x^*)}=D_\infty(P_X||Q_X).
\end{align}
For the lower bound, consider
\begin{align}
    \log\frac{ \mathbb{E}_{X\sim P_X}[f(X)]}{ \mathbb{E}_{X\sim Q_X}[f(X)]}&=\log\frac{\sum_xP_X(x)f(x)}{\sum_xQ_X(x)f(x)}\\
    &\leq \log\max_x\frac{P_X(x)f(x)}{Q_X(x)f(x)}\label{eqn:appendixvar2proof1}\\
    &=\max_x\log\frac{P_X(x)}{Q_X(x)}=D_\infty(P_X\|Q_X),
\end{align}
where \eqref{eqn:appendixvar2proof1} follows from the fact that $\frac{\sum_ia_i}{\sum_ib_i}\leq \max_i\frac{a_i}{b_i}$, for $b_i>0$, $\forall i$. 
Taking supremum over all $f$, we get
\begin{align}
    \sup_{f:\mathcal{X}\rightarrow[0,\infty)}\log\frac{\mathbb{E}_{X\sim P_X}[f(X)]}{\mathbb{E}_{X\sim Q_X}[f(X)]}&\leq \log\max_{x\in\mathcal{X}}\frac{P_X(x)}{Q_X(x)}\\
    &=D_\infty(P_X||Q_X).
\end{align}
This proves \eqref{eqn:varinf-Birrell}.

\subsection{Proof of Corollary~\ref{thm:alphaleakage-variants}}\label{proofofthm6}
The expression for opportunistic maximal $g$-leakage in \eqref{opp-g-leakage} can be simplified as 
\begin{align}
&\tilde{\mathcal{L}}_g^{\text{max}}(X\rightarrow Y)\nonumber\\
     &=\log \sum_{y\in\text{supp}(Y)}P_Y(y)\nonumber\\
     &\hspace{12pt}\sup_{U:U-X-Y}\frac{\sup_{P_{\hat{U}|Y=y}}\mathbb{E}_{U|Y=y}[g(P_{\hat{U}|Y}(U|y))]}{\sup_{P_{\hat{U}}}\mathbb{E}_U[g(P_{\hat{U}}(U))]}\\
     &=\log\sum_{y:P_Y(y)>0}P_Y(y)\max_{x:P_{X|Y}(x|y)>0}\frac{P_{X|Y}(x|y)}{P_{X}(x)}\label{eqn:opportunisticproof2}\\
     &=\log\sum_{y:P_Y(y)>0}P_Y(y)\max_{x:P_{X|Y}(x|y)>0}\frac{P_{Y|X}(y|x)}{P_{Y}(y)}\label{eqn:opportunisticproof1}\\
    &=\log\sum_{y:P_Y(y)>0}\max_{x:P_{X|Y}(x|y)>0}P_{Y|X}(y|x)\\
     &=\log\sum_{y:P_Y(y)>0}\max_{x:P_{X}(x)>0}P_{Y|X}(y|x)\\
     &=I_{\infty}^{\text{S}}(X;Y),
\end{align}
where \eqref{eqn:opportunisticproof2} follows from {\blue{Theorem~\ref{theorem:main-varchar}}} and \eqref{eqn:opportunisticproof1} follows from the Bayes' Rule. 

The expression for maximal realizable $g$-leakage in \eqref{max-rel-gleakage} can be simplified as
\begin{align}
&\mathcal{L}_\alpha^{\text{r}-\max}(X\rightarrow Y)\nonumber\\
    &=\sup_{U:U-X-Y}\log\max_{y\in\text{supp}(Y)}\nonumber\\
    &\hspace{12pt}\frac{\sup_{P_{\hat{U}|Y=y}}\mathbb{E}\left[g(P_{\hat{U}|Y}(U|y))\right]}{\sup_{P_{\hat{U}}}\mathbb{E}\left[g(P_{\hat{U}}(U))\right]}\nonumber\\
    &=\log\max_{y\in\text{supp}(Y)}\nonumber\\
    &\hspace{12pt}\sup_{U:U-X-Y}\frac{\sup_{P_{\hat{U}|Y=y}}\mathbb{E}\left[g(P_{\hat{U}|Y}(U|y))\right]}{\sup_{P_{\hat{U}}}\mathbb{E}\left[g(P_{\hat{U}}(U))\right]}\\
    &=\log\max_{y:P_{Y}(y)>0}\max_{x:P_{X|Y}(x|y)>0}\frac{P_{X|Y}(x|y)}{P_{X}(x)}\label{eqn:realizableproof1}\\
    &=\log\max_{(x,y):P_{XY}(x,y)>0}\frac{P_{XY}(x,y)}{P_X(x)P_Y(y)}\\
    &=D_{\infty}(P_{XY}\|P_X\times P_Y),
\end{align}
where \eqref{eqn:realizableproof1} follows from {\blue{Theorem~\ref{theorem:main-varchar}}}. 

\section{Variational Characterization for $D_\infty(P_X\|Q_X)$ with Gain Function $g(t)=\log{t}$}\label{appendix-loggain}
Here we show that \eqref{eqn:variationalmain} holds for the non-positive gain function $g(t)=\log{t}$ that does not satisfy the conditions in Theorem~\ref{theorem:main-varchar}. The proof of the lower bound follows exactly along the same lines as that of Theorem~\ref{theorem:main-varchar} with the only difference that \eqref{eqn:thm1proofratiofact} holds for negative gain functions too noticing that $\frac{\sum_ia_i}{\sum_ib_i}\leq \max_i\frac{a_i}{b_i}$, for $b_i<0$, $\forall i$. 

For the upper bound, we first note that 
\begin{align}
    \sup_{P_{\hat{U}}}\mathbb{E}_{U\sim P_U}\left[\log{P_{\hat{U}}(U)}\right]&=-\inf_{P_{\hat{U}}} \left(H_P(U)+D(P_U\|P_{\hat{U}})\right)\nonumber\\
    &=-H_P(U).
\end{align}
We lower bound the RHS in \eqref{eqn:variationalmain} with gain function $g(t)=\log{t}$ by using the {\blue{``shattering" $P_{U|X}$~\cite[Proof of Theorem~1]{Issaetal},\cite[Proof of Theorem~5]{Liaoetal}}}. Let $\mathcal{U}=\uplus_{x\in\mathcal{X}}\mathcal{U}_x, \ |\mathcal{U}_x|=m_x$.
Define $P_{U|X}(u|x)=\frac{1}{m_x}$, $u\in\mathcal{U}_x$. So, we have
\begin{align}
       &\sup_{P_{U|X}}\frac{\sup_{P_{\hat{U}}}\mathbb{E}_{U\sim P_U}\left[\log{P_{\hat{U}}(U)}\right]}{\sup_{P_{\hat{U}}}\mathbb{E}_{U\sim Q_U}\left[\log{Q_{\hat{U}}(U)}\right]}
       \geq\frac{-H_P(U)}{-H_Q(U)}\\
       &=\frac{\sum_u\left(\sum_xP_X(x)P_{U|X}(u|x)\right)\log{\left(\sum_xP_X(x)P_{U|X}(u|x)\right)}}{\sum_u\left(\sum_xQ_X(x)P_{U|X}(u|x)\right)\log{\left(\sum_xQ_X(x)P_{U|X}(u|x)\right)}}\\
       &=\frac{\sum_xP_X(x)\log{\frac{P_X(x)}{m_x}}}{\sum_xQ_X(x)\log{\frac{Q_X(x)}{m_x}}}\\
       &=\frac{-H_P(X)/\log{m_{x^*}}-P_X(x^*)}{-H_Q(X)/\log{m_{x^*}}-Q_X(x^*)}\label{eqn:log-var-1}\\
       &=\frac{P_X(x^*)}{Q_X(x^*)}\label{eqn:log-var2}\\
       &=2^{D_\infty(P_X\|Q_X)},
\end{align}
where \eqref{eqn:log-var-1} follows by fixing an $x^*\in\argmax_x\frac{P_X(x)}{Q_X(x)}$ and choosing $m_x=1$, for $x\neq x^*$, and \eqref{eqn:log-var2} then follows by taking limit $m_{x^*}\rightarrow \infty$.

 \section{Opportunistic maximal and maximal realizable $(\alpha=1)$-leakages}\label{appendix:1-leakages}
 We first note that opportunistic maximal $\alpha$-leakage in LHS of \eqref{eqn:opp-alpha-leakage} can be written as (see \cite[Equations~(18)-(20)]{Issaetal})
 \begin{align}
     \tilde{\mathcal{L}}_\alpha^{\text{max}}&(X\rightarrow Y)=\sup_U\frac{\alpha}{\alpha-1}\log\sum_{y\in\text{supp}(Y)}P_Y(y)\nonumber\\
     &\frac{\left(\sum_{u}\left(\sum_xP_{U|XY}(u|x,y)P_{X|Y}(x|y)\right)^\alpha\right)^\frac{1}{\alpha}}{\left(\sum_{u}\left(\sum_xP_{U|XY}(u|x,y)P_{X}(x)\right)^\alpha\right)^\frac{1}{\alpha}}.
\end{align}
Let us define opportunistic maximal and maximal realizable $(\alpha=1)$-leakages as
\begin{align}
& \tilde{\mathcal{L}}^{\text{max}}_1(X\rightarrow Y)=\sup_U\lim_{\alpha\rightarrow 1}\frac{\alpha}{\alpha-1}\log\sum_{y\in\text{supp}(Y)}P_Y(y)\nonumber\\
 &\hspace{12pt}\frac{\left(\sum_{u}\left(\sum_xP_{U|XY}(u|x,y)P_{X|Y}(x|y)\right)^\alpha\right)^\frac{1}{\alpha}}{\left(\sum_{u}\left(\sum_xP_{U|XY}(u|x,y)P_{X}(x)\right)^\alpha\right)^\frac{1}{\alpha}},\\
 &\mathcal{L}_1^{\text{r}-\text{max}}(X\rightarrow Y)=\sup_{U:U-X-Y}\max_{y\in\text{supp(Y)}}\nonumber\\
 &\hspace{12pt}\lim_{\alpha\rightarrow 1}\frac{\alpha}{\alpha-1}\log\frac{\left(\sum_uP_{U|Y}(u|y)^\alpha\right)^{\frac{1}{\alpha}}}{\left(\sum_uP_U(u)^\alpha\right)^{\frac{1}{\alpha}}},
 \end{align}
 by taking the limit first and the supremum next. {\blue{When $X$ and $Y$ are independent, note that the expressions for both the leakages above are equal to zero.}} 
  The following proposition plays a crucial role in proving that these leakages are equal to infinity.
 \begin{proposition}\label{proposition:markov}
{\blue{Let $P_{XY}$ be a probability distribution over a finite alphabet $\mathcal{X}\times\mathcal{Y}$ such that $X$ and $Y$ are not independent. We have}}
 \begin{align}
     \sup_{U:U-X-Y}\left(H(U)-H(U|Y=y)\right)=\infty.
 \end{align}
 \end{proposition}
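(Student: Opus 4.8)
The plan is to exhibit, for the fixed outcome $y\in\mathrm{supp}(Y)$, an explicit sequence of channels $P_{U|X}$ (with $U-X-Y$) along which $H(U)-H(U|Y=y)$ diverges. I would reuse the \emph{shattered} channels that drive the lower bounds in the proofs of Theorems~\ref{thm:maximalg-leakage-concaveg} and \ref{theorem:main-varchar}: write $\mathcal{U}=\biguplus_{x\in\mathcal{X}}\mathcal{U}_x$ with $|\mathcal{U}_x|=m_x$ and set $P_{U|X}(u|x)=1/m_x$ for $u\in\mathcal{U}_x$ and $0$ otherwise. Abbreviating $p_x:=P_X(x)$ and $q_x:=P_{X|Y}(x|y)$, the Markov structure gives $P_U(u)=p_x/m_x$ and $P_{U|Y}(u|y)=q_x/m_x$ for $u\in\mathcal{U}_x$; note $\mathrm{supp}(q)\subseteq\mathrm{supp}(p)$ since $q_x=P_{XY}(x,y)/P_Y(y)$.

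The key step is a direct entropy computation on these distributions. Summing over the $m_x$ equiprobable symbols in each block yields
\begin{align*}
H(U)&=-\sum_{x}p_x\log\frac{p_x}{m_x}=H(X)+\sum_{x}p_x\log m_x,\\
H(U|Y=y)&=-\sum_{x}q_x\log\frac{q_x}{m_x}=H(X|Y=y)+\sum_{x}q_x\log m_x,
\end{align*}
so that
\begin{align*}
H(U)-H(U|Y=y)=\big(H(X)-H(X|Y=y)\big)+\sum_{x}(p_x-q_x)\log m_x.
\end{align*}
The bracketed term is a finite constant independent of the $m_x$. I would then select a single letter $x_0$ with $p_{x_0}>q_{x_0}$, set $m_{x_0}=m$ and $m_x=1$ for all $x\neq x_0$, and let $m\to\infty$; the sum collapses to $(p_{x_0}-q_{x_0})\log m$, which diverges, proving the claim.

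The main obstacle is guaranteeing the existence of such an $x_0$, i.e., a letter that is over-represented in the marginal $P_X$ relative to the conditional $P_{X|Y=y}$. Because both $p$ and $q$ are probability vectors on the same finite set, $\sum_x(p_x-q_x)=0$, so as soon as $P_{X|Y=y}\neq P_X$ there must be at least one coordinate with $p_{x_0}>q_{x_0}\ge 0$; this covers both the case $q_{x_0}=0<p_{x_0}$, where $x_0\in\mathrm{supp}(X)\setminus\mathrm{supp}(X\mid Y=y)$ and shattering $x_0$ inflates $H(U)$ but leaves $H(U|Y=y)$ untouched, and the case $p_{x_0}>q_{x_0}>0$. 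The genuinely degenerate situation is $P_{X|Y=y}=P_X$: here every $U-X-Y$ satisfies $P_{U|Y=y}=P_U$, whence $H(U)-H(U|Y=y)\equiv 0$ and the supremum is $0$ rather than $\infty$. I would therefore either invoke the standing assumption that $Y$ is genuinely informative about $X$ at $y$ (equivalently $P_{X|Y=y}\neq P_X$), which is implicit in treating $y$ as a leaking outcome, or restrict the statement accordingly; this case distinction is the only delicate point, the entropy bookkeeping being routine.
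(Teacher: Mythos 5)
Your proof is essentially the paper's own argument: the same shattered channel $P_{U|X}(u|x)=1/m_x$ on disjoint blocks $\mathcal{U}_x$, the same decomposition $H(U)-H(U|Y=y)=H(X)-H(X|Y=y)+\sum_x(P_X(x)-P_{X|Y}(x|y))\log m_x$, and then inflating the block sizes of over-represented letters; your variant of taking a single $x_0$ with $p_{x_0}>q_{x_0}$ and letting $m_{x_0}=m\to\infty$ is if anything slightly cleaner than the paper's direct assignment $n_x=\infty$. Your caveat about the degenerate case $P_{X|Y=y}=P_X$ is not a defect of your proposal but a genuine gap in the stated proposition: in that case every $U$ with $U-X-Y$ satisfies $P_{U|Y=y}=P_U$, so the supremum is $0$, and the paper's own choice of $n_x$ then sets all $n_x=1$ and the divergent sum vanishes---so the paper implicitly assumes $P_{X|Y=y}\neq P_X$ (e.g., it fails when $X$ and $Y$ are independent), exactly as you observe.
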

 \begin{remark}
 It is easy to see that $\sup_{U:U-X-Y}\left(H(U)-H(U|Y)\right)=\sup_{U:U-X-Y}I(U;Y)=I(X;Y)$, since we have $I(U;Y)\leq I(X;Y)$ for every $U$ such that $U-X-Y$ by data processing inequality. However, if we replace the conditional entropy in the objective function with a conditional entropy where the conditioning is on a particular realization of $Y$ (instead of the random variable itself), it is interesting that the supremum blows up to infinity. 
 \end{remark}
 \begin{proof}
  For a fixed $Y=y$, we have
 \begin{align}
&\sup_{U:U-X-Y}\left(H(U)-H(U|Y=y)\right)\nonumber\\
&\hspace{12pt}\geq \sup_{U:U-X-Y, H(X|U)=0}\left(H(U)-H(U|Y=y)\right)\label{eqn:thm141}.
\end{align}
In the RHS of \eqref{eqn:thm141}, we further assume that  $\mathcal{U}=\bigcup_{x\in\mathcal{X}}\mathcal{U}_x$ be the alphabet of $U$ such that 
\begin{align}
    P_{U|X}(u|x)=\begin{cases}
    \frac{1}{n_x}, &u\in\mathcal{U}_x\\
    0, &\text{otherwise}
    \end{cases},
\end{align}
where $n_x=\lvert\mathcal{U}_x\rvert$ is to be fixed later.
This together with the Markov chain $U-X-Y$ gives 
\begin{align}
P_U(u)&=\sum_xP_X(x)P_{U|X}(u|x)=\frac{P_X(x)}{n_x}, \ \text{for}\ u\in\mathcal{U}_x,\label{eqn:thm143}\\
P_{U|Y}(u|y)&=\sum_xP_{X|Y}(x|y)P_{U|X}(u|x)=\frac{P_{X|Y}(x|y)}{n_x},\nonumber\\ 
&\hspace{12pt}\ \text{for}\ u\in\mathcal{U}_x\label{eqn:thm144}.
\end{align}
Continuing \eqref{eqn:thm141}, we get
\begin{align}
    &\sup_{U:U-X-Y}\left(H(U)-H(U|Y=y)\right)\\
     &\geq -\sum_{x\in\mathcal{X}}\sum_{u\in\mathcal{U}_x}P_U(u)\log{P_U(u)}\nonumber\\
     &\hspace{12pt}+\sum_{x\in\mathcal{X}}\sum_{u\in\mathcal{U}_x}P_{U|Y}(u|y)\log{P_{U|Y}(u|y)}\\  
    &=-\sum_{x\in\mathcal{X}}\sum_{u\in\mathcal{U}_x}\frac{P_X(x)}{n_x}\log{\left(\frac{P_X(x)}{n_x}\right)}\nonumber\\
    &\hspace{12pt}+\sum_{x\in\mathcal{X}}\sum_{u\in\mathcal{U}_x}\frac{P_{X|Y}(x|y)}{n_x}\log{\left(\frac{P_{X|Y}(x|y)}{n_x}\right)}\label{eqn:thm145}\\
    &=-\sum_{x\in\mathcal{X}}P_X(x)\log{\left(\frac{P_X(x)}{n_x}\right)}\nonumber\\
    &\hspace{12pt}+\sum_{x\in\mathcal{X}}P_{X|Y}(x|y)\log{\left(\frac{P_{X|Y}(x|y)}{n_x}\right)}\label{eqn:thm142}\\
    &=-\sum_xP_X(x)\log{P_X(x)}+\sum_xP_X(x)\log{n_x}\nonumber\\
    &\hspace{12pt}+\sum_xP_{X|Y}(x|y)\log{P_{X|Y}(x|y)}-\sum_xP_{X|Y}(x|y)\log{n_x}\\
    &=H(X)-H(X|Y=y)\nonumber\\
    &\hspace{12pt}+\sum_x\log{n_x}\left(P_X(x)-P_{X|Y}(x|y)\right)\label{eqn:thm4final}
\end{align}
where \eqref{eqn:thm145} follows from \eqref{eqn:thm143} and \eqref{eqn:thm143}, and \eqref{eqn:thm142} follows because $\lvert\mathcal{U}_x\rvert=n_x$. Note that $1\leq n_x\leq\infty$, for every $x\in\mathcal{X}$. Now choosing 
\begin{align}
    n_x=\begin{cases}
    1, &\text{for}\ x\  \text{s.t.}\ P_X(x)\leq P_{X|Y}(x|y)\\
    \infty, &\text{for}\ x\ \text{s.t.}\ P_X(x)>P_{X|Y}(x|y),
    \end{cases}
\end{align}
implies that the summation term in \eqref{eqn:thm4final} equals infinity.
 \end{proof}
 Then we have the following theorems.
\begin{theorem}\label{thm:11}
{\blue{Let $P_{XY}$ be a probability distribution over a finite alphabet $\mathcal{X}\times\mathcal{Y}$ such that $X$ and $Y$ are not independent. We have}}
 \begin{align}
      \tilde{\mathcal{L}}^{\emph{max}}_1(X\rightarrow Y)=\infty.
 \end{align}
 \end{theorem}
 \begin{proof}
 Using L'Hospital's rule and the fact that,
 \begin{align}\lim_{\alpha\rightarrow 1}\frac{d}{d\alpha}\left(a^\alpha+b^\alpha\right)^\frac{1}{\alpha}=a\log{a}+b\log{b}-(a+b)\log{(a+b)},
 \end{align} 
 for $a,b\geq 0$, we can verify that, for $p_i,a_i,b_i,c_i,d_i\geq 0$, and $a_i+b_i=c_i+d_i=1$, $i\in[1:t]$,
 \begin{align}\label{eqn:oppaleakage1}
     &\lim_{\alpha\rightarrow 1}\frac{\alpha}{\alpha-1}\log{\left(\sum_{i=1}^tp_i\frac{\left(a_i^\alpha+b_i^\alpha\right)^\frac{1}{\alpha}}{\left(c_i^\alpha+d_i^\alpha\right)^\frac{1}{\alpha}}\right)}\nonumber\\
     &\hspace{12pt}=\sum_{i=1}^tp_i\left(a_i\log{a_i}+b_i\log{b_i}-c_i\log{c_i}-d_i\log{d_i}\right).
 \end{align}
 Now we have
 \begin{align}
   &\tilde{\mathcal{L}}^{\text{max}}_1(X\rightarrow Y)\nonumber\\
   &=\sup_U\lim_{\alpha\rightarrow 1}\frac{\alpha}{\alpha-1}\log\sum_{y\in\text{supp}(Y)}P_Y(y)\nonumber\\
   &\hspace{12pt}\frac{\left(\sum_{u}\left(\sum_xP_{U|XY}(u|x,y)P_{X|Y}(x|y)\right)^\alpha\right)^\frac{1}{\alpha}}{\left(\sum_{u}\left(\sum_xP_{U|XY}(u|x,y)P_{X}(x)\right)^\alpha\right)^\frac{1}{\alpha}}\\
   &=\sup_U \sum_{y\in\text{supp}(Y)}\sum_u\bigg(P_{U|Y}(u|y)\log(P_{U|Y}(u|y))\nonumber\\
   &\hspace{12pt}-\tilde{P}_{U|Y}(u|y)\log{(\tilde{P}_{U|Y}(u|y))}\bigg),\label{eqn:oppaleakage2}
   \end{align}
   where \eqref{eqn:oppaleakage2} follows from \eqref{eqn:oppaleakage1} with
   \begin{align}
       P_{U|Y}(u|y)&=\sum_xP_{U|XY}(u|x,y)P_{X|Y}(x|y),\\
       \tilde{P}_{U|Y}(u|y)&=\sum_xP_{U|XY}(u|x,y)P_{X}(x).
   \end{align}
   Continuing \eqref{eqn:oppaleakage2}, we get
   \begin{align}
       &\tilde{\mathcal{L}}^{\text{max}}_1(X\rightarrow Y)\nonumber\\
       &=\sup_U \sum_{y\in\text{supp}(Y)}P_Y(y)\sum_u\bigg(P_{U|Y}(u|y)\log{(P_{U|Y}(u|y)})\nonumber\\
       &\hspace{12pt}-\tilde{P}_{U|Y}(u|y)\log{(\tilde{P}_{U|Y}(u|y))}\bigg)\\
       &=\sup_{(U_y:y\in\mathcal{Y})-X-Y} \sum_{y\in\text{supp}(Y)}P_Y(y)\nonumber\\
       &\hspace{12pt}\sum_u\left(P_{U_y|Y}(u|y)\log{P_{U_y|Y}(u|y))}-P_{U_y}(u)\log{P_{U_y}(u)}\right)\\
       &=\sum_{y\in\text{supp}(Y)}P_Y(y)\sup_{U_y:U_y-X-Y}\nonumber\\
       &\hspace{12pt}\sum_u\left(P_{U_y|Y}(u|y)\log{P_{U_y|Y}(u|y))}-P_{U_y}(u)\log{P_{U_y}(u)}\right)\\
        &=\sum_{y\in\text{supp}(Y)}P_Y(y)\sup_{U:U-X-Y}\nonumber\\
        &\hspace{12pt}\sum_u\left(P_{U|Y}(u|y)\log{P_{U|Y}(u|y))}-P_{U}(u)\log{P_{U}(u)}\right)\\
        &=\sum_{y\in\text{supp}(Y)}P_Y(y)\sup_{U:U-X_-Y}\left(H(U)-H(U|Y=y)\right)\\
        &=\infty\label{eqn:oppaleakage3},
   \end{align}
   where \eqref{eqn:oppaleakage3} follows from Proposition~\ref{proposition:markov}.
 \end{proof}
 \begin{theorem}\label{thm:12}
 {\blue{Let $P_{XY}$ be a probability distribution over a finite alphabet $\mathcal{X}\times\mathcal{Y}$ such that $X$ and $Y$ are not independent. We have}}
 \begin{align}
 \mathcal{L}_1^{\emph{r}-\emph{max}}(X\rightarrow Y)=\infty.
 \end{align}
 \end{theorem}
 \begin{proof}
 Consider
 \begin{align}
     &\mathcal{L}_1^{\text{r}-\text{max}}(X\rightarrow Y)\nonumber\\
     &:=\sup_{U:U-X-Y}\max_{y\in\text{supp}(Y)}\lim_{\alpha\rightarrow 1}\frac{\alpha}{\alpha-1}\log\frac{\left(\sum_uP_{U|Y}(u|y)^\alpha\right)^{\frac{1}{\alpha}}}{\left(\sum_uP_U(u)^\alpha\right)^{\frac{1}{\alpha}}}\\
     &=\sup_{U:U-X-Y}\max_{y\in\text{supp}(Y)}\lim_{\alpha\rightarrow 1}\bigg(\frac{\alpha}{\alpha-1}\log{(\sum_uP_{U|Y}(u|y)^\alpha)^{\frac{1}{\alpha}}}\nonumber\\
     &\hspace{12pt}\frac{\alpha}{\alpha-1}\log{(\sum_uP_U(u)^\alpha)^{\frac{1}{\alpha}}}\bigg)\\
     &=\sup_{U:U-X-Y}\max_{y\in\text{supp}(Y)}\lim_{\alpha\rightarrow 1}\left(H_\alpha(U)-H_\alpha(U|Y-y)\right)\\
     &=\sup_{U:U-X-Y}\max_{y\in\text{supp}(Y)}\left(H(U)-H(U|Y=y)\right)\\
     &=\max_{y\in\text{supp}(Y)}\sup_{U:U-X-Y}\left(H(U)-H(U|Y=y)\right)\\
     &=\infty\label{eqn:propmarkov1},
 \end{align}
 where \eqref{eqn:propmarkov1} follows from Proposition~\ref{proposition:markov}.
 \end{proof}
 {\blue{We remark that Theorem~\ref{thm:12} also follows from Theorem~\ref{thm:11} by noticing that $\mathcal{L}^{r-\text{max}}(X\rightarrow Y)\geq \tilde{\mathcal{L}}^\text{max}(X\rightarrow Y)$.}}
\section{Bregman Divergence}\label{appx:Bregman}
Let $F: \Omega \rightarrow \mathbb{R}$ be a continuously differentiable, strictly convex function on a closed, convex set $\Omega$.
The Bregman divergence associated with $F$~\cite{BREGMAN1967200} for points $p,q \in \Omega$ is the difference between the value of $F$ at $p$ and the value of the first-order Taylor expansion of $F$ around $q$ evaluated at point $p$, i.e.,
\begin{align}
B_{F}(p,q) = F(p) - F(q) - \langle \nabla F(q), p-q \rangle.
\end{align}
Let $p=(p_1,\dots,p_d)$ and $q=(q_1,\dots,q_d)$ be two discrete probability distributions. Consider $F(p)=\frac{\alpha}{\alpha-1}\left(\left(\sum_ip_i^\alpha\right)^\frac{1}{\alpha}-1\right)$ which is a strictly convex function on  the $d$-simplex. The associated Bregman divergence is given by
\begin{align}
    &B_F(p,q)=F(p)-F(q)-\langle \nabla F(q), p-q \rangle\\
    &=\frac{\alpha}{\alpha-1}\bigg[(\sum_ip_i^\alpha)^\frac{1}{\alpha}-1-(\sum_iq_i^\alpha)^\frac{1}{\alpha}+1\nonumber\\
    &\hspace{12pt}-\sum_i(p_i-q_i)(\sum_jq_j^\alpha)^\frac{1-\alpha}{\alpha}q_i^{\alpha-1}\bigg]\\
    &=\frac{\alpha}{\alpha-1}\bigg[(\sum_ip_i^\alpha)^\frac{1}{\alpha}-(\sum_jq_j^\alpha)^\frac{1-\alpha}{\alpha}\nonumber\\
    &\hspace{12pt}\times(\sum_iq_i^\alpha+\sum_ip_iq_i^{\alpha-1}-\sum_iq_i^\alpha)\bigg]\\
    &=\frac{\alpha}{\alpha-1}\bigg[(\sum_ip_i^\alpha)^\frac{1}{\alpha}-(\sum_jq_j^\alpha)^\frac{1-\alpha}{\alpha}\sum_ip_iq_i^{\alpha-1}\bigg]\\
    &=\frac{\alpha}{\alpha-1}\bigg[(\sum_ip_i^\alpha)^\frac{1}{\alpha}-(\sum_ip_i^\alpha)^\frac{1}{\alpha}\nonumber\\
    &\hspace{12pt}\times\sum_i(\frac{p_i^\alpha}{\sum_jp_j^\alpha})^\frac{1}{\alpha}(\frac{q_i^\alpha}{\sum_jq_j^\alpha})^{1-\frac{1}{\alpha}}\bigg]\\
    &=\frac{\alpha}{\alpha-1}\bigg[(\sum_ip_i^\alpha)^\frac{1}{\alpha}(1-\mathrm{e}^{\frac{1-\alpha}{\alpha} D_{\frac{1}{\alpha}}(p^{(\alpha)}||q^{(\alpha)})})\bigg],
\end{align}
where $p^{(\alpha)}:=\left(\frac{p_i^\alpha}{\sum_jp_j^\alpha}:i\in[1:d]\right)$ and $q^{(\alpha)}:=\left(\frac{q_i^\alpha}{\sum_jq_j^\alpha}:i\in[1:d]\right)$ are tilted versions of the distributions $p$ and $q$, respectively. Thus the associated Bregman divergence is related to the R\'{e}nyi divergence of order $\frac{1}{\alpha}$ in the manner 
\begin{align}\label{eqn:bregmanfromalpha}
B_F(p,q)=\frac{\alpha}{\alpha-1}\bigg[(\sum_ip_i^\alpha)^\frac{1}{\alpha}\left(1-\mathrm{e}^{\frac{1-\alpha}{\alpha} D_{\frac{1}{\alpha}}(p^{(\alpha)}||q^{(\alpha)})}\right)\bigg].
\end{align}
Note that $\lim_{\alpha\rightarrow 1}F(p)=\sum_{i=1}^dp_i\log{p_i}$ and the resulting Bregman divergence is equal to relative entropy. 
\section{Proofs of {\nblue{Lemma}}~\ref{fact:positivityrelent}}\label{appndx:proofsof-facts}
Let $A=\{x:\tilde{P}_X(x)>0\}$. Then
\begin{align}
-D(\tilde{P}_X||\tilde{Q}_X)&=\sum_{x\in A}\tilde{P}_X(x)\log\left(\frac{\tilde{Q}_X(x)}{\tilde{P}_X(x)}\right)\\
&\leq \log\left(\sum_{x\in A}\tilde{P}_X(X) \frac{\tilde{Q}_X(x)}{\tilde{P}_X(x)} \right)\label{eqn:jensen's}\\
&= \log\left(\sum_{x\in A}\tilde{Q}_X(x) \right)\\
&\leq \log(1)\label{eqn:factgivenlessthan1}\\
&=0,
\end{align}
where \eqref{eqn:jensen's} follows by Jensen's inequality, \eqref{eqn:factgivenlessthan1} follows since $\sum\limits_x\tilde{Q}_X(x)\leq 1$.

\section{Proof of Lemma~\ref{fact:alphaloss2noneq}}\label{app1}
Let $\mathcal{X}=\{x_1,x_2,\dots,x_n\}$ and $P_X(x_i)=p_i$, for $i\in[1:n]$. Consider $a_{[1:k]}$ such that $a_i=a_j$ for some $i\neq j$. There exists a $b_{[1:k]}$ such that for each $i\in[1:k]$, we have $a_i=b_j$ for some $j$ and $b_r\neq a_j$ for some $r$ and any $j$. Consider
\begin{align}
    \frac{\alpha}{\alpha-1}\text{\small$\left[\sum_{i=1}^np_i\left(1-\mathrm{P}^*\left(\bigcup_{j=1}^k(\hat{X}_j=x_i)\right)^\frac{\alpha-1}{\alpha}\right)\right]$}.\label{eqn:appendix}
\end{align}
Let $\mathcal{A}$ and $\mathcal{B}$ denote the sets of all multiset permutations of $a_{[1:k]}$ and $b_{[1:k]}$, respectively, when $a_{[1:k]}$ and $b_{[1:k]}$ are treated as  multisets. Let $q_{a_1,a_2,\dots,a_k}:=\sum_{r_{[1:k]}\in\mathcal{A}}P_{\hat{X}_{[1:k]}}(r_{[1:k]})$ and $q_{b_1,b_2,\dots,b_k}:=\sum_{r_{[1:k]}\in\mathcal{B}}P_{\hat{X}_{[1:k]}}(r_{[1:k]})$.
Each term out of the $n$ terms in \eqref{eqn:appendix} will either contain both $q_{a_{[1:k]}}$ and $q_{b_{[1:k]}}$ (say, type 1), contain just $q_{b_{[1:k]}}$ alone (say, type 2), or does not contain both (say, type 3). We now construct a new strategy $P_{\hat{X}_{[1:k]}}$ by incorporating the value of $q_{a_{[1:k]}}$ into $q_{b_{[1:k]}}$ making the value of new $q_{a_{[1:k]}}$ equal to zero. Now the values of the terms of type 2 strictly decrease as the $\alpha$-loss function is strictly decreasing in its argument while retaining the values of the terms of types 1 and 3. This leads to a contradiction since $P^*_{X_{[1:k]}}$ is assumed to be an optimal strategy. So, $P_{\hat{X}_{[1:k]}}(a_{[1:k]})=0$. Repeating the same argument as above for all such $a_{[1:k]}$ s.t. $a_i=a_j$, for some $i\neq j$ completes the proof. 

\section{Proof of Lemma~\ref{theorem:probclass1}}\label{app2}
 \noindent\underline{`Only if' part}: Suppose a vector $(t_1,t_2,\dots,t_n)$ is admissible. Then there exists $P_{\hat{X}_{[1:k]}}$ satisfying \eqref{eqn:linsystemk}. Using \eqref{eqn:systemprob}, since $t_i$ is probability of a certain event, we have 
 \begin{align*}
     0\leq t_i \leq 1,\ \text{for}\ i\in[1:n]. 
 \end{align*}
 \underline{`If' part}: Suppose $0\leq t_i\leq 1$, for $i\in[1:n]$. Summing up all the equations in \eqref{eqn:linsystemk} over $i\in[1:n]$ and using $\sum_{i=1}^nt_i=k$, we get
 \begin{align*}
     P_{\hat{X}_{[1:k]}}(a_{[1:k]})=0,\ \text{for all}\ a_{[1:k]}\ \text{s.t.}\ a_i=a_j, \ \text{for some}\ i\neq j.
 \end{align*}
 With this, \eqref{eqn:linsystemk} can be written in the form of system of linear equation only in terms of non-negative variables of the form 
 \begin{align}
 q_{i_1,i_2,\dots,i_k}:=\sum\limits_{\sigma\in S_n}P_{\hat{X}_{[1:k]}}(x_{i_{\sigma(1)}},x_{i_{\sigma(2)}},\dots,x_{i_{\sigma(n)}}),
 \end{align}
 where $i_1,i_2,\dots,i_k$ are all distinct and belong to $[1:n]$. Here the sum is computed over all the permutations $\sigma$ of the set $\{1,2,\dots,n\}$. The set of all such permutations is denoted by $S_n$. With this, the system of equations in \eqref{eqn:linsystemk} can be written in the form $AQ=b$, $Q\geq 0$. Here $A$ is a $n\times\binom{n}{k}$-matrix, where the rows are indexed by $i\in[1:n]$ and columns are indexed by $(i_1,i_2,\dots,i_k)$, where $i_1,i_2,\dots,i_k$ are all distinct and belong to $[1:n]$. In particular, in the column indexed by $(i_1,i_2,\dots,i_k)$, the entry of $A$ corresponding to $i_j^{\text{th}}$ row is $1$, for $j\in[1:k]$. All the remaining entries of the matrix $A$ are zeros. $Q$ is $\binom{n}{k}$-length vector of variables of the form $q_{i_1,i_2,\dots,i_k}$. ${b}$ is an $n$-length vector with $b_i=t_i$. We are interested in the feasibility of the system $AQ=b$, $Q\geq 0$. We use the Farkas' lemma~\cite[Proposition 6.4.3]{Matousek} in linear programming for checking this. It states that the system $A{Q}={b}$ has a non-negative solution if and only if every ${y}\in\mathbbm{R}^n$ with ${y}^\top A\geq 0$ also implies ${y}^\top {b}\geq 0$. For our problem, ${y}^\top A\geq 0$ is equivalent to 
 \begin{align}\label{eqn:farkashyp}
     \sum_{j=1}^ky_{i_j}\geq 0,\ \text{for all distinct}\ i_1,i_2,\dots,i_k\in[1:n]. 
 \end{align}
 Without loss of generality, let us assume that $y_i\leq y_{i+1}$, $i\in[1:n-1]$. Then \eqref{eqn:farkashyp} is equivalent to
 \begin{align}\label{eqn:farkashyp1}
     \sum_{i=1}^ky_i\geq 0.
 \end{align}
 Now consider
 \begin{align}
     &\sum_{i=1}^ny_it_i\nonumber\\
     &=\sum_{i=1}^ky_it_i+y_{k+1}t_{k+1}+\sum_{i=k+2}^ny_it_i\\
     &=\sum_{i=1}^k{y_i}+\sum_{i=1}^ky_i(t_i-1)+y_{k+1}t_{k+1}+\sum_{i=k+2}^ny_it_i\\
     &\geq\sum_{i=1}^ky_i+y_{k+1}\sum_{i=1}^k(t_i-1)+y_{k+1}t_{k+1}+\sum_{i=k+2}^ny_it_i\label{theorem:probclass1proof1}\\
     &\geq\sum_{i=1}^k{y_i}+y_{k+1}\sum_{i=1}^k(t_i-1)+y_{k+1}t_{k+1}+y_{k+1}\sum_{i=k+2}^nt_i\label{theorem:probclass1proof2}\\
     &=\sum_{i=1}^ky_i+y_{k+1}\left(\sum_{i=1}^nt_i-k\right)\\
     &={\sum_{i=1}^ky_i}\label{theorem:probclass1proof3}\\
     &\geq 0\label{theorem:probclass1proof4},
 \end{align}
 where \eqref{theorem:probclass1proof1} follows because $y_i\leq y_{k+1}$ and $t_i-1\leq 0$, for $i\in[1:k]$, \eqref{theorem:probclass1proof2} follows because $y_i\geq y_{k+1}$, for $i\in[k+2:n]$, and \eqref{theorem:probclass1proof3} follows because $\sum_{i=1}^nt_i=k$, \eqref{theorem:probclass1proof4} follows from \eqref{eqn:farkashyp1}. Now using the Farkas' lemma, $AQ=b$, has a non-negative solution, i.e., the vector $(t_1,t_2,\dots,t_n)$ is admissible.

\bibliographystyle{IEEEtran}
\bibliography{Bibliography}

\begin{thebibliography}{10}
\providecommand{\url}[1]{#1}
\csname url@samestyle\endcsname
\providecommand{\newblock}{\relax}
\providecommand{\bibinfo}[2]{#2}
\providecommand{\BIBentrySTDinterwordspacing}{\spaceskip=0pt\relax}
\providecommand{\BIBentryALTinterwordstretchfactor}{4}
\providecommand{\BIBentryALTinterwordspacing}{\spaceskip=\fontdimen2\font plus
\BIBentryALTinterwordstretchfactor\fontdimen3\font minus
  \fontdimen4\font\relax}
\providecommand{\BIBforeignlanguage}[2]{{%
\expandafter\ifx\csname l@#1\endcsname\relax
\typeout{** WARNING: IEEEtran.bst: No hyphenation pattern has been}%
\typeout{** loaded for the language `#1'. Using the pattern for}%
\typeout{** the default language instead.}%
\else
\language=\csname l@#1\endcsname
\fi
#2}}
\providecommand{\BIBdecl}{\relax}
\BIBdecl

\bibitem{Dwork06}
C.~Dwork, ``Differential privacy,'' in \emph{Automata, Languages and
  Programming}, M.~Bugliesi, B.~Preneel, V.~Sassone, and I.~Wegener, Eds.,
  2006, pp. 1--12.

\bibitem{smith2009foundations}
G.~Smith, ``On the foundations of quantitative information flow,'' in
  \emph{International Conference on Foundations of Software Science and
  Computational Structures}, 2009, pp. 288--302.

\bibitem{braun2009quantitative}
C.~Braun, K.~Chatzikokolakis, and C.~Palamidessi, ``Quantitative notions of
  leakage for one-try attacks,'' \emph{Electronic Notes in Theoretical Computer
  Science}, vol. 249, pp. 75--91, 2009.

\bibitem{kasiviswanathan2011can}
S.~P. Kasiviswanathan, H.~K. Lee, K.~Nissim, S.~Raskhodnikova, and A.~Smith,
  ``What can we learn privately?'' \emph{SIAM Journal on Computing}, vol.~40,
  no.~3, pp. 793--826, 2011.

\bibitem{Alvimetal12}
M.~S. Alvim, K.~Chatzikokolakis, C.~Palamidessi, and G.~Smith, ``Measuring
  information leakage using generalized gain functions,'' in \emph{IEEE
  Computer Security Foundations Symposium}, 2012, pp. 265--279.

\bibitem{duchi2013local}
J.~C. Duchi, M.~I. Jordan, and M.~J. Wainwright, ``Local privacy and
  statistical minimax rates,'' in \emph{2013 IEEE 54th Annual Symposium on
  Foundations of Computer Science}.\hskip 1em plus 0.5em minus 0.4em\relax
  IEEE, 2013, pp. 429--438.

\bibitem{Alvimetal14}
M.~S. Alvim, K.~Chatzikokolakis, A.~Mciver, C.~Morgan, C.~Palamidessi, and
  G.~Smith, ``Additive and multiplicative notions of leakage, and their
  capacities,'' in \emph{IEEE Computer Security Foundations Symposium}, 2014,
  pp. 308--322.

\bibitem{MerhavA99}
N.~Merhav and E.~Arikan, ``The shannon cipher system with a guessing
  wiretapper,'' \emph{IEEE Transactions on Information Theory}, vol.~45, no.~6,
  pp. 1860--1866, 1999.

\bibitem{CalmonF12}
F.~du~Pin~Calmon and N.~Fawaz, ``Privacy against statistical inference,'' in
  \emph{Allerton Conference on Communication, Control, and Computing
  (Allerton)}, 2012, pp. 1401--1408.

\bibitem{SchielerC14}
C.~Schieler and P.~Cuff, ``The henchman problem: Measuring secrecy by the
  minimum distortion in a list,'' in \emph{2014 IEEE International Symposium on
  Information Theory}, 2014, pp. 596--600.

\bibitem{IssaW15}
I.~Issa and A.~B. Wagner, ``Measuring secrecy by the probability of a
  successful guess,'' in \emph{2015 53rd Annual Allerton Conference on
  Communication, Control, and Computing (Allerton)}, 2015, pp. 980--987.

\bibitem{Asoodeh15}
S.~Asoodeh, F.~Alajaji, and T.~Linder, ``On maximal correlation, mutual
  information and data privacy,'' in \emph{IEEE Canadian Workshop on
  Information Theory (CWIT)}, 2015, pp. 27--31.

\bibitem{Issaetal}
I.~{Issa}, A.~B. {Wagner}, and S.~{Kamath}, ``An operational approach to
  information leakage,'' \emph{IEEE Transactions on Information Theory},
  vol.~66, no.~3, pp. 1625--1657, 2020.

\bibitem{Rassouli20}
B.~Rassouli and D.~G\"{u}nd\"{u}z, ``Optimal utility-privacy trade-off with
  total variation distance as a privacy measure,'' \emph{IEEE Transactions on
  Information Forensics and Security}, vol.~15, pp. 594--603, 2020.

\bibitem{Liaoetal}
J.~{Liao}, L.~{Sankar}, O.~{Kosut}, and F.~P. {Calmon}, ``Maximal
  $\alpha$-leakage and its properties,'' in \emph{IEEE Conference on
  Communications and Network Security}, 2020, pp. 1--6.

\bibitem{saeidian2022pointwise}
S.~Saeidian, G.~Cervia, T.~J. Oechtering, and M.~Skoglund, ``Pointwise maximal
  leakage,'' \emph{IEEE Transactions on Information Theory}, vol.~69, no.~12,
  pp. 8054--8080, 2023.

\bibitem{Issa2017}
I.~Issa, ``An operational approach to information leakage,'' Ph.D.
  dissertation, Cornell University, Ithaca, NY, 2017.

\bibitem{QianZTL22}
C.~Qian, R.~Zhou, C.~Tian, and T.~Liu, ``Improved weakly private information
  retrieval codes,'' in \emph{IEEE International Symposium on Information
  Theory (ISIT)}, 2022, pp. 2827--2832.

\bibitem{LiaoSCT17}
J.~Liao, L.~Sankar, F.~P. Calmon, and V.~Y.~F. Tan, ``Hypothesis testing under
  maximal leakage privacy constraints,'' in \emph{2017 IEEE International
  Symposium on Information Theory (ISIT)}, 2017, pp. 779--783.

\bibitem{Liuetal21}
Y.~Liu, L.~Ong, S.~Johnson, J.~Kliewer, P.~Sadeghi, and P.~L. Yeoh,
  ``Information leakage in zero-error source coding: A graph-theoretic
  perspective,'' in \emph{2021 IEEE International Symposium on Information
  Theory (ISIT)}, 2021, pp. 2590--2595.

\bibitem{SaedianCOS21}
S.~Saeidian, G.~Cervia, T.~J. Oechtering, and M.~Skoglund, ``Quantifying
  membership privacy via information leakage,'' \emph{IEEE Transactions on
  Information Forensics and Security}, vol.~16, pp. 3096--3108, 2021.

\bibitem{NityaYSM22}
N.~Sathyavageeswaran, R.~D. Yates, A.~D. Sarwate, and N.~Mandayam, ``Privacy
  leakage in discrete-time updating systems,'' in \emph{IEEE International
  Symposium on Information Theory (ISIT)}, 2022, pp. 2076--2081.

\bibitem{renyi1961measures}
A.~R{\'e}nyi, ``On measures of entropy and information,'' in \emph{Proceedings
  of the Fourth Berkeley Symposium on Mathematical Statistics and Probability,
  Volume 1: Contributions to the Theory of Statistics}, 1961, pp. 547--561.

\bibitem{Shayevitz11}
O.~Shayevitz, ``On {R}{\'{e}}nyi measures and hypothesis testing,'' in
  \emph{IEEE International Symposium on Information Theory}, 2011, pp.
  894--898.

\bibitem{VanH14}
T.~van Erven and P.~Harremos, ``Rényi divergence and {K}ullback-{L}eibler
  divergence,'' \emph{IEEE Transactions on Information Theory}, vol.~60, no.~7,
  pp. 3797--3820, 2014.

\bibitem{Sason16}
I.~Sason, ``On the {R}\'{e}nyi divergence, joint range of relative entropies,
  and a channel coding theorem,'' \emph{IEEE Transactions on Information
  Theory}, vol.~62, no.~1, pp. 23--34, 2016.

\bibitem{Anantharam}
V.~Anantharam, ``A variational characterization of {R}\'{e}nyi divergences,''
  \emph{IEEE Transactions on Information Theory}, vol.~64, no.~11, pp.
  6979--6989, 2018.

\bibitem{BirrellDKRW21}
J.~Birrell, P.~Dupuis, M.~A. Katsoulakis, L.~Rey-Bellet, and J.~Wang,
  ``Variational representations and neural network estimation of {R}\'{e}nyi
  divergences,'' \emph{SIAM Journal on Mathematics of Data Science}, vol.~3,
  no.~4, pp. 1093--1116, 2021.

\bibitem{arimoto1971information}
S.~Arimoto, ``Information-theoretical considerations on estimation problems,''
  \emph{Information and control}, vol.~19, no.~3, pp. 181--194, 1971.

\bibitem{LiaoKS20}
J.~{Liao}, O.~{Kosut}, L.~{Sankar}, and F.~P. {Calmon}, ``Tunable measures for
  information leakage and applications to privacy-utility tradeoffs,''
  \emph{IEEE Transactions on Information Theory}, vol.~65, no.~12, pp.
  8043--8066, 2019.

\bibitem{SypherdDSK19}
T.~{Sypherd}, M.~{Diaz}, L.~{Sankar}, and P.~{Kairouz}, ``A tunable loss
  function for binary classification,'' in \emph{IEEE International Symposium
  on Information Theory}, 2019, pp. 2479--2483.

\bibitem{Calmonetal13}
F.~P. Calmon, M.~Varia, M.~Médard, M.~M. Christiansen, K.~R. Duffy, and
  S.~Tessaro, ``Bounds on inference,'' in \emph{Allerton Conference on
  Communication, Control, and Computing (Allerton)}, 2013, pp. 567--574.

\bibitem{Lig18}
C.~T. Li and A.~El~Gamal, ``Maximal correlation secrecy,'' \emph{IEEE
  Transactions on Information Theory}, vol.~64, no.~5, pp. 3916--3926, 2018.

\bibitem{Shannon49}
C.~E. Shannon, ``Communication theory of secrecy systems,'' \emph{The Bell
  System Technical Journal}, vol.~28, no.~4, pp. 656--715, 1949.

\bibitem{PrabhakaranR07}
V.~Prabhakaran and K.~Ramchandran, ``On secure distributed source coding,'' in
  \emph{2007 IEEE Information Theory Workshop}, 2007, pp. 442--447.

\bibitem{TyagiNG11}
H.~Tyagi, P.~Narayan, and P.~Gupta, ``When is a function securely computable?''
  \emph{IEEE Transactions on Information Theory}, vol.~57, no.~10, pp.
  6337--6350, 2011.

\bibitem{SankarRP13}
L.~Sankar, S.~R. Rajagopalan, and H.~V. Poor, ``Utility-privacy tradeoffs in
  databases: An information-theoretic approach,'' \emph{IEEE Transactions on
  Information Forensics and Security}, vol.~8, no.~6, pp. 838--852, 2013.

\bibitem{Wangetal16}
W.~Wang, L.~Ying, and J.~Zhang, ``On the relation between identifiability,
  differential privacy, and mutual-information privacy,'' \emph{IEEE
  Transactions on Information Theory}, vol.~62, no.~9, pp. 5018--5029, 2016.

\bibitem{Kameletal19}
S.~Kamel, M.~Sarkiss, M.~Wigger, and G.~Rekaya-Ben~Othman, ``Secrecy
  capacity-memory tradeoff of erasure broadcast channels,'' \emph{IEEE
  Transactions on Information Theory}, vol.~65, no.~8, pp. 5094--5124, 2019.

\bibitem{Asoodehetal17}
S.~Asoodeh, M.~Diaz, F.~Alajaji, and T.~Linder, ``Privacy-aware guessing
  efficiency,'' in \emph{IEEE International Symposium on Information Theory},
  2017, pp. 754--758.

\bibitem{LiOG19}
Z.~Li, T.~J. Oechtering, and D.~G\"{u}nd\"{u}z, ``Privacy against a hypothesis
  testing adversary,'' \emph{IEEE Transactions on Information Forensics and
  Security}, vol.~14, no.~6, pp. 1567--1581, 2019.

\bibitem{CalmionF12}
F.~du~Pin~Calmon and N.~Fawaz, ``Privacy against statistical inference,'' in
  \emph{2012 50th Annual Allerton Conference on Communication, Control, and
  Computing (Allerton)}, 2012, pp. 1401--1408.

\bibitem{JiangSTL21}
B.~Jiang, M.~Seif, R.~Tandon, and M.~Li, ``Context-aware local information
  privacy,'' \emph{IEEE Transactions on Information Forensics and Security},
  vol.~16, pp. 3694--3708, 2021.

\bibitem{CuffY16}
P.~Cuff and L.~Yu, ``Differential privacy as a mutual information constraint,''
  in \emph{Proceedings of the 2016 ACM SIGSAC Conference on Computer and
  Communications Security}, 2016, p. 43–54.

\bibitem{McSherry10}
F.~McSherry, ``Privacy integrated queries: An extensible platform for
  privacy-preserving data analysis,'' \emph{Commun. ACM}, vol.~53, no.~9, p.
  89–97, 2010.

\bibitem{ComasD13}
J.~Soria-Comas and J.~Domingo-Ferrert, ``Differential privacy via t-closeness
  in data publishing,'' in \emph{Annual Conference on Privacy, Security and
  Trust}, 2013, pp. 27--35.

\bibitem{dwork2006our}
C.~Dwork, K.~Kenthapadi, F.~McSherry, I.~Mironov, and M.~Naor, ``Our data,
  ourselves: Privacy via distributed noise generation,'' in \emph{Annual
  international conference on the theory and applications of cryptographic
  techniques}, 2006, pp. 486--503.

\bibitem{mironov2017renyi}
I.~Mironov, ``R{\'e}nyi differential privacy,'' in \emph{IEEE computer security
  foundations symposium (CSF)}, 2017, pp. 263--275.

\bibitem{wagner2018technical}
I.~Wagner and D.~Eckhoff, ``Technical privacy metrics: a systematic survey,''
  \emph{ACM Computing Surveys (CSUR)}, vol.~51, no.~3, pp. 1--38, 2018.

\bibitem{Blochsurvey}
M.~Bloch, O.~Günlü, A.~Yener, F.~Oggier, H.~V. Poor, L.~Sankar, and R.~F.
  Schaefer, ``An overview of information-theoretic security and privacy:
  Metrics, limits and applications,'' \emph{IEEE Journal on Selected Areas in
  Information Theory}, vol.~2, no.~1, pp. 5--22, 2021.

\bibitem{hsuetal21}
H.~Hsu, N.~Martinez, M.~Bertran, G.~Sapiro, and F.~P. Calmon, ``A survey on
  statistical, information, and estimation—theoretic views on privacy,''
  \emph{IEEE BITS the Information Theory Magazine}, vol.~1, no.~1, pp. 45--56,
  2021.

\bibitem{sibson1969information}
R.~Sibson, ``Information radius,'' \emph{Zeitschrift f{\"u}r
  Wahrscheinlichkeitstheorie und verwandte Gebiete}, vol.~14, no.~2, pp.
  149--160, 1969.

\bibitem{KurriKS21}
G.~R. Kurri, O.~Kosut, and L.~Sankar, ``Evaluating multiple guesses by an
  adversary via a tunable loss function,'' in \emph{IEEE International
  Symposium on Information Theory (ISIT)}, 2021, pp. 2002--2007.

\bibitem{KurriKS22}
------, ``A variational formula for infinity-{R}\'{e}nyi divergence with
  applications to information leakage,'' in \emph{IEEE International Symposium
  on Information Theory (ISIT)}, 2022, pp. 2493--2498.

\bibitem{FREUND1997119}
Y.~Freund and R.~E. Schapire, ``A decision-theoretic generalization of on-line
  learning and an application to boosting,'' \emph{Journal of Computer and
  System Sciences}, vol.~55, no.~1, pp. 119 -- 139, 1997.

\bibitem{MerhavF1998}
N.~{Merhav} and M.~{Feder}, ``Universal prediction,'' \emph{IEEE Transactions
  on Information Theory}, vol.~44, no.~6, pp. 2124--2147, 1998.

\bibitem{NguyenWJ09}
X.~Nguyen, M.~J. Wainwright, and M.~I. Jordan, ``On surrogate loss functions
  and f-divergences,'' \emph{The Annals of Statistics}, vol.~37, no.~2, pp.
  876--904, 2009.

\bibitem{CourtadeW11}
T.~A. {Courtade} and R.~D. {Wesel}, ``Multiterminal source coding with an
  entropy-based distortion measure,'' in \emph{IEEE International Symposium on
  Information Theory}, 2011, pp. 2040--2044.

\bibitem{BartlettJM06}
P.~L. Bartlett, M.~I. Jordan, and J.~D. Mcauliffe, ``Convexity, classification,
  and risk bounds,'' \emph{Journal of the American Statistical Association},
  vol. 101, no. 473, pp. 138--156, 2006.

\bibitem{BREGMAN1967200}
L.~Bregman, ``The relaxation method of finding the common point of convex sets
  and its application to the solution of problems in convex programming,''
  \emph{USSR Computational Mathematics and Mathematical Physics}, vol.~7,
  no.~3, pp. 200--217, 1967.

\bibitem{Matousek}
J.~Matousek and B.~Gartner, \emph{Understanding and Using Linear
  Programming}.\hskip 1em plus 0.5em minus 0.4em\relax Springer, 2007.

\bibitem{HuleihelSM17}
W.~Huleihel, S.~Salamatian, and M.~Médard, ``Guessing with limited memory,''
  in \emph{2017 IEEE International Symposium on Information Theory}, 2017, pp.
  2253--2257.

\bibitem{atar2015robust}
R.~Atar, K.~Chowdhary, and P.~Dupuis, ``Robust bounds on risk-sensitive
  functionals via r{\'e}nyi divergence,'' \emph{SIAM/ASA Journal on Uncertainty
  Quantification}, vol.~3, no.~1, pp. 18--33, 2015.

\bibitem{boyd_vandenberghe_2004}
S.~Boyd and L.~Vandenberghe, \emph{Convex Optimization}.\hskip 1em plus 0.5em
  minus 0.4em\relax Cambridge University Press, 2004.

\bibitem{arimoto1977information}
S.~Arimoto, ``Information measures and capacity of order $\alpha$ for discrete
  memoryless channels,'' \emph{Topics in information theory}, 1977.

\end{thebibliography}

\begin{IEEEbiographynophoto}
{Gowtham R. Kurri} (Member, IEEE) graduated from the International Institute of Information Technology (IIIT), Hyderabad, India, with a B.\ Tech.\ degree in Electronics and Communication Engineering, in 2011. He received his M.Sc. and Ph.D. degrees from the Tata Institute of Fundamental Research, Mumbai, India in 2020. From 2020-2023, he was a Post-Doctoral Researcher at the School of Electrical, Computer and Energy Engineering at Arizona State University. Since February 2023, he has been an Assistant Professor with IIIT Hyderabad, where he is affiliated to the Signal Processing and Communications Research Centre. His research interests are in information theory and statistical machine learning.

From 2011-2012, he worked as an Associate Engineer at Qualcomm India Private Limited, Hyderabad, India. From July to October, 2019, he was a Research Intern in the Blockchain Technology Group at IBM Research, Bangalore, India. 
\end{IEEEbiographynophoto}

\begin{IEEEbiographynophoto}{Lalitha Sankar} (S'02--M'07--SM'13) received the B. Tech. degree from the Indian Institute of Technology, Bombay, the M.S. degree from the University of Maryland, and the Ph.D. degree from Rutgers University. She is currently a Professor in the School of Electrical, Computer, and Energy Engineering at Arizona State University. Her research interests include applying information theory and data science to study reliable, responsible, and privacy-protected machine learning as well as cyber security and resilience in critical infrastructure networks.
She received the National Science Foundation CAREER Award in 2014, the IEEE Globecom 2011 Best Paper Award for her work on privacy of side-information in multi-user data systems, and the Academic Excellence award from Rutgers in 2008.
\end{IEEEbiographynophoto}

\begin{IEEEbiographynophoto}{Oliver Kosut} (S'06--M'10--SM'22)
     received B.S. degrees in electrical engineering and mathematics from the Massachusetts Institute of Technology, Cambridge, MA, USA in 2004, and the Ph.D. degree in electrical and computer engineering from Cornell University, Ithaca, NY, USA in 2010.

Since 2012, he has been a faculty member in the School of Electrical, Computer and Energy Engineering at Arizona State University, Tempe, AZ, USA, where he is an Associate Professor. Previously, he was a Postdoctoral Research Associate at MIT from 2010 to 2012. His research interests include information theory---particularly with applications to security and machine learning---and power systems.

Prof.~Kosut received the NSF CAREER award in 2015. He is an associate editor for the \emph{IEEE Transactions on Information Forensics and Security}. He is an IEEE Information Theory Society Distinguished Lecturer, 2023--2024.
\end{IEEEbiographynophoto}
\end{document}